\newtheorem{thm}{Theorem}[section]
\newtheorem{lem}[thm]{Lemma}
\newtheorem{cor}[thm]{Corollary}
\newtheorem{prop}[thm]{Proposition}
\newtheorem{rmk}[thm]{Remark}
\newtheorem*{defi}{Definition}
\theoremstyle{example}
\newtheorem*{example}{Example}
\newcommand{\Hil}{\mathcal{H}}
\newcommand{\slot}{{\,\cdot \,}}
\newcommand{\Spin}{\mathop{\mathsf{Spin}}}
\newcommand{\SO}{\mathop{\mathsf{SO}}}
\newcommand{\PSL}{\mathop{\mathsf{PSL}}}
\newcommand{\SU}{\mathop{\mathsf{SU}}}
\newcommand{\PSU}{\mathop{\mathsf{PSU}}}
\newcommand{\U}{{\mathsf{U}}}
\newcommand{\Mob}{\mathsf{M\ddot ob}}
\newcommand{\Mobc}{\widetilde{\mathsf{M\ddot ob}}}
\newcommand{\Sc}{\mathbb{S}^1}
\renewcommand{\O}{\mathcal{O}}
\newcommand{\cO}{\mathcal{O}}
\newcommand{\cB}{\mathcal{B}}
\newcommand{\cU}{\mathcal{U}}
\newcommand{\cJ}{\mathcal{J}}
\newcommand{\K}{\mathcal{K}}
\newcommand{\cE}{\mathcal{E}}
\newcommand{\cC}{\mathcal{C}}
\newcommand{\cS}{\mathcal{S}}
\newcommand{\cK}{\mathcal{K}}
\newcommand{\A}{\mathcal{A}}
\newcommand{\cI}{\mathcal{I}}
\newcommand{\Aext}{\mathcal{A}_{\mathrm{ext}}}
\newcommand{\M}{\mathcal{M}}
\newcommand{\RR}{\mathbb{R}}
\newcommand{\TT}{\mathbb{T}}
\newcommand{\RRc}{\overline{\mathbb{R}}}
\newcommand{\CC}{\mathbb{C}}
\newcommand{\ZZ}{\mathbb{Z}}
\newcommand{\NN}{\mathbb{N}}
\DeclareMathOperator{\End}{End}
\DeclareMathOperator{\Dom}{Dom}
\DeclareMathOperator{\Hom}{Hom}
\DeclareMathOperator{\Vir}{Vir}
\DeclareMathOperator{\Std}{Std}
\DeclareMathOperator{\Ad}{Ad}
\DeclareMathOperator{\im}{Im}
\DeclareMathOperator{\id}{id}
\DeclareMathOperator{\tr}{tr}
\DeclareMathOperator{\B}{B}
\DeclareMathOperator{\supp}{supp}
\DeclareMathOperator{\sign}{sign}
\DeclareMathOperator{\Aut}{Aut}
\newcommand{\punkt}{\,\mathrm{.}}
\newcommand{\komma}{\,\mathrm{,}}
\newcommand{\e}{\mathrm{e}}
\newcommand{\dd}{\mathrm{d}}
\newcommand{\ima}{\mathrm{i}}
\renewcommand{\Im}{\im}
\newcommand{\Lp}{\mathsf{L}}
\newcommand{\Lpc}{\mathcal{L}}
\newcommand{\Hol}{\mathrm{H\ddot ol}}
\newcommand*\longbijects{%
    \ensuremath{\lhook\joinrel\relbar\joinrel\twoheadrightarrow}
}
\newcommand{\eM}{M}
\newcommand{\emm}{m}
\DeclareRobustCommand{\eg}{e.g.\@\xspace}
\DeclareRobustCommand{\cf}{cf.\@\xspace}
\DeclareRobustCommand{\ie}{i.e.\@\xspace}
\DeclareRobustCommand{\etc}{%
    \@ifnextchar{.}%
        {etc}%
        {etc.\@\xspace}%
}
\newcommand{\BQFTnet}{boundary net\@\xspace}
\newcommand{\BQFTnets}{boundary nets\@\xspace}
\newcommand{\CFTnet}{conformal net\@\xspace}
\newcommand{\CFTnets}{conformal nets\@\xspace}
\newcommand{\Uonenet}{$\U(1)$-current net\@\xspace}
\newcommand{\Cstar}{$C^\ast$\@\xspace}
\newcommand{\three}{III\@\xspace}
\newcommand{\threeone}{III$_1$\@\xspace}
\begin{document}
\date{\today}
\dateposted{\today}
\title
[\MakeUppercase{Models in BQFT Associated with Lattices
and Loop Group Models}]
{\Large \bf Models in Boundary Quantum Field Theory Associated with Lattices
and Loop Group Models}
\author[Marcel Bischoff]{{\sc Marcel Bischoff}\\
Dipartimento di Matematica, Università di Roma ``Tor Vergata''\\
Via della Ricerca Scientifica, 1, I-00133 Roma, Italy\\
E-mail address: {\tt bischoff@mat.uniroma2.it}
}
\thanks{Supported in part by the ERC Advanced Grant 227458
OACFT ``Operator Algebras and Conformal Field Theory''.
}

\begin{abstract}
    In this article we give new examples of models in boundary quantum field 
    theory, i.e. local time-translation covariant nets of von Neumann algebras,     
    using a recent construction of Longo and Witten, which uses a 
    local conformal net $\A$ on the real line together with an element of a 
    unitary semigroup associated with $\A$.
    Namely, we compute elements of this semigroup coming from Hölder continuous
    symmetric inner functions for a family of (completely rational) conformal 
    nets which can be obtained by starting with nets of real subspaces, 
    passing to its second quantization nets and taking local extensions of the 
    former. 
    This family is precisely the family of conformal nets associated with 
    lattices,
    which as we show contains as a special case the level 1 loop group nets of 
    simply connected, simply laced groups.
    Further examples come from the loop group net of $\Spin(n)$ at level 2 
    using the orbifold construction.
\end{abstract}

{
\makeatletter
\let\uppercasenonmath\@gobble
\let\MakeUppercase\relax
\makeatother
\maketitle}

\setcounter{tocdepth}{1}
\tableofcontents

\section{Introduction}\label{sec:Intro}
In the operator algebraic approach to quantum field theory (QFT) one studies 
nets of operator algebras (\eg von Neumann algebras) that assign to a 
space-time region the algebra of observables localized in it. 
These nets are asked to fulfill certain axioms coming from basic physical 
principles; we mention as examples the locality principle---which asks that 
the algebras assigned to causally disjoint regions should commute 
(local nets)---and the covariant assignment with respect to some 
``symmetry group'' of the space-time 
(for a general introduction on this subject we refer to the textbook \cite{Ha}).

In this approach also conformal quantum field theory (CQFT) 
has been treated by considering nets on two dimensional Minkowski space
and its chiral parts, which can be regarded as nets on the real line 
or as nets on the circle.
Besides CQFT on the full Minkowski space 
also boundary conformal quantum field theory (BCFT) on
the Minkowski half-plane $x>0$ is described in the algebraic approach.
More precisly, in the paper \cite{LoRe2004} Longo and Rehren associate with a 
local conformal net $\A$ 
on the real line a local conformal boundary net $\A_+$ on the Minkowski half-plane
and obtain more general boundary nets which are extending $\A_+$.

Lately, in \cite{LoWi2010} Longo and Witten have given a framework to construct 
models in boundary quantum field theory (BQFT) by investigating into local nets
on the Minkowski half-plane, which are in general only time-translation 
covariant and can be considered as a deformation of the net $\A_+$.
Specifically, the construction starts with a conformal net $\A$ on the real 
line together with an element $V$ of a unitary semigroup $\cE(\A)$ 
associated with $\A$  to construct a net on Minkowski half-plane, where the 
special case $V=1$ is the net $\A_+$. 
The search for new models is basically given by the construction of elements of 
the semigroup $\cE(\A)$ for a given conformal net $\A$.
Further in \cite{LoRe2011} Longo and Rehren investigate in BQFT on the Lorentz
hyperboloid using a similar semigroup, nonetheless we will concentrate in this 
paper on BQFT on Minkowski half-plane.
In this framework, an interesting class of conformal nets to 
consider are the completely rational 
conformal nets \cite{KaLoMg2001} just having finite number of sectors
(equivalence classes of irreducible representations) with each one having only
finite statistics and their representation theory giving rise to modular tensor
categories.

In this work the main goal is to construct elements of the semigroup 
$\cE(\A)$ for the loop group models (which are ``in general'' expected to 
fulfill complete rationality) and hence give new models of BQFT.
The loop group models are conformal nets coming from (projective) 
positive energy representations of loop groups.
The cocycles of the projective positive energy representation of loop 
groups are classified by their level 
and the vacuum representation of each level yields a conformal net. 
In the case of simply laced Lie groups the level 1 representation is the basic 
representation and all higher level representations are contained in tensor 
products \cite{GaFr1993}; as a result the higher level loop group models are
contained as subnets in the tensor product of the level 1 loop group net. 
So the first important step is to construct semigroup elements for the
level 1 loop group net. 
This can be obtained as a subnet of a free Fermionic net or as an extension of 
a free Bosonic net (as we will show here); by free nets we mean second
quantization nets using the CCR or CAR algebra of a net of real subspaces 
in the Bose and Fermi case, respectively.
For these free nets the semigroup elements being 
second quantization unitaries are characterized in  \cite{LoWi2010}.

We look into extensions of free Bosons, namely the family of \CFTnets
associated with lattices and show that this family indeed contains the level 1 
loop group models of simply laced groups as a special case. 
This can be regarded as an algebraic version of the Frenkel--Kac or 
Frenkel--Kac--Segal construction, which says that the lattice 
vertex operator algebras of 
simply laced root lattices correspond to the level 1 Kac--Moody vertex 
operator algebras (cf. Theorem 5.6 \cite{Ka1998}).
Furthermore this family consists only of completely ration conformal nets 
as shown by Dong and Xu in \cite{DoXu2006}.

In Section \ref{sec:standard} we give some basic preliminaries on 
standard subspaces, 
its associated modular theory and semigroups of standard pairs. 

In Section \ref{sec:Ch} we review the construction of the conformal nets under
investigation starting with a net of standard subspaces in 
the spirit of \cite{BGL2002,Lo} and obtaining a net of von Neumann algebras by
second quantization, which describes 
        Abelian currents on the circle. 
Their local extensions by even lattices are shown 
to be the conformal nets associated with lattices constructed in
\cite{DoXu2006} using positive energy representations of 
the loop group of the torus related to the lattice. 
We show that this family indeed contains
as a special case the loop group nets at level 1 of simply laced groups.

In Section \ref{sec:BQFT} we bring about a family of local nets on the Minkowski 
half-plane associated with each step of the former construction of chiral 
models.
The important step is the extension of the semigroup elements 
obtained by second quantization to the local extension by a lattice. We give 
criterion when such elements extend and also 
look into restriction to subnets. A further family of examples
of semigroup elements and therefore models in BQFT are calculated for 
the loop group nets of $\Spin(n)$ at level 2 using the orbifold construction.
\section{Preliminaries on Standard Subspaces}\label{sec:standard}
In this section we give some basic preliminaries on 
standard subspaces, 
its associated modular theory and semigroups of standard pairs.
\subsection{Standard Subspaces} 
We repeat some basic facts (for details see \cite{Lo}) on standard 
subspaces.
Let $\Hil\equiv (\Hil,(\slot,\slot))$ be a Hilbert space and let $H\subset \Hil$
be a real subspace. 
We denote by $H'=\{x\in \Hil: \im(x,H)=0\}$ the 
    \emph{symplectic complement}, 
which is closed. In particular it is $H'' = \overline H$.
A closed real subspace $H$ is called 
    \emph{cyclic} 
if $\overline {H + \ima H} = \Hil$ and is called 
    \emph{separating} 
if $H \cap \ima H = \{0\}$. 
So a closed real subspace $H$ is separating or cyclic if and only if its
symplectic complement $H'$ is cyclic or separating, respectively. 
A cyclic and separating subspace $H$ is called \emph{standard}; clearly $H$ is 
standard if and only if $H'$ is standard.
We denote the set of all standard subspaces of $\Hil$ by $\Std(\Hil)$.
To a standard subspace we relate a pair $(J_H,\Delta_H)$, where 
$(\Delta_H^{\ima t})_{t\in\RR}$ is a unitary one-parameter group called the
    \emph{modular unitaries} 
and an antiunitary involution $J_H$ called 
    \emph{modular conjugation}. 
Both are defined by the polar decomposition of the densly defined, closed, 
antilinear involutive (i.e. $S_H^2 \subset \id_\Hil$) operator 
$S_H=J\Delta_H^{1/2}$ with domain $H+\ima H$ defined by
$x+\ima y \longmapsto x-\ima y$ for $x,y \in H$. 
A (simplier) real subspace version of the Tomita-Takesaki theorem gives:
\begin{align*}   
    JH&=H', \qquad \Delta^{\ima t}H = H  &(t\in\RR).
\end{align*} 
We note that there is a useful bijective correspondence between $\Std(\Hil)$
and the set of densely defined, closed, antilinear involutions $S$ on $\Hil$, 
given by the map $H \longmapsto S_H$ as above, with inverse map associating 
with such an involution $S$ the standard subspace 
$H_S = \{ x\in \Dom(S) : Sx=x\}=\ker(1-S)$.

\subsection{Semigroup Associated with Standard Pairs}\label{seq:BQFTStdSemigroup}
\begin{defi} Let $H$ be a standard subspace of a Hilbert space 
    $\Hil$ and let us assume that there exists a one-parameter group 
    $T(t)=\e^{\ima t P}$ on $\Hil$ such that:
    \begin{itemize}
        \item$T(t) H \subset H$ for all $t\geq 0$,
        \item$P>0$.
    \end{itemize}
    Then we call the pair $(H,T)$ a \emph{standard pair}. It is called 
    \emph{non-degenerated} if the kernel of $P$ is $\{0\}$.
\end{defi}
A one-particle version of Borchers Theorem with some implications holds:
\begin{thm}[\cite{LoWi2010}*{Theorem 2.2}] Let $(H,T)$ be a non-degenerate standard pair. 
    \begin{enumerate}
    \item Then for all $t,s\in \RR$ holds:
    \begin{align*}
        \Delta^{\ima s} T(t) \Delta^{-\ima s} &= T(\e^{2\pi s} t),
        &
        JT(t)J &=T(-t),
    \end{align*}
    where $\Delta^{\ima t}$ and $J$ are the modular unitaries and conjugation, respectively, associated with the standard
    space $H$, i.e. $JH=H'$ and $\Delta^{\ima t} H = H$.
    \item 
    $(H,T)$ yields a unitary positive energy representation of the 
    translation-dilation group of $\RR$ also called 
    the $ax+b$ group, by 
    associating with $x\longmapsto \e^{-2\pi s}x +t$ the unitary element 
    $T(t)\Delta^{\ima s}$.
    \item There is a unique irreducible standard pair and each standard pair is a multiple of this unique standard pair.
\end{enumerate}
\end{thm}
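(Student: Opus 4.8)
The plan is to recognize this statement as the one-particle (first-quantized) version of Borchers' commutation theorem, followed by the representation theory of the $ax+b$ group. The only data available are the semigroup invariance $T(t)H\subseteq H$ for $t\geq 0$, the positivity $P>0$, and the Tomita--Takesaki relations $JH=H'$, $\Delta^{\ima t}H=H$. The whole difficulty is concentrated in part (1); parts (2) and (3) are then essentially representation-theoretic bookkeeping.

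For part (1), I would first rephrase the invariance in terms of the Tomita operator: for $\xi\in H$ and $t\geq 0$ we have $T(t)\xi\in H$, hence $S_H T(t)\xi=T(t)\xi$, that is $\Delta^{1/2}T(t)\xi=JT(t)\xi$, since $S_H=J\Delta^{1/2}$. The key is then to exploit two sources of analyticity. On the one hand, $P>0$ makes $z\mapsto T(z)\xi=\e^{\ima z P}\xi$ a bounded, continuous $\Hil$-valued map on the closed upper half-plane, holomorphic in its interior (because $\e^{\ima zP}$ is a contraction for $\im z>0$). On the other hand, the modular structure makes $s\mapsto\Delta^{\ima s}\eta$ admit a holomorphic extension into the strip $\{-\tfrac12<\im s<0\}$ for $\eta$ in the natural KMS domain. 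Combining these, I would form, for $\xi,\eta$ in suitable cores and fixed $t\geq 0$, a scalar function of a single complex variable $s$ interpolating $(\eta,\Delta^{\ima s}T(t)\Delta^{-\ima s}\xi)$ and $(\eta,T(\e^{2\pi s}t)\xi)$; both are bounded and holomorphic on the strip $0<\im s<\tfrac12$, and I would match their boundary values on the two edges---at $\im s=0$ trivially, and at $\im s=\tfrac12$ precisely through the invariance relation $S_H T(t)\xi=T(t)\xi$, where $\Delta^{\ima s}$ connects to $J\Delta^{1/2}$. A Phragm\'en--Lindel\"of / three-lines uniqueness argument then forces the two functions to coincide, giving $\Delta^{\ima s}T(t)\Delta^{-\ima s}=T(\e^{2\pi s}t)$ on a dense domain, hence as operators for all $s,t\in\RR$. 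The second relation $JT(t)J=T(-t)$---equivalently $JPJ=P$, since $J$ is antiunitary---then follows by continuing the first relation to the edge $s=-\tfrac{\ima}{2}$, where $\e^{2\pi s}=-1$ and $\Delta^{\ima s}$ becomes $\Delta^{1/2}$, and combining with $J\Delta^{\ima t}J=\Delta^{\ima t}$.

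For part (2), I would set $U(t,s):=T(t)\Delta^{\ima s}$ and verify directly, using the commutation relation from part (1), that $U(t_1,s_1)U(t_2,s_2)=U(t_1+\e^{\pm2\pi s_1}t_2,\,s_1+s_2)$ reproduces the composition law of the translation--dilation ($ax+b$) group acting by $x\mapsto\e^{-2\pi s}x+t$; positivity of the energy is exactly $P>0$. For part (3), I would invoke Mackey's analysis of the $ax+b$ group as the semidirect product $\RR\rtimes\RR^\times_+$: the dual action on $\hat{\RR}\cong\RR$ has the three orbits $\{0\}$, $\RR_{>0}$, $\RR_{<0}$, and the non-degeneracy condition $P>0$ selects the single orbit $\RR_{>0}$ with trivial stabilizer, yielding a unique irreducible positive-energy representation of which every positive-energy representation is a multiple. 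Since the modular data $(J,\Delta)$ and the standard subspace $H=\ker(1-J\Delta^{1/2})=\ker(1-S_H)$ are reconstructed canonically from the representation (together with the reflection $J$, whose relations were fixed in part (1)), a non-degenerate standard pair is determined up to unitary equivalence and multiplicity, which is exactly the assertion that every standard pair is a multiple of a unique irreducible one.

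The main obstacle is the analytic continuation in part (1): converting the mere semigroup inclusion $T(t)H\subseteq H$ together with positivity of $P$ into the exact geometric commutation relation requires careful control of the domains of holomorphy and the Phragm\'en--Lindel\"of step, whereas the passage to the $ax+b$ representation and its classification are comparatively formal.
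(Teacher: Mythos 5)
First, note that the paper itself offers no proof of this theorem: it is quoted verbatim from Longo--Witten \cite{LoWi2010}*{Theorem 2.2} (resting ultimately on the one-particle version of Borchers' commutation theorem), so the comparison below is with the argument in that reference rather than with anything in the present paper. Your identification of the ingredients is correct: the analytic extension of $z\mapsto T(z)\xi$ to the closed upper half-plane from $P>0$, the modular analyticity of $z\mapsto\Delta^{\ima z}\xi$ on a strip of width $1/2$ for $\xi\in H\subset\Dom(\Delta^{1/2})$, the role of the semigroup inclusion $T(t)H\subseteq H$, a Phragm\'en--Lindel\"of-type argument for part (1), the group-law computation for part (2), and Mackey theory for the $ax+b$ group in part (3). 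Parts (2) and (3) are acceptable as sketches (for (3) one must also check that the antiunitary $J$, hence $H=\ker(1-J\Delta^{1/2})$, is recovered from the representation up to a harmless global phase, which you gesture at).

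However, the central step of part (1) contains a genuine gap: it is circular as described. You propose to compare the two holomorphic functions $f(s)=(\eta,\Delta^{\ima s}T(t)\Delta^{-\ima s}\xi)$ and $g(s)=(\eta,T(\e^{2\pi s}t)\xi)$ on the strip $0<\Im s<1/2$ and to ``match their boundary values \dots at $\Im s=0$ trivially.'' But the identity $f(s)=g(s)$ on the real line \emph{is} the statement to be proven; if it held trivially there, no analyticity would be needed at all (two functions holomorphic on the strip and continuous up to the boundary that agree on an interval of the edge agree identically). The matching at $\Im s=1/2$ fares no better: computing the boundary values gives $f(s+\ima/2)=(J\Delta^{\ima s}\eta,\,T(t)J\Delta^{\ima s}\xi)$ versus $g(s+\ima/2)=(\eta,T(-\e^{2\pi s}t)\xi)$, and equating these requires $JT(t)J=T(-t)$ together with the first commutation relation --- i.e.\ the full conclusion. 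The relation $S_HT(t)\xi=T(t)\xi$, i.e.\ $\Delta^{1/2}T(t)\xi=JT(t)\xi$, applies $\Delta^{1/2}$ \emph{after} $T(t)$ and does not produce the needed boundary identification, in which $\Delta^{1/2}$ has been moved past $T(t)$ onto $\xi$. The actual mechanism (Borchers/Florig) is different: one works with a \emph{single} function, e.g.\ $h(z)=(\Delta^{-\ima\bar z}\eta,\,T(\e^{2\pi z}t)\Delta^{-\ima z}\xi)$ for $\xi\in H$, $\eta\in H'$, $t\geq0$, bounded and holomorphic on $0<\Im z<1/2$, and shows that its boundary values on \emph{both} edges are \emph{real}: on $\Im z=0$ because $T(\e^{2\pi s}t)\Delta^{-\ima s}\xi\in H$ and $\Delta^{-\ima s}\eta\in H'$ force $\Im h(s)=0$, and on $\Im z=1/2$ because $\Delta^{\mp1/2}$ turns $\xi,\eta$ into $J\xi\in H'$, $J\eta\in H$ while $\e^{2\pi(s+\ima/2)}t=-\e^{2\pi s}t\leq0$ and $T(-u)H'\subseteq H'$ for $u\geq0$. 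Iterated Schwarz reflection then extends $h$ to a bounded entire function, Liouville makes it constant, and evaluating at $z=s$ and $z=\ima/2$ yields both relations of part (1). So the semigroup hypothesis enters through the \emph{reality} of boundary values of one function, not through the agreement of two functions on an edge; as written, your argument assumes what it sets out to prove.
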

\begin{defi} Let $(H,T)$ be a standard pair on $\Hil$. 
    The semigroup of unitaries $V$ of $\Hil$ commuting with $T$ such that 
    $VH\subset H$ is denoted by $\cE(H,T) = \cE(H)$.
\end{defi}
The elements of $\cE(H)$ are characterized in \cite{LoWi2010}. 
We first state the case of the irreducible standard pair,
where the semigroup $\cE(H_0)$ can be identified with a semigroup 
of certain ``symmetric inner functions''. 
\begin{defi}
    We denote by $\cS$ the set of all complex Borel functions 
    $\varphi:\RR \longrightarrow \CC$ which are boundary 
    values of a bounded analytic function 
    on $\RR+\ima \RR_+$, which are \emph{symmetric}, i.e.
    $\overline{\varphi(p)} = \varphi(-p)$ and \emph{inner}, i.e. 
    $|\varphi(p)|=1$ for almost all $p\geq 0$. 
\end{defi}
\begin{thm}[{\cite[Corollary 2.4]{LoWi2010}}]\label{thm:stdir} 
     Let $(H_0,T_0)$ be the unique irreducible standard pair then $V\in\cE(H)$ 
     if and only if $V=\varphi(P)$  for some $\varphi\in\cS$. 
\end{thm}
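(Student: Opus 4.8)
The plan is to characterize $\cE(H_0)$ by reducing everything to a single irreducible standard pair and exploiting the explicit form of the modular data. I would first fix a concrete realization of the unique irreducible standard pair $(H_0,T_0)$. Since $P>0$ and, by the uniqueness statement in the Borchers-type theorem above, the representation of the $ax+b$ group is determined, I can pass to the spectral representation of $P$: identify $\Hil$ with $\Lp^2(\RR_+,\dd p)$ so that $T_0(t)=\e^{\ima tP}$ becomes multiplication by $\e^{\ima tp}$, and $P$ becomes multiplication by $p$. In this picture the dilations $\Delta^{\ima s}$ act by the scaling $p\mapsto \e^{2\pi s}p$ dictated by the commutation relation $\Delta^{\ima s}T(t)\Delta^{-\ima s}=T(\e^{2\pi s}t)$, and the modular conjugation $J$ implements $p\mapsto -p$ together with complex conjugation, consistent with $JT(t)J=T(-t)$.

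The next step is to translate the three defining conditions on $V$ into function-theoretic statements. A unitary $V$ commuting with $T_0(t)$ for all $t$ commutes with the full one-parameter group generated by $P$, hence by the double-commutant/functional-calculus argument $V$ lies in the abelian von Neumann algebra generated by $P$; thus $V=\varphi(P)$ for some Borel function $\varphi:\RR\to\CC$ with $|\varphi(p)|=1$ a.e. (unitarity). So the content is entirely in the condition $VH_0\subset H_0$. Here I would use the $S$-operator description recalled above: $H_0=\ker(1-S_{H_0})$ with $S_{H_0}=J\Delta^{1/2}$. The inclusion $VH_0\subset H_0$ should be rephrased as a relation between $V$ and $S_{H_0}$; since $S_{H_0}$ is the closed antilinear involution fixing $H_0$, one expects $VH_0\subset H_0$ to be equivalent to the algebraic compatibility $S_{H_0}V=V S_{H_0}$ on the appropriate domain, i.e. $J\Delta^{1/2}\varphi(P)\subset\varphi(P)J\Delta^{1/2}$.

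I would then unwind this commutation relation using the explicit actions: $\Delta^{1/2}$ shifts the argument of $\varphi$ multiplicatively and $J$ conjugates and flips the sign, so that $J\varphi(P)J=\overline{\varphi(-P)}\,$ acting appropriately; combining these yields exactly the two conditions defining $\cS$. The symmetry condition $\overline{\varphi(p)}=\varphi(-p)$ comes from the interplay of $V$ with $J$ (the sign flip), and the boundedness and analyticity on the upper half-plane $\RR+\ima\RR_+$ come from the positivity $P>0$ together with the semigroup property $T_0(t)H_0\subset H_0$ for $t\ge 0$: the invariance of $H_0$ under the positive-spectrum semigroup forces $\varphi$ to extend to a bounded holomorphic function on the upper half-plane (this is the one-dimensional analogue of the fact that the generator of translations mapping $H_0$ into itself has positive spectrum). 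Conversely, given $\varphi\in\cS$, the boundary-value/analyticity properties guarantee $\varphi(P)H_0\subset H_0$, so the correspondence is a genuine equivalence.

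The main obstacle will be the careful derivation of the analyticity and boundedness of $\varphi$ on $\RR+\ima\RR_+$ from the single inclusion $VH_0\subset H_0$, rather than the bookkeeping with $J$ and $\Delta$. Concretely, one must show that a bounded multiplier $\varphi(P)$ leaves the real subspace $H_0$ invariant \emph{if and only if} $\varphi$ is a boundary value of a bounded analytic function on the upper half-plane; the ``hard direction'' is producing the holomorphic extension, which I expect to handle via a Paley--Wiener/Hardy-space argument adapted to the scaling covariance of $(H_0,T_0)$ — vectors of $H_0$ are characterized by a reflection-positivity/analytic-continuation property in the spectral variable, and multiplication by $\varphi$ preserves this class precisely when $\varphi$ is inner in the $H^\infty(\RR+\ima\RR_+)$ sense. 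Once this Hardy-space characterization is in place, the symmetry and unimodularity conditions fall out directly, completing both implications.
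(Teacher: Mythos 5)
The paper does not actually prove this statement---it is imported verbatim from \cite{LoWi2010}*{Corollary 2.4}---so there is no internal proof to compare against; your outline nevertheless follows essentially the same route as the proof in that reference: pass to the spectral representation of $P$ (where $V=\varphi(P)$ follows because $P$ has simple spectrum in the irreducible pair, so $\{T_0(t)\}''$ is maximal abelian), identify $H_0$ with a real Hardy-type subspace via Paley--Wiener, and read off unimodularity, the symmetry $\overline{\varphi(p)}=\varphi(-p)$ from preservation of the real structure, and analyticity from preservation of the subspace. The one step you defer---that a unimodular multiplier preserving $H_0$ must lie in $H^\infty(\RR+\ima\RR_+)$---is indeed the crux, and note that it requires the \emph{real-subspace} version of the Hardy multiplier theorem (one cannot invoke the complex $H^2$ multiplier theorem directly, since $\overline{H_0+\ima H_0}=\Hil$); this is precisely how the cited reference handles it, so your proposal is a faithful, if compressed, version of the standard argument.
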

In the reducible case the semigroup $\cE(H)$ consists of 
matrices of similar functions and the 
condition $|f(p)|=1$ is generalized to unitarity of the matrix. 
\begin{rmk} \label{rmk:HT}
    Let $(H,T)$ be a non-zero, non-degenerated standard pair on a Hilbert space
    $\Hil$. Then it can be decomposed as a direct sum of the unique irreducible
    standard pair. Let 
    \begin{align*} 
        \Hil &= \bigoplus_i \Hil_i & 
        H &= \bigoplus_i H_i& 
        T &= \bigoplus_i T_i
    \end{align*}
    be such a finite or infite decomposition, where each $(H_i,T_i)$ is a 
    standard pair in $\Hil_i$ and can be identified with the unique irreducible 
    standard pair $(H_0,T_0)$ with generator $P_0$.
\end{rmk}
\begin{defi}
    For $n \in \NN \cup \{\infty\}$ we denote by $\cS^{(n)}$ the set of 
    matrices $(\varphi_{hk})_{1\leq h,k\leq n}$ where 
    $\varphi_{hk}:\RR \longrightarrow \CC$ are complex Borel functions which 
    are boundary values of a bounded analytic function on $\RR+\ima\RR_+$
    such that $\varphi_{hk}(p)$ is a unitary matrix for almost all $p$,
    which is symmetric, i.e. $\overline{\varphi_{hk}(p)}=\varphi_{hk}(-p)$.
\end{defi}
\begin{thm}[{\cite[Theorem 2.6]{LoWi2010}}] \label{thm:stdred} 
    Let $H$ be like in Remark \ref{rmk:HT}. Then $V\in \cE(H)$ if and only if 
    it is a $n \times n$ matrix $(V_{hk})$
    with entries in $\B(\Hil)$ such that 
    $V_{hk}=\varphi_{hk}(P_0)$ for some $(\varphi_{hk})\in\cS^{(n)}$.    
\end{thm}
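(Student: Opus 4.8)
The plan is to reduce everything to the scalar (irreducible) characterisation of Theorem \ref{thm:stdir} by working in the matrix picture afforded by Remark \ref{rmk:HT}. Identifying $\Hil \cong \Hil_0 \otimes \CC^n$ so that $T = T_0 \otimes 1$, $P = P_0 \otimes 1$ and $H = \{\sum_k \xi_k \otimes e_k : \xi_k \in H_0\}$, I would first pin down the commutant of $T$. Since the irreducible standard pair realises $P_0$ with simple (multiplicity-free) spectrum $[0,\infty)$, the von Neumann algebra $\{P_0\}''$ is maximal abelian, i.e.\ $\{P_0\}' = \{P_0\}''$ consists precisely of the bounded functions of $P_0$; hence $\{T\}' = \{P\}' = \{P_0\}' \otimes \B(\CC^n)$, and any unitary $V$ commuting with $T$ is of the form $V = (V_{hk})$ with $V_{hk} = \varphi_{hk}(P_0)$ for bounded Borel functions $\varphi_{hk}$ on $[0,\infty)$. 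This already fixes the matrix shape; it remains to translate the two surviving requirements, $VH \subset H$ and unitarity of $V$, into the defining properties of $\cS^{(n)}$.

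The key observation is that, because $H$ is a direct sum, the condition $VH \subset H$ decouples entrywise. Indeed, for $\eta = (\eta_1,\dots,\eta_n) \in H$ one has $(V\eta)_h = \sum_k \varphi_{hk}(P_0)\eta_k$, and feeding in vectors supported in a single slot shows that $VH \subset H$ is equivalent to $\varphi_{hk}(P_0)H_0 \subset H_0$ for every pair $(h,k)$; the converse direction uses only that $H_0$ is a real linear subspace, so finite sums of its elements stay inside. I would then invoke the scalar Hardy-space analysis underlying Theorem \ref{thm:stdir}, in its contraction form, since the off-diagonal $\varphi_{hk}(P_0)$ need not be unitary, to conclude that $\varphi_{hk}(P_0)H_0 \subset H_0$ holds exactly when $\varphi_{hk}$ is the boundary value of a bounded analytic function on $\RR + \ima\RR_+$ satisfying the symmetry $\overline{\varphi_{hk}(p)} = \varphi_{hk}(-p)$.

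It then remains to encode unitarity. Using $\varphi_{hk}(P_0)^* = \overline{\varphi_{hk}}(P_0)$, a direct computation gives $(V^*V)_{hk} = \big(\sum_j \overline{\varphi_{jh}}\,\varphi_{jk}\big)(P_0)$ and the analogous formula for $VV^*$, so the operator identities $V^*V = VV^* = 1$ are equivalent to $(\varphi_{hk}(p))_{h,k}$ being a unitary $n\times n$ matrix for almost every $p \geq 0$. By the symmetry established above, the matrix at $-p$ is the entrywise conjugate of the matrix at $p$ and hence unitary as well, so unitarity holds for almost all $p \in \RR$ and $(\varphi_{hk}) \in \cS^{(n)}$. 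The converse is immediate: given $(\varphi_{hk}) \in \cS^{(n)}$, each $\varphi_{hk}(P_0)$ commutes with $T$ and maps $H_0$ into $H_0$, so $V = (\varphi_{hk}(P_0))$ commutes with $T$ and satisfies $VH \subset H$, while pointwise matrix unitarity of the symbol yields $V^*V = VV^* = 1$; thus $V \in \cE(H)$.

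The main obstacle I anticipate is the passage through the contraction version of the scalar result: Theorem \ref{thm:stdir} is phrased for unitaries, whereas the off-diagonal entries of $V$ are genuine contractions, so I must extract from the proof in \cite{LoWi2010} the statement that $\varphi(P_0)H_0 \subset H_0$ already forces $\varphi$ to be a symmetric bounded analytic function, with the inner condition simply dropped. Secondary care is needed for $n = \infty$, where the matrix $V$ must be read as a direct-integral / block operator in $L^\infty(\RR_+, \B(\ell^2))$ and where both $V^*V = 1$ and $VV^* = 1$ are genuinely needed to obtain unitarity, rather than mere isometry, of the symbol at almost every $p$.
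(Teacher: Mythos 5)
The paper does not actually prove this statement --- it is quoted verbatim from \cite{LoWi2010}*{Theorem 2.6} --- so the only ``proof'' here is the citation, and your argument is in essence a correct reconstruction of the one given in that reference: the same reduction via maximal abelianness of $\{P_0\}''$ (simple spectrum of $P_0$ in the irreducible pair) to matrices of bounded Borel functions of $P_0$, the entrywise decoupling of $VH\subset H$ using vectors supported in a single summand, the scalar Hardy-space characterization applied to each (contractive) entry, and pointwise a.e.\ unitarity of the symbol, with the correct caveat that for $n=\infty$ both $V^*V=1$ and $VV^*=1$ are needed. The one ingredient you flag as an obstacle --- that $\varphi(P_0)H_0\subset H_0$ for a bounded, not necessarily unitary, $\varphi(P_0)$ already forces $\varphi$ to be a symmetric bounded analytic function on $\RR+\ima\RR_+$ --- is precisely what Longo--Witten establish on the way to their Corollary 2.4, so your plan closes without any genuine gap.
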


\section{Conformal Field Theory -- Conformal Nets}\label{sec:Ch}
In this section we are interested in local Möbius covariant nets (\CFTnets). 
These are nets on the circle (or the real line), which physically describe 
the chiral part of the algebra of observables of a 2D  
QFT, where the real line (circle) 
is then identified with (the compactification) of one of the lightrays.

\subsection{Nets of Standard Subspaces}\label{sec:ChStdNet}
Before describing nets of von Neumann algebras we want to go a step back and 
give some details on nets of real subspaces of a Hilbert space $\Hil_0$, whose
``second quantization'' leads to nets of von Neumann algebras, the so called 
second quantization nets.
In analogy to the ``free field construction'' from Wigner particles the 
Hilbert space $\Hil_0$ will be called the ``one-particle space''. 
See for example \cite{BGL2002} for a general construction of free Bosons 
using this technique on more general space-times%
\footnote{In our case the  ``space-time'' is the circle and the 
``wedges'' correspond to open 
non-empty nowhere dense intervals} and \cite{Lo} for such nets
on the circle.

We will  identify the one-point compactification 
$\RRc = \RR \cup\{\infty\}$ of the real line with the circle 
$\Sc = \lbrace z \in \CC : |z|=1 \rbrace$ by the Caley map
\begin{align*}
    C: \RRc &\longbijects \Sc,~
     x\longmapsto   -\frac{x-\ima}{x+\ima} 
     \quad \Longleftrightarrow\quad
     x=C^{-1}(z) = -\ima \frac{z-1}{z+1}. 
\end{align*}
Our symmetry is the 
    \emph{group of Möbius transformations}
$\Mob$ of the circle and can be identified with either $\PSL(2,\RR)$ or 
$\PSU(1,1)$, which act naturally on the compactified real line $\RRc$ and the 
circle $\Sc$, respectively. 
The Möbius group is generated by the following three one-parameter subgroups: 
the 
    \emph{rotations} 
$R(\theta)z = \e^{\ima\theta}z$, which are easier in the circle picture; 
the
    \emph{translations}
$\tau(t)x=x+t$ and 
    \emph{dilations} 
$\delta(s)x=\e^s x$ for $x\in \RRc$ which are both easier in the real line 
picture.
We add the orientation reversing 
    \emph{reflection}
$rz=\bar z$ with $z\in\Sc$ to $\Mob$ and denote the obtained group by 
$\Mob_\pm=\Mob\rtimes_{\Ad r}\ZZ_2$.
For $z\in\Sc$ we sometimes write $z=\e^{\ima\theta}$ and note that 
$x \equiv C^{-1}(\e^{\ima \theta})=\tan (\theta/2)$.

A connected, non-empty, nowhere dense interval $I\subset\Sc$ is called 
    \emph{proper}
and we denote by $\cI$ the set of all proper 
intervals partially ordered by inclusions. 
For $I\in\cI$ we denote by $I'$ the interior of $\Sc\setminus I$ which is 
clearly in $\cI$ and note that $\Mob$ acts transitive on $\cI$.

\begin{defi} 
    A strongly continuous unitary representation of $\Mob$ (or a subgroup
    containing the rotations) on a Hilbert 
    space $\Hil$ is called \emph{positive energy representation} if the 
    generator $L_0$ of the one-parameter subgroup of rotations 
    $U(R(\theta))=\e^{\ima \theta L_0}$ has positive spectrum.
    The representation is called 
        \emph{non-degenerate}
    if it does not contain the trivial representation.
\end{defi}
\begin{rmk}[{\cite[Theorem 2.10]{Lo}}] A unitary positive energy 
    representation of $\Mob$ extends to a (anti-) unitary representation of 
    $\Mob_\pm$ on the same Hilbert space and the extension is unique
    up to unitary equivalence.
\end{rmk}
\begin{defi}
    A local Möbius covariant net of standard subspaces of $\Hil$ is a 
    family of standard subspaces $H(I) \subset \Hil$ indexed by $I\in\cI$ such 
    that the following properties hold:
    \begin{enumerate}[{\bf A.}]
        \item\textbf{Isotony.} $I_1 \subset I_2$ implies 
            $H(I_1) \subset H(I_2)$. 
        \item\textbf{Locality.} $I_1\cap I_2=\emptyset$ implies
            $H(I_1) \subset H(I_2)'$.
        \item\textbf{Möbius covariance.} 
            There is a positive energy representation of $\Mob$ on $\Hil$ 
            such that $U(g)H(I) = H(gI)$ for all $g\in \Mob$ and $I \in \cI$.
        \item\textbf{Irreducibility.} $U$ is non-degenerate, i.e. 
            does not contain the trivial representation.
    \end{enumerate}
\end{defi}

Given a positive energy representation $U$ of $\Mob$ on $\Hil$ 
we can construct a local Möbius covariant net of standard subspaces 
as follows: we define the unitary one-parameter group 
$\Delta^{\ima t} = U(\delta(-2\pi t))$ where 
$\delta(t)x = \e^t x$ are the 
dilations and the antiunitary involution $J=U(r)$ 
(where we use that $U$ extends to a representation
of $\Mob_\pm$) and define the densely defined, closed, antilinear involution
$S=J\Delta^{1/2}$.
We denote by $I_0$ the interval corresponding to the upper circle or 
equivalently $(0,\infty)$. Then we set 
$H(I_0)\equiv H(0,\infty) = \{x\in\Dom(S): Sx=x\}$
to be the standard subspace associated with $S$ and for general 
$\cI\ni I=gI_0$ we set $H(I)=U(g)H(0,\infty)$, which does not 
depend on the choice of $g\in\Mob$. All local Möbius covariant nets
of standard subspaces are obtained in this way \cite{Lo}.

For later use we make the construction of a family indexed by $n\in \NN$ of 
local Möbius covariant nets of real subspaces---namely the net coming from $n$ 
copies of the lowest weight 1 positive energy representation (cf. \cite{Lo}) 
of $\Mob$---more explicit. Therefore let $F$ be a $n$-dimensional 
Euclidean space with scalar product 
$\langle \slot, \slot\rangle$. 
Let us define 
$\Hil_{0,F} = \Hil_0 \otimes_\RR F\cong \bigoplus_{i=1}^n \Hil_0$ which is in 
particular isomorphic to the $n$-fold direct sum of the unique irreducible 
positive energy lowest weight representation of $\Mob$ with lowest weight 1 
denoted by $(U_0,\Hil_0)$. 
We denote by $U_{0,F}$ the unitary representation of the $\Mob$ on 
$\Hil_{0,F}$. It can explicitly be constructed as follows. Let
$\Lp F=C^\infty(\Sc,F)\cong C^\infty(\Sc,\RR)\otimes_\RR F$ the set of all 
smooth maps (loops) from the circle $\Sc$ to $F$.
Because $f\in \Lp F$ is periodic it can be written as a Fourier series
\begin{equation*}
    f(\theta) = \sum_{k\in\ZZ} \hat f_k \e^{\ima k \theta}, \qquad
    \hat f_k = \int_0^{2\pi} \e^{-\ima k\theta}f(\theta) \frac{\dd \theta}{2\pi}
\end{equation*} 
with Fourier coefficients $\hat f_k = \overline{\hat f_{-k}}$ in the 
complexified space  $F_\CC:=F\otimes_\RR \CC$.
We introduce a semi-norm
\begin{equation*}
    \| f \|^2= \sum_{k=1}^\infty k\cdot \| \hat f_k\|^2_{F_\CC}
\end{equation*}
and a complex structure, i.e. an isometry $\cJ$ w.r.t. $\|\slot\|$ satisfying 
$\cJ^2=-1$, by
\begin{align*}
    \cJ:\hat f_k            &\longmapsto -\ima \sign(k) \hat f_k
\end{align*}
and finally we get the Hilbert space $\Hil_{F,0}$ by completion with respect to the norm $\|\slot\|$ 
\begin{align*} 
    \Hil_{0,F} &= \overline{\Lp F /F}^{\|\slot\|}
    \komma
\end{align*}
where $F$ is identified with the constant functions.
The scalar product 
$(\slot,\slot)$ can be obtained by polarization and the unitary action of 
$\Mob$ is induced by the action on $\Lp F$, namely
 \begin{align*} 
    U(g):f&\longmapsto g_\ast f:(g_\ast f)(\theta)=f(g^{-1}(\theta))
    \punkt
\end{align*}
Let $f\in \Lp F$. If no confusion arises we denote also its image 
$[f]\in\Hil_{0,F}$
of the inclusion $\iota_F:\Lp F \rightarrow \Hil_{0,F}$ 
by $f$.
On $\Lp F$ the sesquilinear form coming from the scalar product is given 
explicitly by
\begin{equation*} 
    \omega(f,g) := \Im(f,g)=\frac{-\ima}2\sum_{k\in\ZZ}k \langle \hat f_{k}, \hat g_{-k} \rangle 
    = \frac 1 2 
    \int_0^{2\pi}\langle f(\theta),g'(\theta)\rangle\frac{\dd\theta}{2\pi} =:  \frac12\int \langle f,g'\rangle
    \punkt
\end{equation*}

For $I\in \cI$ we denote by $H_F(I)$ the closure subspace of functions with 
support in $I$. The family $\{H_F(I)\}_{I\in\cI}$ is a local Möbius covariant 
net of standard subspaces. 
Indeed because $U$ acts geometrical, and in particular $U(\delta(t))$ is 
the modular group of the abstract construction and leaves $H_F(0,\infty)$ 
invariant, one can show that the explicit construction equals the modular 
construction mentioned above (cf. \cite{Lo}). 
\begin{prop}\label{prop:NetStd} Let $(F,\langle\slot,\slot\rangle)$ be 
    an Euclidean space, then there is a local Möbius covariant net 
    of standard subspace $H_F$ on the Hilbert space $\Hil_{0,F}$.
\end{prop}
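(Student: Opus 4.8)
The plan is to check the four defining properties A--D for the family $\{H_F(I)\}_{I\in\cI}$, isolating the one analytic input—that $U_{0,F}$ is a strongly continuous unitary positive energy representation of $\Mob$—from the otherwise routine bookkeeping. Granting that input, isotony (A) is immediate: $I_1\subset I_2$ forces every loop supported in $I_1$ to be supported in $I_2$, so after taking closures $H_F(I_1)\subset H_F(I_2)$. For locality (B), pick disjoint $I_1,I_2\in\cI$ and smooth representatives $f,g\in\Lp F$ with $\supp f\subset I_1$ and $\supp g\subset I_2$; then $\supp g'\subset I_2$ as well, so the integrand of $\omega(f,g)=\frac12\int\langle f,g'\rangle$ vanishes pointwise and $\omega(f,g)=\Im(f,g)=0$. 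Since $\omega$ is continuous and such $f,g$ span dense subsets of $H_F(I_1),H_F(I_2)$, this gives $H_F(I_1)\subset H_F(I_2)'$. The geometric half of covariance (C) is equally direct: the substitution $\theta\mapsto g^{-1}(\theta)$ shows $g_\ast$ carries $\{f:\supp f\subset I\}$ bijectively onto $\{f:\supp f\subset gI\}$, and as $U_{0,F}(g)$ is bounded it maps the closure $H_F(I)$ onto $H_F(gI)$.

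What requires genuine work is the analytic input itself. The form $\omega$ is invariant under every orientation-preserving diffeomorphism of $\Sc$, as the change of variables $\theta\mapsto g^{-1}(\theta)$ in $\omega(f,g)=\frac12\int\langle f,g'\rangle$ shows, so in particular $\omega$ is $\Mob$-invariant. Since $\omega$ together with the complex structure $\cJ$ recovers the full inner product through the compatibility relation of a Kähler structure, a real-linear map preserving $\omega$ is unitary precisely when it is complex-linear, \ie when it commutes with $\cJ$; this is the crux, and the place where positivity of the energy enters, because $\cJ$ is (up to $-\ima$) the sign of the rotation generator $L_0$, equivalently the polarization onto the positive-frequency modes. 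Rather than reprove its $\Mob$-invariance from scratch, I would identify $(U_{0,F},\Hil_{0,F})$ with the $n$-fold direct sum of the lowest weight $1$ positive energy representation $(U_0,\Hil_0)$ and quote \cite{Lo}, where the loop realization is shown to furnish a strongly continuous unitary representation with $L_0\geq 1$. Strict positivity of the weight then yields positive energy and, since quotienting out the constants $F$ removes the would-be weight-$0$ part, rules out a trivial subrepresentation, which is the non-degeneracy (D) and the positivity required in (C).

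It remains to upgrade each $H_F(I)$ from a closed real subspace to a \emph{standard} one, and here I would invoke the identification, due to \cite{Lo}, of this geometric family with the abstract modular construction recalled before the proposition, which produces standard subspaces by design. As $\Mob$ acts transitively on $\cI$ and both families transform covariantly, it suffices to treat $I_0=(0,\infty)$: the dilations $\delta(t)\colon x\mapsto\e^t x$ fix $I_0$, so $\Delta^{\ima t}=U_{0,F}(\delta(-2\pi t))$ preserves $H_F(I_0)$, while the reflection $r$ exchanges $I_0$ with $I_0'$, so $J=U_{0,F}(r)$ maps $H_F(I_0)$ onto $H_F(I_0')$. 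The one-particle Bisognano--Wichmann theorem of \cite{Lo} then identifies $\Delta^{\ima t}$ and $J$ as the modular data of $H_F(I_0)$, whence $H_F(I_0)=\ker(1-S)=H_S$ with $S=J\Delta^{1/2}$ is standard (and, in particular, $H_F(I_0')=H_F(I_0)'$); by covariance every $H_F(I)=U_{0,F}(g)H_F(I_0)$ is standard as well. The only real obstacle in the whole argument is the single fact flagged above—that the positive-frequency polarization $\cJ$ is preserved by the Möbius action, \ie that $g_\ast$ is complex-linear—which is the defining feature of the lowest weight $1$ representation and which I would take from \cite{Lo} rather than redo here.
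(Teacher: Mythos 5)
Your proposal is correct and follows essentially the same route as the paper: the paper's own justification is the paragraph preceding the proposition, which likewise takes the explicit loop-space construction, observes that isotony, locality and covariance are immediate from the geometric action and the local form of $\omega$, and defers standardness to the identification of the geometric net with the abstract modular construction in \cite{Lo} (via the fact that $U(\delta(t))$ implements the modular group of $H_F(0,\infty)$). Your write-up merely makes explicit the same key analytic input — that the positive-frequency polarization $\cJ$ is Möbius-invariant, i.e.\ that one is dealing with copies of the lowest weight $1$ representation — which the paper also takes from \cite{Lo}.
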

We remark that by the geometric modular action follows that the net is 
\emph{Haag dual}, \ie $H_F(I')=H_F(I)'$ and also the restriction to $\RR$ 
can be shown to be  Haag dual,
\ie $H_F( (\RR\setminus I)^\circ)= H_F(I)$ for $I \Subset \RR $.

\subsection{Conformal Nets}
In this part we give the notion of a \emph{local Möbius covariant net} 
of von Neumann algebras which we will simply call \emph{\CFTnet}.
\begin{defi}
    A \emph{local Möbius covariant net (\CFTnet)} $\A$ on $S^1$ 
    is a family $\{\A(I) \}_{I\in \cI}$ of von Neumann algebras on a Hilbert 
    space $\Hil$, with the following properties:
    \begin{enumerate}[{\bf A.}]
        \item \textbf{Isotony.} $I_1\subset I_2$ implies 
            $\A(I_1)\subset \A(I_2)$.
        \item \textbf{Locality.} $I_1  \cap I_2 = \emptyset$ implies 
            $[\A(I_1),\A(I_2)]=\{0\}$.
        \item \textbf{Möbius covariance.} There is a unitary representation
            $U$ of $\Mob$ on $\Hil$ such that 
            $  U(g)\A(I)U(g)^\ast = \A(gI)$.
        \item \textbf{Positivity of energy.} $U$ is a positive energy 
            representation, i.e. the generator $L_0$ (conformal Hamiltonian) 
            of the rotation subgroup $U(R(\theta))=\e^{\ima \theta L_0}$
            has positive spectrum.
        \item \textbf{Vacuum.} There is a (up to phase) unique rotation 
            invariant unit vector $\Omega \in \Hil$ which is 
            cyclic for the von Neumann algebra $\bigvee_{I\in\cI} \A(I)$.
    \end{enumerate}
\end{defi}
The 
    \emph{Reeh--Schlieder property} 
holds automatically
\cite{FrJr1996}, i.e. $\Omega$ is cyclic and separating for any $\A(I)$ with 
$I\in\cI$. 
Further we have the 
    \emph{Bisognano--Wichmannn property} 
\cite{GaFr1993,BrGuLo1993} which states that the modular operators with respect to 
$\Omega$ have geometric meaning; \eg the modular operators for the 
upper circle $I_0$ are given by the dilation 
$\Delta^{\ima t}=U(\delta(-2\pi t))$ and reflection $J=U(r)$, where here $U$ is 
extended to $\Mob_\pm$. 
For a general interval $I\in\cI$ the modular operators are given by a special
conformal transformation $\delta_I$ and a reflection $r_I$ both fixing the 
endpoints of $I$. 
The Bisognano--Wichmannn property implies \emph{Haag duality} 
\begin{align*}
    \A(I)'&=\A(I') &I\in\cI
\end{align*}
and it can be shown (see \eg \cite{GaFr1993}) that each $\A(I)$ is a type 
\threeone factor in Connes classification \cite{Co1973}.
A conformal net is 
    \emph{additive} 
\cite{FrJr1996}, \ie for intervals $I,I_1,\ldots I_n \in \cI$
$$
    I \subset \bigcup_i I_i \quad\Longrightarrow\quad 
    \A(I) \subset \bigvee_{i} \A(I_i)
    \text{ holds.}
$$

\subsubsection{Representations} 
Let $\A$ be a \CFTnet on a Hilbert space $\Hil$. 
A \emph{covariant representation} $\pi=\{\pi_I\}_{I\in\cI}$ is a family of 
representations $\pi_I$ of $\A(I)$ on a fixed Hilbert space $\Hil_\pi$ which 
fulfill:
\begin{align*}
    \pi_{I}\restriction_{\A(I_0)} &= \pi_{I_0} & I_0\subset I
\\
    \Ad U_\pi(g) \circ \pi_I &= \pi_{gI} \circ \Ad U(g)
\end{align*}
where $U_\pi$ is a unitary representation of the universal covering group 
$\Mobc$ of $\Mob$ with positive energy.
We assume $\Hil_\pi$ to be separable and this implies that $\pi$ is 
    \emph{locally normal}, 
namely $\pi_I$ is normal for all $I\in\cI$. 
A representation $\rho$ is called 
    \emph{localized} 
in some interval $I_0\in\cI$ if $\Hil_\rho=\Hil$ and 
$\rho_{I_0'}=\id_{\A(I_0')}$. 
Due to the type \threeone{} factor property, each representation $\pi$ is 
localizable in any interval $I_0\in\cI$, namely there is a representation 
$\rho$ which is unitary equivalent to $\pi$ and localized in $I_0$. 
If $\rho$ is localized in $I_0\in\cI$ then by Haag duality for every 
$I\in \cI$ with $I \supset I_0$ it is $\rho_I(\A(I))\subset  \A(I)$, in other
words $\rho_I$ is an endomorphism of $\A(I)$.
Let $\rho$ be a (covariant) representation localized in $I_0$. 
By a \emph{local cocycle} \cite{Lo2003} localized in a proper interval 
$I \supset I_0$, we mean an assignment of a symmetric neighbourhood 
$\cU$ of the identity of $\Mobc$ such that $I_0\cup gI_0\subset I$ for all 
$g\in\cU$  and a strongly continuous unitary valued map 
$g\in\cU \longmapsto z_\rho(g) \in \A(I)$ such that with 
$\alpha_g:=\Ad U(g)$:
\begin{align*}
    z_\rho(gh)&=z_\rho(g)\alpha_g(z_\rho(h))
    & g,h\in\cU
    \\
    \Ad z_\rho(g)^\ast \circ \rho_{\tilde I}(a)&=
    \alpha_g\circ\rho_{g^{-1}\tilde I}\circ \alpha_{g^{-1}}(a) &
    g\in \cU, a\in\A(\tilde I)
\end{align*}
for some open interval $\tilde I \in \cI$ with $\tilde I\supset \overline I$.
By covariance and Haag duality there exists a local cocycle
given by
\begin{align*}
    z_\rho(g)&=U_\rho(g)U(g)^\ast \in \A(I) &g\in\cU
    \punkt
\end{align*}

\subsubsection{Conformal subnets} Let $\A$ be a conformal net and $U$ its 
associated positive energy representation of $\Mob$. 
We call a family $\{\cB(I)\}_{I\in\cI}$ with $\cB(I)\subset \A(I)$ for all 
$I\in\cI$ a \emph{conformal subnet} if $\cB$ is isotonous, \ie $I,J\in\cI$ 
with $I\subset J$ implies $\cB(I)\subset\cB(J)$ and covariant, \ie it is 
$U(g)\cB(I)U(g)^\ast =\cB(gI)$ for all $I\in\cI$ and $g\in\Mob$.
The structure of conformal subnets is studied in \cite{Lo2003}.

Let $e$ be the projection on the closure $\Hil_\cB$ of $\bigvee_{I\in\cI}\cB(I)\Omega$. 
Then $\cB$ is itself a conformal net on $\Hil_\cB:= e\Hil$ with unitary 
representation $U\restriction_{\Hil_\cB}$ also denoted by $U$, 
namely $\Omega$ is cyclic for 
$\Hil_\cB$ by definition and all other properties are inherit by the ones of 
$\A$.
By the Reeh--Schlieder property $\Omega$ is cyclic and separating for 
all $\cB(I)$ with $I\in \cI$ and in particular $e$ is the Jones projection 
(see e.g. \cite{LoRe1995}) of the inclusion $\cB(I) \subset 
\A(I)$.
\begin{lem}\label{lem:subnet} Let $\cB$ be a conformal subnet of $\A$.
    If $e =1$ then the conformal nets $\cB$ and $\A$ are identical.
\end{lem}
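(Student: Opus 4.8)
The plan is to exploit the fact, just recorded, that the hypothesis $e=1$ means $\Hil_\cB=\Hil$, so that $\cB$ is itself a full \CFTnet on the same Hilbert space $\Hil$, with the same vacuum $\Omega$ and the same representation $U$ of $\Mob$; only the local algebras may a priori be smaller than those of $\A$. The whole task is then to upgrade the inclusion $\cB(I)\subseteq\A(I)$ to an equality. For this I would first treat a single reference interval $I_0=(0,\infty)$ and afterwards spread the result to all of $\cI$ by covariance and transitivity of the $\Mob$-action.

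First I would invoke the Bisognano--Wichmann property for both nets. Since $\cB$ and $\A$ are \CFTnets on $\Hil$ sharing the representation $U$, the modular data of $\cB(I_0)$ and of $\A(I_0)$ with respect to $\Omega$ are both dictated by the geometry, namely $\Delta^{\ima t}=U(\delta(-2\pi t))$ and, crucially, the same modular conjugation $J=U(r)$. (That $\Omega$ is cyclic and separating for $\cB(I_0)$, so that its modular theory is even defined, is exactly the Reeh--Schlieder statement recorded above; one uses $e=1$, i.e. $\overline{\cB(I_0)\Omega}=e\Hil=\Hil$.) Now I would feed the common $J$ into Tomita--Takesaki: $J\cB(I_0)J=\cB(I_0)'$ and $J\A(I_0)J=\A(I_0)'$, the commutants being taken in $\B(\Hil)$. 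From $\cB(I_0)\subseteq\A(I_0)$ one gets $\A(I_0)'\subseteq\cB(I_0)'$, while applying $\Ad J$ to the same inclusion gives $\cB(I_0)'\subseteq\A(I_0)'$; hence $\cB(I_0)'=\A(I_0)'$ and, taking commutants once more, $\cB(I_0)=\A(I_0)$. Möbius covariance, $\cB(gI_0)=U(g)\cB(I_0)U(g)^\ast=U(g)\A(I_0)U(g)^\ast=\A(gI_0)$, together with transitivity of $\Mob$ on $\cI$, then yields $\cB(I)=\A(I)$ for every $I\in\cI$.

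An equivalent route, closer to the language of the Jones projection emphasized above, is to note that the modular group $U(\delta(-2\pi t))$ of $\A(I_0)$ leaves $\cB(I_0)$ globally invariant (dilations fix $I_0$ and $\cB$ is covariant), so Takesaki's theorem provides the vacuum-preserving conditional expectation $E\colon\A(I_0)\to\cB(I_0)$, and the basic-construction identity $exe=E(x)e$ for $x\in\A(I_0)$ collapses, at $e=1$, to $x=E(x)$ for all $x\in\A(I_0)$, whence $\A(I_0)=E(\A(I_0))=\cB(I_0)$.

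The one place that needs care, and the \emph{main obstacle}, is the clean identification of the two modular conjugations, which is what makes the commutants comparable in the first argument (and what produces the conditional expectation in the second). This rests on applying Bisognano--Wichmann to the subnet $\cB$; the legitimacy of that step is precisely the remark that, for $e=1$, $\cB$ satisfies all the \CFTnet axioms on $\Hil$ with the unchanged representation $U$, so that its geometric modular data coincide with those of $\A$. Everything after that identification is formal manipulation of commutants and covariance.
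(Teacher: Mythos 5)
Your second, ``equivalent'' route is precisely the proof given in the paper: the modular group of $\A(I_0)$ with respect to $\Omega$ is geometric by Bisognano--Wichmann, covariance makes it preserve $\cB(I_0)$, Takesaki's theorem then supplies a normal conditional expectation $E\colon\A(I_0)\to\cB(I_0)$, and $e=1$ collapses $E(x)e=exe$ to $E=\id$, whence $\A(I_0)=\cB(I_0)$. So on that route you and the paper coincide. Your first route is genuinely different, and its one fragile step is exactly the one you single out: the claim that the modular conjugations of $\cB(I_0)$ and $\A(I_0)$ with respect to $\Omega$ are \emph{the same operator} $U(r)$. The extension of $U$ from $\Mob$ to $\Mob_\pm$ is unique only up to unitary equivalence (as the paper's own remark records), and for a reducible $U$ two antiunitary involutions implementing $r$ and fixing $\Omega$ may differ by a nontrivial unitary in $U(\Mob)'$; so $J_{\cB(I_0)}=J_{\A(I_0)}$ does not follow merely from Bisognano--Wichmann applied to each net. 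The argument is easily repaired, however, because you never actually need the two conjugations to agree: Haag duality holds for each net separately with its \emph{own} conjugation, giving $\cB(I_0)'=\cB(I_0')\subseteq\A(I_0')=\A(I_0)'\subseteq\cB(I_0)'$, hence equality of commutants and $\cB(I_0)=\A(I_0)$ by the double commutant theorem; covariance then spreads this to all intervals. With that repair your first route is a correct alternative that trades Takesaki's theorem for Haag duality of the subnet $\cB$ (legitimate since $e=1$ makes $\cB$ a conformal net on all of $\Hil$), at the cost of being slightly less elementary to justify than the conditional-expectation argument the paper uses.
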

\begin{proof}
    Let $I\in\cI$. Then it is $\cB(I)\subset \A(I)$ and by the 
    Bisognano--Wichmann property the modular group of $\A(I)$ with respect to 
    the vector state of $\Omega$ is given by 
    $\sigma_t = \Ad U(\delta_I(-2\pi t))$ 
    and by covariance of $\cB$ it leaves $\cB(I)$ invariant. 
    By Takesaki's Theorem \cite[Theorem IX.4.2.]{Ta2} there exists a normal conditional 
    expectation 
    from $\A(I)$ onto $\cB(I)$ which has to be the identity of $\A$ by $e=1$.
\end{proof}

\subsubsection{Completely rational conformal nets}
A conformal net $\A$ is said to be 
    \emph{strongly additive} 
if for $I_1,I_2 \in \cI$ adjacent intervals and $I=(I_1\cup I_2)''=
\overline{I_1\cup I_2}^\circ \in \cI$,
$$
    \A(I_1) \vee \A(I_2) =\A(I)\text{ holds.}
$$
The net $\A$ is called 
    \emph{split} 
if for $I_0,I\in \cI$ with $\overline{I_0}\subset I$ the inclusion 
$\A(I_0) \subset \A(I)$ is a split inclusion, namely there exist an 
intermediate type I factor $M$ such that $\A(I_0) \subset M \subset \A(I)$ 
or equivalently $\A(I_0) \vee \A(I)'$ is canonically isomorphic to 
$\A(I_0)\otimes \A(I)'$.
Let $I_1,I_3 \in \cI$ be two intervals with disjoint closure and 
$I_2,I_4\in\cI$  the two components of $(I_1\cup I_3)'$, in other words the 
intervals $I_1,\ldots,I_4$ divide the circle into four parts. 
Then we denote by $\mu_\A$ the Jones--Kosaki index \cite{Ko1998} of the 
inclusion 
\begin{equation}
    \A(I_1)\vee \A(I_3) \subset (\A(I_2) \vee \A(I_4))'
    \label{eq:twointerval}
\end{equation}
which does not depend on the special choice of the intervals $I_i$.
Finally the net $\A$ is called 
    \emph{completely rational} 
if it is strongly additive, split and $\mu_\A <\infty$. 
In \cite{KaLoMg2001} it is shown 
that the index of the inclusion \eqref{eq:twointerval}
is the global index associated with all sectors 
and the the category of representations form a modular 
tensor category, where each sector is 
a direct sum of sectors with finite dimension.

\subsection{Second Quantization Nets}
\label{sec:ChAbNet}

By second quantization of a net of standard subspaces we become 
a net of von Neumann algebras.

Let $\Hil$ be a Hilbert space and $\omega(\slot,\slot)=\Im(\slot,\slot)$ the
sesquilinear form.
There are unitaries $W(f)$ for $f\in\Hil$ fulfilling
\begin{equation*}
    W(f)W(g) = \e^{-\ima \omega(f,g)} W(f+g) = \e^{-2\ima\omega(f,g)}W(g)W(f).
\end{equation*}
and acting naturally on the Bosonic Fock space $\e^\Hil$ over $\Hil$.
This space is given by $\e^{\Hil} = \bigoplus_{n=0}^\infty P_n\Hil^{\otimes n}$,
where $P_n$ is the projection 
$P_n(x_1\otimes\cdots\otimes x_n)=
1/n!\sum_{\sigma} x_{\sigma(1)}\otimes\cdots\otimes x_{\sigma(n)}$ where the 
sum goes over all permutation. 
The set of coherent vectors
$\e^h := \bigoplus_{n=0}^\infty h^{\otimes n}/\sqrt{n!}$ with $h\in\Hil$ is 
total in $\e^\Hil$ and it is $(\e^f, \e^h)=\e^{(f,h)}$.
The vacuum is given by $\Omega =\e^0$ and the action of $W(f)$ is given by 
$W(f)e^0=\e^{-\frac12\|f\|^2}\e^{f}$, in other words the vacuum 
representation
$\phi(\slot) = (\Omega,\slot\Omega)$ is characterized by 
$\phi(W(f))=\e^{-\frac 12 \|f\|^2}$.

For a real subspace $H\subset \Hil$ we define the von Neumann algebra 
\begin{align*}
    R(H) = \lbrace W(f):f\in H\rbrace''\subset \mathrm{B}(\e^\Hil)
    \punkt
\end{align*} 
The map $R$ has the following properties:
\begin{prop}[\cite{Lo2}]\label{prop:CCR}{$\quad$}
        \begin{enumerate}
        \item Let $H,K\subset \Hil$ be real linear subspaces. Then
            $R(K) = R(H)$ iff $\bar K = \bar H$.
        \item Let $H$ be closed. $H$ is separating or cyclic
            iff $R(H)$ is separating or cyclic, respectively.
        \item Let $H$ be standard, then the modular unitaries 
            $\Delta_{R(H)}^{\ima t}$ and the modular conjugation $J_{R(H)}$ 
            associated with $(R(H),\Omega)$ are given by 
            \begin{equation*}
                \Delta^{\ima t}_{R(H)} = \Gamma(\Delta_H^{\ima t}), \qquad
                J_{R(H)} = \Gamma(J_H)
            \end{equation*}
            and in particular $R(H') = R(H)'$.
    \end{enumerate}
\end{prop}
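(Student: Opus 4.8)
The plan is to reduce all three assertions to elementary properties of the coherent vectors $\e^f$ together with the Weyl relations, exploiting that $f\mapsto W(f)$ is strongly continuous, that $\{\e^f:f\in\Hil\}$ is total with $(\e^f,\e^h)=\e^{(f,h)}$, and that $W(f)\Omega=\e^{-\frac12\|f\|^2}\e^f$. For part (1) the heart of the matter is the single identity $\overline H=\{f\in\Hil:W(f)\in R(H)\}$. Granting it, $R(K)=R(H)$ forces $\overline K=\overline H$, while conversely $\overline K=\overline H$ gives $R(K)=R(\overline K)=R(\overline H)=R(H)$. The inclusion ``$\supseteq$'' in the identity is exactly the statement that $W(f)\in R(H)$ for all $f\in\overline H$, which follows from strong continuity of $f\mapsto W(f)$. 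For ``$\subseteq$'' I would argue by contradiction: if $f\notin\overline H=H''$ there is $g\in H'$ with $\omega(g,f)\neq 0$, and since $tg\in H'$ for all $t\in\RR$ the Weyl relations give $W(tg)\in R(H)'$ with $W(tg)W(f)W(tg)^\ast=\e^{-2\ima t\,\omega(g,f)}W(f)$; were $W(f)\in R(H)$, commutation with every $W(tg)$ would force $\e^{-2\ima t\,\omega(g,f)}=1$ for all $t$, which is impossible.

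For part (2) the key lemma is that, for a closed real subspace $H$, the family $\{\e^f:f\in H\}$ is total in $\e^\Hil$ if and only if $H$ is cyclic, \ie $\overline{H+\ima H}=\Hil$. This yields the cyclicity statement because $\overline{R(H)\Omega}=\overline{\mathrm{span}}\{W(f)\Omega:f\in H\}=\overline{\mathrm{span}}\{\e^f:f\in H\}$, each $W(f)\Omega$ being a positive multiple of $\e^f$. To prove the lemma I would note that if $\xi\perp\e^f$ for all $f\in H$, then $F(z)=(\xi,\e^z)$ is an entire function on $\Hil$ vanishing on $H$; restricting $F$ to the complex lines and planes spanned by elements of $H$ and invoking the identity theorem shows $F$ vanishes on $H+\ima H$, hence on $\overline{H+\ima H}=\Hil$ by continuity, so $\xi=0$. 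Conversely, if $H$ is not cyclic, any nonzero $v\perp\overline{H+\ima H}$, viewed in the one-particle subspace $\Hil\subset\e^\Hil$, is orthogonal to every $\e^f$ with $f\in H$. The separating statement then follows by duality: $\Omega$ is separating for $R(H)$ iff it is cyclic for $R(H)'$, and since $R(H')\subseteq R(H)'$ (Weyl commutation) and $H$ is separating iff $H'$ is cyclic (recalled above), the ``if'' direction is immediate; for the contrapositive of the converse, a nonzero $f\in H\cap\ima H$ gives $\CC f\subset H$, so $R(\CC f)\subseteq R(H)$ is a type~I factor acting on a single tensor factor of Fock space, for which $\Omega$ fails to be separating.

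For part (3), with $H$ standard both $H$ and $H'$ are standard, so by (2) $\Omega$ is cyclic and separating for $R(H)$. I would set $\Delta=\Gamma(\Delta_H)$ and $J=\Gamma(J_H)$, a positive nonsingular operator and an antiunitary involution, both fixing $\Omega$. Using functoriality of second quantization, $\Gamma(U)W(f)\Gamma(U)^\ast=W(Uf)$, together with the one-particle relations $\Delta_H^{\ima t}H=H$ and $J_H H=H'$ recalled in the excerpt, one obtains $\Delta^{\ima t}R(H)\Delta^{-\ima t}=R(H)$ and $JR(H)J=R(J_H H)=R(H')\subseteq R(H)'$. Evaluating the closed antilinear operator $S:=J\Delta^{1/2}$ on the coherent vectors $W(f)\Omega$ with $f\in H$ and comparing with the defining property $S_{R(H)}W(f)\Omega=W(f)^\ast\Omega$ identifies $S$ with the Tomita operator of $(R(H),\Omega)$; by uniqueness of the polar decomposition in Tomita--Takesaki theory this gives $\Delta_{R(H)}^{\ima t}=\Gamma(\Delta_H^{\ima t})$ and $J_{R(H)}=\Gamma(J_H)$. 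The last assertion $R(H')=R(H)'$ then reads $R(H)'=J_{R(H)}R(H)J_{R(H)}=R(J_H H)=R(H')$, the reverse inclusion $R(H')\subseteq R(H)'$ being clear.

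The hard part will be twofold. The genuinely analytic input is the totality lemma for coherent vectors in (2): the holomorphy and identity-theorem argument showing that $\{\e^f:f\in H\}$ spans $\e^\Hil$ precisely when $H$ is cyclic. In (3) the delicate bookkeeping is the tracking of the exact phases in the second-quantized Weyl relations when verifying $S=S_{R(H)}$ on coherent vectors (in particular making the action of $\Gamma(J_H)$ on $\e^f$ consistent with $W(f)^\ast\Omega$); once this is settled, the remaining manipulations are formal and the conclusion rests on the uniqueness clause of Tomita--Takesaki.
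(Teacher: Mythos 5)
The paper itself gives no proof of this proposition---it is quoted from \cite{Lo2}---so your argument can only be measured against the standard one in the literature. Your parts (1) and (2) are correct and essentially \emph{are} that standard argument: the identity $\overline H=\{f\in\Hil : W(f)\in R(H)\}$, proved by strong continuity of $f\mapsto W(f)$ in one direction and by the Weyl commutation relations with some $W(tg)$, $g\in H'$, $\omega(g,f)\neq 0$, in the other; and the totality criterion for $\{\e^f:f\in H\}$ via holomorphy of $z\mapsto(\xi,\e^z)$ on complex planes spanned by elements of $H$, combined with the type~I obstruction when $H\cap\ima H\neq\{0\}$. Two small points you should still make explicit: $\overline{R(H)\Omega}$ equals the closed span of the $W(f)\Omega$ because that span is invariant under the self-adjoint set $\{W(f):f\in H\}$ and hence under $R(H)$; and the converse in (1) uses $R(K)=R(\overline K)$.

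Part (3) has a genuine gap, located exactly at the step you defer as ``delicate bookkeeping''. With the conventions of this paper ($W(f)\Omega=\e^{-\frac12\|f\|^2}\e^f$ and $S_H(x+\ima y)=x-\ima y$, so $S_Hf=f$ for $f\in H$) one computes $\Gamma(J_H)\Gamma(\Delta_H)^{1/2}W(f)\Omega=\e^{-\frac12\|f\|^2}\e^{J_H\Delta_H^{1/2}f}=\e^{-\frac12\|f\|^2}\e^{S_Hf}=W(f)\Omega$, whereas $S_{R(H)}W(f)\Omega=W(f)^\ast\Omega=W(-f)\Omega=\e^{-\frac12\|f\|^2}\e^{-f}$. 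The two operators thus do \emph{not} agree on coherent vectors; the operator satisfying $\e^f\mapsto\e^{-f}$ on $H$ is $\Gamma(-S_H)$, whose polar decomposition is $\bigl(\Gamma(-J_H),\Gamma(\Delta_H)\bigr)$. Your scheme therefore does deliver $\Delta_{R(H)}^{\ima t}=\Gamma(\Delta_H^{\ima t})$ and the duality $R(H)'=J_{R(H)}R(H)J_{R(H)}=R(-J_HH)=R(J_HH)=R(H')$ (since $-J_HH=J_HH$ for a real subspace), but it yields $J_{R(H)}=\Gamma(-1)\Gamma(J_H)$, which differs from the stated formula by the parity operator; the literal identity $J_{R(H)}=\Gamma(J_H)$ holds only in the convention $W(h)\Omega\propto\e^{\ima h}$, i.e.\ when $R(H)$ is generated by Weyl operators over $\ima H$, in which case your coherent-vector computation does close up because $S_H(\ima h)=-\ima h$. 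This is why the proofs in the literature either fix conventions accordingly or bypass the pointwise evaluation entirely, e.g.\ by verifying the KMS condition for $\Ad\Gamma(\Delta_H^{\ima t})$ on $R(H)$ (a two-point-function computation using the analytic continuation of $t\mapsto\Delta_H^{\ima t}f$ for $f\in H$) and invoking uniqueness of the modular group. Finally, even after the sign is repaired, agreement on the vectors $W(f)\Omega$ identifies two closed unbounded antilinear operators only if their span is a core for both: for $S_{R(H)}$ this follows from Kaplansky density, since the Weyl span is a $\sigma$-weakly dense $\ast$-subalgebra, but for the second-quantized candidate it requires a separate argument that your proposal omits.
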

Let $U$ be a unitary in $\B(\Hil)$ then 
$\Gamma(U)=\bigoplus_{n=0}^\infty U^{\otimes n}$ acts on coherent states 
by $\Gamma(U)\e^h = \e^{Uh}$ and is therefore a unitary (cf. \cite{Gu2011}) on 
$\e^{\Hil}$.
These second quantization unitaries implement Boguliubov automorphisms, namely 
$\Gamma(U) W(f)\Gamma(U)^\ast = W(Uf)$.
\begin{prop}[Second quantization nets \cite{Lo2}] Let $\{H(I)\}_{I\in\cI}$ be  a local Möbius covariant
    net of standard subspaces on $\Hil$.
    Then $\A(I)=R(H(I))$ defines a local Möbius covariant net 
    (of von Neumann algebras) on $\e^{\Hil}$.
\end{prop}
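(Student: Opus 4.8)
The plan is to verify the five axioms of a local Möbius covariant net for $\A(I) = R(H(I))$, taking as the representation on $\e^\Hil$ the second quantization $\Gamma(U)$ of the one-particle representation $U$ of $\Mob$ attached to the net $\{H(I)\}$, and as the vacuum the Fock vacuum $\Omega = \e^0$. The main tools are the monotonicity of the map $R$ (immediate from its definition as a von Neumann algebra generated by Weyl operators), the identity $R(H') = R(H)'$ for standard $H$ from Proposition \ref{prop:CCR}, the functoriality $\Gamma(U_1 U_2) = \Gamma(U_1)\Gamma(U_2)$, and the Bogoliubov relation $\Gamma(U)W(f)\Gamma(U)^\ast = W(Uf)$ recalled above.

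Isotony is immediate: if $I_1 \subset I_2$ then $H(I_1) \subset H(I_2)$ by isotony of the net of standard subspaces, and since $R$ is order preserving this gives $\A(I_1)\subset\A(I_2)$. For locality, suppose $I_1 \cap I_2 = \emptyset$; then locality of the net of standard subspaces gives $H(I_1) \subset H(I_2)'$, and since each $H(I_2)$ is standard, Proposition \ref{prop:CCR} yields $R(H(I_2)') = R(H(I_2))'$. Combining with monotonicity of $R$ I obtain $\A(I_1) = R(H(I_1)) \subset R(H(I_2)') = R(H(I_2))' = \A(I_2)'$, which is locality.

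For Möbius covariance I would take the representation to be $\Gamma(U)$; functoriality of $\Gamma$ makes it a unitary representation, and strong continuity is inherited from that of $U$ by checking continuity of $g\mapsto \Gamma(U(g))\e^h = \e^{U(g)h}$ on the total set of coherent vectors. The Bogoliubov relation then gives $\Gamma(U(g)) R(H(I)) \Gamma(U(g))^\ast = R(U(g)H(I)) = R(H(gI)) = \A(gI)$, using covariance $U(g)H(I)=H(gI)$ of the one-particle net. Positivity of energy follows because the rotation generator on $\e^\Hil$ is the second quantization $\dd\Gamma(L_0)$ of the one-particle generator $L_0 \geq 0$, and $\dd\Gamma$ of a positive operator is positive.

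It remains to treat the vacuum axiom, which is the point requiring the most care. Rotation invariance of $\Omega = \e^0$ is clear, since $\Gamma(U(R(\theta)))\e^0 = \e^{U(R(\theta))0} = \e^0$. For uniqueness I would invoke the non-degeneracy (irreducibility) of the one-particle representation: not containing the trivial representation forces $\ker L_0 = \{0\}$ on $\Hil$, so on the $n$-particle subspaces the generator $\dd\Gamma(L_0)$ has strictly positive spectrum for $n\geq 1$, leaving $\Omega$ as the unique (up to phase) rotation-invariant vector. Finally, cyclicity of $\Omega$ for $\bigvee_{I\in\cI} \A(I)$ already follows from a single interval: each $H(I_0)$ is standard, hence cyclic, so by Proposition \ref{prop:CCR} the algebra $R(H(I_0)) = \A(I_0)$ has $\Omega$ as a cyclic vector, and a fortiori so does the larger algebra $\bigvee_I \A(I)$. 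The main obstacle is thus packaging the vacuum properties, i.e. deducing strict positivity and uniqueness of the invariant vector from non-degeneracy of the one-particle data; everything else reduces to monotonicity of $R$ together with Proposition \ref{prop:CCR}.
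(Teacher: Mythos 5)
Your verification is correct, and it is essentially the argument the paper delegates to the cited reference \cite{Lo2}: the paper itself gives no proof, and the standard one is exactly this axiom-by-axiom check using monotonicity of $R$, the duality $R(H')=R(H)'$ from Proposition \ref{prop:CCR}, the Bogoliubov relation for $\Gamma(U)$, positivity of $\dd\Gamma(L_0)$, and standardness of the one-particle subspaces for the vacuum properties. The only point worth flagging is that your reduction of uniqueness of the invariant vector to $\ker L_0=\{0\}$ on $\Hil$ silently uses the standard fact that in a positive energy representation of $\Mob$ any rotation-invariant vector spans a trivial subrepresentation, so that non-degeneracy indeed kills $\ker L_0$; with that made explicit the proof is complete.
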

    
Let $(F,\langle\slot,\slot\rangle)$ be an Euclidean space and
$\cI\ni I\longmapsto H_F(I)\subset \Hil_{0,F}$ the net of standard subspaces 
from Proposition \ref{prop:NetStd}.
Then  we denote by $\A_F$ the local Möbius covariant net 
on $\Hil_F:=\e^{\Hil_{0,F}}$
called the \emph{Abelian current net over $F$} 
given by $\A_F(I):=R(H_F(I))$. 
If $U_0$ is the action of $\Mob$ on $\Hil_{0,F}$ then the action 
on $\Hil_F$ is given by $U(g):=\Gamma(U_0(g))$. 
In the case $F=\RR$ the net is also called the \emph{\Uonenet} 
and was treated in an operator algebraic setting first in \cite{BuMaTo1988}.
We remark that $\A_F$ is clearly equivalent to the $n$-fold tensor product 
of the \Uonenet.

\subsubsection{Representations}
Let $\ell \in C^\infty(S_1,F)$ with support in some $I_0\in \cI$. Then we 
define for $I\in\cI$ with $I_0\subset I$ 
$$
    \rho_{\ell,I}(W(f))=\e^{\ima\int\langle f,\ell\rangle} W(f)
$$
where we have chosen a representant $f$ of $[f]\in \Hil_{F,0}$ with 
$f\restriction_{I'}\equiv 0$. This defines a representation localized 
in $I_0$. 
This representation is covariant with local cocycle
localized in $I\supset I_0$ and $\cU$
a symmetric neighbourhood of the identity of $\Mobc$
such that $I_0\cup gI_0\subset I$ for all $g\in\cU$ given by
$z(g)=W(L-L_g)$ where $L$ is a primitive of $\ell$, i.e.
$L'(\theta)=\ell(\theta)$ and $L_g(\theta)=g_\ast L(\theta)=L(g^{-1}\theta)$.

Two representations are equivalent if they have the same charge,
which is given for $\rho_\ell$ by
$$
    q_\ell=\int_0^{2\pi} \ell(\theta) \frac{\dd\theta}{2\pi}= \int\ell \in F,
$$
namely for $q_\ell=q_\emm$ it is 
$z\rho_\ell = \rho_\emm z$ with unitary intertwiner $z=W(M-L)$ where 
$M-L\in \Hil_{F,0}$ is a primitive of $\emm - \ell$. 
In other words the sectors depend only on this charge $q\in F$ and 
we denote the sector by 
$[q]$ with obvious fusion rules $[q]\times [r]=[q+r]$.
We note that because there are infinitely many sectors
(with dimension 1) the index of the inclusion \eqref{eq:twointerval}
is infinite and the nets cannot be completely rational.

Equivalently the conformal net can be regarded as coming from a projective 
positive energy representation of the group $\Lp F$.

\subsubsection{Abelian currents from central extensions}
Basically to fix notation, we recall
some facts about projective representations. 
If $\pi$ is a projective representation of a group $G$ on a Hilbert space 
$\Hil$, then there is a 2-cocycle with 
$c:G\times G \longrightarrow \TT \subset \CC$ given by  
\begin{align*}
    \pi(g)\pi(h) &= c(g,h)\pi(gh) & \text{for all } g,h \in G
\end{align*}
fulfilling the cocycle relation
    $c(h,k)c(g,hk) = c(g,h)c(gh,k),$
which follows from associativity. 
Two representations are equivalent if and only if there is 
a coboundary 
\begin{equation}
    \label{eq:cob}
    b_f(g,h)=\frac{f(g)f(h)}{f(gh)}
\end{equation}
where $f:G\longmapsto \TT$ such that  
\begin{equation*}
    c_2(g,h) = b_f(g,h)c_1(g,h) \quad \Longleftrightarrow \quad
    \pi_2(g) = f(g)\pi_1(g)
    \punkt
\end{equation*}
If $G\cong \ZZ^n$ this is true if and only if $\hat c_1=\hat c_2$
(see for example \cite[Lemma 5.5]{Ka1998} cf. also \cite[Lemma A.1.2]{DoHaRo1969II})
where $\hat c(g,h)=c(g,h)c(h,g)^{-1}$ is the \emph{commutator map} or 
\emph{antisymmetric part} of a cocycle $c$. The following Lemma 
will be useful showing the equivalence of two cocycles. 
\begin{lem}\label{lem:cocycle} Let $G=G_1\times G_2$ be an Abelian group 
    and $c,c'\in Z^2(G,\TT)$ be two 2-cocycle and 
    $c_i, c_i'\in Z^2(G_i,\TT)$ their restrictions to 
    $G_i\times G_i$ for $i=1,2$. 
    If $[c_i]=[c_i'] \in H^2(G_i,\TT)$ then 
    $\hat c =\hat c'$ implies $[c']=[c] \in H^2(G,\TT)$.
\end{lem}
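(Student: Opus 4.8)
The plan is to reduce everything to the statement that a single \emph{symmetric} $\TT$-valued $2$-cocycle which is trivial on each factor must be a coboundary. First I would exploit the hypotheses $[c_i]=[c_i']$ to make the restrictions agree on the nose, not merely up to coboundary. Choose $f_i\colon G_i\to\TT$ with $f_i(e)=1$ and $c_i'=b_{f_i}\,c_i$, and define $f\colon G\to\TT$ by $f(g_1 g_2)=f_1(g_1)f_2(g_2)$, using the unique factorization $g=g_1 g_2$ with $g_i\in G_i$. A direct check (together with $f_i(e)=1$) shows that the coboundary $b_f$ restricts to $b_{f_i}$ on each subgroup $G_i$. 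Hence replacing $c'$ by the cohomologous cocycle $b_f^{-1}c'$ leaves both the class $[c']$ and the commutator map $\hat c'$ unchanged (coboundaries are symmetric, so $\widehat{b_f}\equiv 1$), while making its restrictions to $G_1$ and $G_2$ literally equal to $c_1$ and $c_2$. From here on I may therefore assume that $c$ and $c'$ restrict to the same cocycle on each factor and still satisfy $\hat c=\hat c'$.

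Next I would set $d:=c\,(c')^{-1}\in Z^2(G,\TT)$. By construction $d$ restricts to the trivial cocycle on each $G_i$, and $\hat d=\hat c\,(\hat c')^{-1}\equiv 1$, so $d$ is symmetric; the Lemma is then equivalent to showing $d\in B^2(G,\TT)$. The clean way to finish is to note that a symmetric $\TT$-valued $2$-cocycle is always a coboundary: since $\TT=\RR/\ZZ$ is divisible it is an injective $\ZZ$-module, so $\mathrm{Ext}^1_\ZZ(G,\TT)=0$, and as symmetric $2$-cocycles modulo coboundaries classify the abelian extensions $0\to\TT\to E\to G\to 0$, every such extension splits and $d$ is a coboundary. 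Alternatively, one can avoid $\mathrm{Ext}$ and argue by hand: put $\beta(g_1,h_2):=d(g_1,h_2)$ for $g_1\in G_1$, $h_2\in G_2$; the cocycle identity together with triviality of $d$ on the factors shows that $d$ is cohomologous to the cocycle $(g,h)\mapsto\beta(g_1,h_2)$ and that $\beta$ is a bicharacter, whereupon symmetry $\hat d\equiv 1$ forces $\beta(g_1,h_2)=\beta(h_1,g_2)$ for all arguments and hence, taking $h_1=g_2=e$, $\beta\equiv 1$. Either way $d$ is a coboundary, so $[c]=[c']$.

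The only genuinely nontrivial point is this last step: controlling the ``mixed'' part of $H^2(G_1\times G_2,\TT)$. Everything preceding it is bookkeeping with coboundaries, which is why I would organize the argument so that the hypotheses $[c_i]=[c_i']$ and $\hat c=\hat c'$ feed cleanly into a single symmetric $d$. I expect the explicit verification that $\beta$ is bilinear and that $d\sim\beta$ (essentially the Künneth normal form) to be the most laborious piece, which is precisely why invoking injectivity of $\TT$ is the more economical route to the coboundary conclusion.
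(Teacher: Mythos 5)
Your argument is correct, but it reaches the conclusion by a genuinely different route from the paper. The paper's proof is entirely computational: after the same reduction to a single cocycle $d=c(c')^{-1}$ with $\hat d=1$ and with restrictions $d_i=\delta b_i$, it writes down the explicit trivializing function $b(g_1g_2)=b_1(g_1)b_2(g_2)/d(g_1,g_2)$ and verifies $d=\delta b$ directly from the cocycle identity together with the symmetry $\hat d=1$. You instead first absorb the $b_i$ into $c'$ (a clean normalization the paper does not bother to make explicit) and then dispose of the resulting symmetric cocycle by quoting $\mathrm{Ext}^1_{\ZZ}(G,\TT)=0$, i.e.\ injectivity of the divisible group $\TT$. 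Two remarks. First, your main route never actually uses the hypothesis $[c_i]=[c_i']$: symmetry of $c(c')^{-1}$ alone already forces $[c]=[c']$, so you are in effect proving the stronger (and classical) statement that for an abelian group $G$ the class of a $\TT$-valued $2$-cocycle is determined by its commutator map. That is perfectly legitimate for the lemma as literally stated, which carries no topological hypotheses. Second, what the paper's explicit formula buys---and the $\mathrm{Ext}$ argument does not---is control over the trivializing function: $b$ is manufactured from $b_1$, $b_2$ and the mixed values of $d$, so it inherits whatever continuity or measurability these have. Since the lemma is applied to cocycles on $\Lp T=\Lp T_0\times L$ to identify continuous projective representations, this constructive form is the one the paper actually needs, whereas the splitting coming from injectivity of $\TT$ yields a priori only a set-theoretic $f$. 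Your sketched ``by hand'' alternative via the K\"unneth normal form is sound in outline, but once the omitted verification that $d$ is cohomologous to a bicharacter is written out it essentially reproduces the paper's computation.
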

\begin{proof} The proof is basically 
    \cite[Proof of Lemma A.1.2.]{DoHaRo1969II}).
    Because $c\longmapsto \hat c$ is a homomorphism it is enough 
    to show that for $c\in Z^2(G,\TT)$: if 
    $\hat c=1$ and $c_i(g_i,h_i)=b_i(g_i)b_i(h_i)/b_i(g_ih_i)$
    then $c\in B^2(G,\TT)$, i.e 
    $c=\delta b$ with 
    $b\in Z^1(G,\TT)$.
    Indeed, setting     $$
        b(g_1g_2)=\frac{b_1(g_1)b_2(g_2)}{c(g_1,g_2)}
        \equiv\frac{b_1(g_1)b_2(g_2)}{ c(g_2,g_1)}
    $$ for $g_i\in G_i$ we calculate using the cocycle relation:
    \begin{align*}
        c(g_1g_2,h_1h_2) &=\frac{c(g_1,g_2h_1h_2)c(g_2,h_1h_2)}{c(g_1,g_2)}
        \\
        &= \frac{c_1(g_1,h_1)c(g_1h_1,g_2h_2)c(g_2h_2,h_1)c_2(g_2,h_2)}
        {c(g_1,g_2)c(h_1,g_2h_2) c(h_2,h_1)}
        \\
        &= \frac{b(g_1g_2)b(h_1h_2)}{b(g_1h_1g_2h_2)}
        \punkt
    \end{align*}
\end{proof}

Equivalently to say that $\pi$ is a projective representation there is a  true 
representation also denoted by $\pi$ of the group $\tilde G = G\times \TT$ with
multiplicative law $(g_1,t_1)(g_2,t_2) =(g_1g_2, c(g_1,g_2) t_1t_2)$ given by 
$\pi(g,t) = t \pi(g)$.  One calls $\tilde G$ a \emph{central extension of} $G$.

Let $G$ be a Lie group and $\pi$ a continuous projective unitary 
representation of $\Lp G$ on a Hilbert space $\Hil$.  We assume that there is an
action of the rotation, i.e. $\TT$ acts unitarily on $\Hil$ by $U$ such that 
$U(\theta)\pi(f)U(\theta)^\ast = \pi(R_\theta f)$ where 
$R_\theta f(\theta') = {f(\theta'-\theta)}$ for $f\in \Lp G$. 
In other words we assume $\pi$ extends to a representation of 
$\Lp G\rtimes \TT$.
Then $\pi$ is called  \emph{positive energy} (cf. \cite{Se1981,PS1986}\footnote{we use a different convention, which fits with the definition of positive energy for conformal nets}) if 
\begin{align*}\Hil &= \oplus_{n\geq 0} \Hil_n, &
     \Hil_n 
=\{x\in\Hil : U(\theta)x=\e^{\ima n\theta}\} 
\end{align*} 
with $\dim \Hil_n <\infty$ and\footnote{%
    This can 
    be obtained by multiplying a given representation of $\TT$ 
    with a character of $\TT$
} $\Hil_0\neq\{0\}$. 
That means the generator $L_0$ of $U(\theta)=\e^{\ima \theta L_0}$
has positive spectrum.

Let $\Lpc F$ be the central extension of $\Lp F$
defined by the cocycle
\begin{equation*}
    c_F(f,g)  = \e^{-\ima \omega(f,g)}
    = \e^{-\ima/2 \int \langle f, g'\rangle}.
\end{equation*}
Then the conformal net $\A_F$ constructed above can be regarded as 
the conformal net associated with a positive energy representation of $\Lp F$
with cocycle $c_F$ or equivalently a (true) positive energy representation 
of $\Lpc F$.
For $I\in\cI$ we denote by $\Lp_I F$
all loops with support in $I$.
\begin{prop}
    Let $\Hil_{0,F}=\overline{\Lp F/F}$ be the one-particle space associated 
    with $(F, \langle \slot,\slot\rangle)$.
    There is a unitary positive energy representation $\pi$ of $\Lpc F$ on the Fock space $\Hil_F\equiv\ e^{\Hil_{0,F}}$ given by
    $\pi_0:\Lpc F\equiv \Lp F \times \TT \ni(f,c) \longmapsto c\cdot W([f])$, where $[f]\in \Lp F/F\subset \Hil_F$. 
    In particular it is $\A_F(I)=\pi_0(\Lp_I F)''$.
\end{prop}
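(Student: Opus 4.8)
The plan is to verify the three assertions of the proposition in turn: that $\pi_0$ is a strongly continuous unitary representation of $\Lpc F$, that it has positive energy, and that it generates the local algebras $\A_F(I)$.

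First I would check the homomorphism property directly from the Weyl relations $W(f)W(g)=\e^{-\ima\omega(f,g)}W(f+g)$. For $(f,c),(g,d)\in\Lpc F$,
\begin{align*}
    \pi_0(f,c)\,\pi_0(g,d)
    &= cd\,W([f])W([g])
    = cd\,\e^{-\ima\omega([f],[g])}\,W([f]+[g]).
\end{align*}
Since $\omega(f,g)=\Im(f,g)=\tfrac12\int\langle f,g'\rangle$ annihilates constant loops (as $\int g'=0$), it descends to $\Lp F/F$ and agrees there with the cocycle exponent, so that $\e^{-\ima\omega([f],[g])}=c_F(f,g)$ and $[f]+[g]=[f+g]$. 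Hence the right-hand side is $c_F(f,g)\,cd\,W([f+g])=\pi_0\big((f,c)(g,d)\big)$, which is precisely the multiplication law of the central extension; in particular the constant loops act trivially, consistent with $c_F$ being trivial on $F$. Strong continuity of $(f,c)\mapsto c\,W([f])$ then follows from the continuity of $f\mapsto[f]$ into $\Hil_{0,F}$ and the strong continuity of $W$ on the (total) set of coherent vectors.

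Next I would establish positive energy. The rotations act on the one-particle space by $U_0(R(\theta))$, and the induced Fock-space action $U(\theta)=\Gamma(U_0(R(\theta)))$ implements $R_\theta$ via the Bogoliubov relation $\Gamma(V)W(h)\Gamma(V)^\ast=W(Vh)$ together with $U_0(R(\theta))[f]=[R_\theta f]$; since $c$ is central this yields $U(\theta)\pi_0(f,c)U(\theta)^\ast=\pi_0(R_\theta f,c)$, so that $\pi$ extends to $\Lpc F\rtimes\TT$. For the spectrum, $\Hil_{0,F}$ is by construction the lowest-weight-$1$ representation: in the Fourier picture the complex structure $\cJ$ retains exactly the modes $\hat f_k$ with $k\geq1$, on which the rotation generator $L_0$ acts as multiplication by $k$, so its spectrum is $\{1,2,\dots\}$ with each eigenspace of dimension $\dim F$. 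The Fock-space generator is the second quantization of this operator, hence has spectrum in $\{0,1,2,\dots\}$; each energy-$n$ eigenspace is finite-dimensional because there are only finitely many ways to distribute the integer $n$ among modes carrying a fixed positive energy at each level, and the energy-$0$ space is $\CC\Omega\neq\{0\}$. This is exactly the positive energy condition.

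Finally I would identify the local algebras. Here $\pi_0(\Lp_I F)''=\{c\,W([f]):f\in\Lp_I F,\ c\in\TT\}''$, and the central phases $c$ contribute only scalar multiples, so this equals $\{W([f]):f\in\Lp_I F\}''$. By definition $H_F(I)$ is the closure of $\{[f]:f\in\Lp_I F\}$ in $\Hil_{0,F}$, and by Proposition \ref{prop:CCR}(1) the von Neumann algebra generated by the Weyl operators of a real subspace coincides with that of its closure; thus $\{W([f]):f\in\Lp_I F\}''=R(H_F(I))=\A_F(I)$. I expect the only genuinely non-routine point to be the positive energy verification---specifically the finite-dimensionality of the energy eigenspaces on the Fock space---whereas the homomorphism property and the local generation follow immediately from the Weyl relations and Proposition \ref{prop:CCR}.
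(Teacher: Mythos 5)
Your proposal is correct and follows essentially the same route as the paper: the homomorphism property is read off from the Weyl relations matching the cocycle $c_F$, positive energy comes from second quantization of the positive one-particle generator $L_0$, and the identification $\pi_0(\Lp_I F)''=R(H_F(I))=\A_F(I)$ uses Proposition \ref{prop:CCR} together with the density of $\Lp_I F/F$ in $H_F(I)$. You supply somewhat more detail than the paper (descent of $\omega$ to $\Lp F/F$, finite-dimensionality of the energy eigenspaces), but the underlying argument is the same.
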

\begin{proof} $W([f])$ is unitary by construction. Obviously by the Weyl commutation relations $\pi$ is a representation 
    of $\Lpc F$ with the given cocycle. Let $L_0$ be the positive generator
    of the rotations on $\Hil_{0,F}$. The generator of the rotation on $\Hil_F$ 
    is then given 
    by $\tilde L_0=1\oplus L_0 \oplus (L_0\otimes 1 + 1 \otimes L_0) \oplus \cdots$ and in 
    particular positive.
    
    Further for $I\in\cI$ by construction $\pi(\Lp_I F)''=R(\Lp_I F/F)$ which equals $\A(I)\equiv R(H_F(I))$ by 
    Proposition \ref{prop:CCR} and because $\Lp_I F$ is dense in $H_F(I)$
    again by construction.
\end{proof}

\subsection{Conformal Nets Associated with Lattices}\label{sec:ChLaNet}
We want to consider local extensions of the net $\A_F$ associated with 
Abelian currents with values in $F$.
The case of $F=\RR$ (one current) was treated in \cite{BuMaTo1988} and 
the extensions are given by a charge $g= \sqrt{2 N}$ with $N \in \NN$. 
The general case was elaborated in \cite{St1995} with the result that
the extensions are given by even integral lattices (for $n=1$ the lattice is $g\ZZ$). 
The same lattice models were also examined in \cite{DoXu2006},
where they are equivalently defined as a positive energy representation of the 
loop group of the torus associated with the lattice. This gives a connection to
the representations of loop groups at level 1 \cite{Se1981,PS1986}
for simply laced Lie groups.
The lattice models are well known in the framework of vertex operator 
algebras.
For a treatment of lattice models in vertex operator algebras and its 
connection to Kac--Moody algebras 
we refer \eg to \cite[Chapter 5.4]{Ka1998}. 

Let $L$ be an \emph{integral (positive) lattice}, i.e. a free $\ZZ$-module 
with positive-definite integral bilinear form 
$\langle \slot,\slot\rangle:L\times L \longrightarrow \ZZ$. 
A lattice is called \emph{even} if $\langle \alpha,\alpha\rangle \in 2\NN$ 
for all $\alpha \in L$, and we note that an even lattice is necessarily 
integral. 
To a lattice $L$ we relate an Euclidean space 
$(F,\langle \slot,\slot\rangle)$ where $F=L\otimes_\ZZ \RR$ and the scalar 
product $\langle \slot,\slot\rangle$ is continued 
to $F\times F \longrightarrow \RR$ by 
linearity. The dimension $n=\dim F$ is called the \emph{rank} 
(assumed to be finite).

Equivalently, we can view an even lattice $L$ as
a free discrete subgroup of a finite dimensional Euclidean space 
$(F,\langle \slot,\slot\rangle)$ which spans $F$ 
and satisfies 
$\langle \alpha,\alpha \rangle \in 2\NN$ for all $\alpha \in L$.
Let $L$ be even and $L^\ast := \lbrace x\in V : \langle x, L \rangle \subset \ZZ\rbrace$ be the dual lattice \cite{CoSl1998}. 
It is a not necessarily an integer lattice and 
can canonically be identified  
with $\Hom(L,\ZZ)$ by the scalar product. 
It is $L\subset L^\ast$ and it can be shown that the group 
$L^\ast/L$ is finite. In the case $L^\ast =L$ the lattice is called
    \emph{self-dual} or 
    \emph{unimodular}
and in this can be the case only for rank $n\in 8\NN$.

With an even lattice $L$ we associate a torus $T=F/2\pi L$, and we will 
represent elements by $\e^{\ima f}$ with $f\in F$ and $\e^{\ima f}=1$ if
and only if $f\in L$, formally
\begin{align*}
    F / 2\pi L  &\longbijects T,
    ~    [t] \longmapsto \e^{\ima t}
    \punkt
\end{align*}

\subsubsection{Loop group associated with a torus}
Let $\Lp T=C^\infty(\Sc, T)$ the loop group associated with the torus $T$.
We write  $\e^{\ima f}$ for an element in $\Lp T$ where we mean the function 
$\e^{\ima\theta} \longmapsto 
\e^{\ima f(\theta)}$ and $f:\RR \longrightarrow F$ is a smooth function such 
that
the winding number 
\begin{equation*}
    \Delta_f:= \frac{1}{2\pi} (f(\theta + 2\pi) -f(\theta))
\end{equation*} 
is constant and takes values in $L$. 
In particular $f_\circ:{\theta} \longmapsto f(\theta) - \Delta_f\cdot \theta$ 
is a periodic function and we can decompose 
\begin{equation*}
    f(\theta) = \Delta_f\cdot \theta + f_0 + \sum_{n\in \ZZ^\ast} f_n \e^{\ima n\theta}
\end{equation*}
where we call $f_0$ the \emph{zeroth-mode}.

We are interested in projective positive energy representations of $\Lp T$ or equivalently representations of a central extension:
\begin{align*}
    1 \longrightarrow \TT \longrightarrow \Lpc T\longrightarrow \Lp T \longrightarrow 1 
\end{align*}
which are given by a cocycle $c:\Lp T \times \Lp T \longrightarrow \TT$
specified in the following.

It is well-known (see for example \cite{Ka1998}) that there existss a bilinear 
form $b:L\times L \longrightarrow \ZZ_2$ such that
$$
    b(\alpha,\alpha) = \frac 12 \langle \alpha,\alpha\rangle 
    \quad \text{for all } \alpha \in L
    \komma
$$
\eg if $\{\alpha_1,\ldots,\alpha_n\}$ is a basis of $L$ one can choose
\begin{align*}
    b(\alpha_i,\alpha_j) &= 
        \begin{cases} 
            \langle \alpha_i,\alpha_j \rangle \mod 2 & i<j\\
            \frac 12 \langle \alpha_i,\alpha_i\rangle \mod 2 & i=j\\
            0 &i>j 
            \punkt
        \end{cases}
\end{align*}
Therefore a bimultiplicative map
$\varepsilon(\alpha,\beta) :L\times L \longrightarrow \lbrace+1,-1\rbrace \cong \ZZ_2$ exists, 
satisfying
$\varepsilon(\alpha,\alpha) = (-1)^{\langle \alpha,\alpha\rangle/2 }$. 
Such a map is a 2-cocycle satisfying:
\begin{align*}
    \varepsilon(\alpha,\beta+\gamma)\varepsilon(\beta,\gamma) 
        &=\varepsilon(\alpha,\beta)\varepsilon(\alpha+\beta,\gamma)\\
    \varepsilon(\alpha,\beta)\varepsilon(\beta,\alpha)
        &=(-1)^{\langle\alpha,\beta\rangle}
        \punkt
\end{align*}
Now we specify the central extension $\Lpc T$ by choosing  a 2-cocycle $c:\Lp T \times \Lp T \longrightarrow \TT$ as in \cite{Se1981}
\begin{align}
    \label{eq:cocy}
    c(\e^{\ima f}, \e^{\ima g}) \equiv c(f,g) &= \varepsilon(\Delta_f,\Delta_g)\e^{\ima S(f,g)}\\
    2\cdot S(f,g) &=  \int_0^{2\pi} \langle f'(\theta), g(\theta)\rangle \frac{\dd\theta}{2\pi}
    + \langle \Delta_f, g(0)\rangle
    \punkt
    \nonumber
\end{align}
We note that the central extension (up to equivalence) 
does not depend on the explicit choice of the 2-cocycle in its equivalence 
class. Further 
we write the relations in $\Lpc T$ formally as $\e^{\ima f} \e^{\ima g} = c(\e^{\ima f}, \e^{\ima g})\e^{\ima (f + g)}$.
It is straightforward  to verify the following relations.
\begin{lem}[\cf \cite{DoXu2006}] \label{lem:LGrel}
    Let $\e^{\ima f},\e^{\ima g} \in \Lp T$, then
    we have the following relations in $\Lpc T$: 
\begin{align*}
    \e^{\ima f}\e^{\ima g} \left(\e^{\ima f}\right)^{-1} &= 
    \e^{\ima \pi \langle\Delta_f,\Delta_g\rangle}
    \e^{\ima \int \langle f'_1,g_1\rangle}\e^{\ima \langle\Delta_f, g_0\rangle -\ima\langle \Delta_g, f_0\rangle}
    \e^{\ima g}
\end{align*}
\end{lem}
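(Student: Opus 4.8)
The plan is to reduce the conjugation identity to the computation of the commutator map of the cocycle $c$ and then to evaluate that map mode by mode. First I would note that for any central extension given by $\e^{\ima f}\e^{\ima g}=c(f,g)\e^{\ima(f+g)}$ one has
\[
    \e^{\ima f}\e^{\ima g}\left(\e^{\ima f}\right)^{-1}
    =\frac{c(f,g)}{c(g,f)}\,\e^{\ima g}
    =\hat c(f,g)\,\e^{\ima g},
\]
so it suffices to show that $\hat c(f,g)=c(f,g)c(g,f)^{-1}$ equals the scalar factor in the statement. The $\varepsilon$-part is immediate: since $\varepsilon$ takes values in $\{\pm1\}$ and satisfies $\varepsilon(\alpha,\beta)\varepsilon(\beta,\alpha)=(-1)^{\langle\alpha,\beta\rangle}$, we get $\varepsilon(\Delta_f,\Delta_g)\varepsilon(\Delta_g,\Delta_f)^{-1}=(-1)^{\langle\Delta_f,\Delta_g\rangle}=\e^{\ima\pi\langle\Delta_f,\Delta_g\rangle}$, which is the first factor. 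It remains to identify $S(f,g)-S(g,f)$ with $\langle\Delta_f,g_0\rangle-\langle\Delta_g,f_0\rangle+\int\langle f_1',g_1\rangle$.

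For this I would use the decomposition $f(\theta)=\Delta_f\,\theta+f_0+f_1(\theta)$, where $f_1(\theta)=\sum_{n\neq0}f_n\e^{\ima n\theta}$ is the purely oscillating part (so that $f'=\Delta_f+f_1'$ and $f(0)=f_0+f_1(0)$), and expand the integral $\int_0^{2\pi}\langle f',g\rangle\frac{\dd\theta}{2\pi}$ term by term. The only integrals that survive are $\int_0^{2\pi}\theta\frac{\dd\theta}{2\pi}=\pi$, giving the symmetric winding term $\pi\langle\Delta_f,\Delta_g\rangle$; the integration by parts $\int_0^{2\pi}\theta f_1'\frac{\dd\theta}{2\pi}=f_1(0)$, which uses periodicity of $f_1$ together with $\int f_1=0$ and produces the cross term $\langle f_1(0),\Delta_g\rangle$; and the oscillating pairing $\int\langle f_1',g_1\rangle$. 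All remaining mixed integrals pair a zero-mean factor against a constant and vanish by orthogonality.

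Antisymmetrizing then gives the result. The symmetric term $\pi\langle\Delta_f,\Delta_g\rangle$ drops out of $S(f,g)-S(g,f)$; the periodic integration-by-parts identity $\int\langle g_1',f_1\rangle=-\int\langle f_1',g_1\rangle$ doubles the oscillating pairing to $2\int\langle f_1',g_1\rangle$; and --- this is the point that needs the most care --- the endpoint term $\langle\Delta_f,g(0)\rangle-\langle\Delta_g,f(0)\rangle$ deliberately built into $S$ is exactly what cancels the cross contributions $\langle f_1(0),\Delta_g\rangle$ and $\langle g_1(0),\Delta_f\rangle$ while it reinforces the zero-mode terms $\langle\Delta_f,g_0\rangle$ and $\langle\Delta_g,f_0\rangle$. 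After dividing the resulting expression for $2\bigl(S(f,g)-S(g,f)\bigr)$ by $2$ one is left precisely with $\langle\Delta_f,g_0\rangle-\langle\Delta_g,f_0\rangle+\int\langle f_1',g_1\rangle$, which together with the $\varepsilon$-factor yields the claimed formula. The computation is entirely elementary; the only genuine subtlety is the bookkeeping of these endpoint terms, \ie verifying that the $\langle\Delta_f,g(0)\rangle$ summand in $S$ cancels rather than augments the cross terms, which is what makes the clean winding/zero-mode/oscillation splitting of the final answer work.
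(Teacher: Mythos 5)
Your argument is correct and follows essentially the same route as the paper: reduce the conjugation to the commutator map $\hat c(f,g)=c(f,g)c(g,f)^{-1}$ (the paper gets there by writing $(\e^{\ima f})^{-1}=c(f,f)\e^{-\ima f}$ and applying the cocycle relation, you by comparing $\e^{\ima f}\e^{\ima g}$ with $\e^{\ima g}\e^{\ima f}$, which is marginally more direct), and then antisymmetrize the mode expansion of $S$ exactly as the paper does. One bookkeeping caveat: your list of surviving integrals omits the constant-against-constant term $\int_0^{2\pi}\langle\Delta_f,g_0\rangle\frac{\dd\theta}{2\pi}=\langle\Delta_f,g_0\rangle$, which does survive (it is not a ``mixed'' integral pairing a zero-mean factor against a constant) and is precisely the term that the endpoint contribution $\langle\Delta_f,g(0)\rangle$ ``reinforces''; if it were dropped, the antisymmetrization would produce $\tfrac12\langle\Delta_f,g_0\rangle-\tfrac12\langle\Delta_g,f_0\rangle$ instead of the stated coefficients, so it must be kept, as your final summary implicitly does.
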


\begin{proof}
    We observe that $(\e^{\ima f})^{-1}=
    c(\e^{\ima f},\e^{-\ima f})^{-1} \e^{-\ima f} =  c(f,f)\e^{-\ima f}$
    and we get:
    \begin{align}
        \nonumber \e^{\ima f} \e^{\ima g}(\e^{\ima f})^{-1} &= c(f,g)c(g,-f)
        c(f,-f)c(f,f) \e^{\ima g}
        \\\nonumber &= c(f,g)c(g,f)^{-1} \e^{\ima g}
        \\&= (-1)^{\langle \Delta_f,\Delta_g\rangle} \e^{\ima(S(f,g)-S(g,f))} 
        \e^{\ima g}
        \punkt
        \label{eq:cocyc1}     
    \end{align}
    Using $f(\theta)=\Delta_f\cdot \theta + f_0 + f_1(\theta)$ and $f(\theta)=\Delta_g\cdot \theta +g_0 + g_1(\theta)$
    we have
    \begin{align*}
        \label{eq:Sexplicit}
        S(f,g) &= \frac{1}{4\pi} \int_0^{2\pi} \langle f_1'(\theta),g_1(\theta)\rangle \dd \theta
        + \frac \pi 2 \langle\Delta_f,\Delta_g\rangle + \frac12 \langle \Delta_f, g_0\rangle+ {}\\ &
        +\frac12 \langle f_1(2\pi),\Delta_g\rangle + \frac12 \langle \Delta_f,g_1(0)\rangle
    \end{align*}
    and this gives
    \begin{align*}
        S(f,g)-S(g,f) 
        &= \frac{1}{2\pi} \int_0^{2\pi} \langle f_1'(\theta),g(\theta)\rangle \dd \theta
        + \frac 12 \langle \Delta_f, g_0\rangle
        - \frac 12 \langle \Delta_g, f_0\rangle
        \\&+\frac12 \langle \Delta_f, g(0)-g_1(0)\rangle
        -\frac12 \langle \Delta_g, f(0)-f_1(0)\rangle
        \\&= \frac{1}{2\pi} \int_0^{2\pi} \langle f_1'(\theta),g_1(\theta)\rangle \dd \theta
        + \langle \Delta_f, g_0\rangle
        - \langle \Delta_g, f_0\rangle
    \end{align*}
    which inserted in \eqref{eq:cocyc1} completes the proof. 
\end{proof}

The following Proposition is proved in \cite[Proposition 3.4]{DoXu2006} and
shows that the central extension is local, i.e. loops supported in disjoint
intervals commute.
\begin{prop}[{Locality cf. \cite[Prop. 3.4]{DoXu2006}}] If $\supp \e^{\ima f}\cap \supp \e^{\ima g} = \emptyset$
    then $\e^{\ima f}\e^{\ima g} = \e^{\ima g } \e^{\ima f}$.
    \label{prop:locality}
\end{prop}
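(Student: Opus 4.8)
The plan is to reduce the assertion to the triviality of the group commutator and then to evaluate that commutator with Lemma~\ref{lem:LGrel}. Rewriting its right-hand side as $\e^{\ima f}\e^{\ima g}(\e^{\ima f})^{-1} = \hat c(f,g)\,\e^{\ima g}$ with the central scalar
\[
    \hat c(f,g) = (-1)^{\langle\Delta_f,\Delta_g\rangle}\,\e^{\ima(S(f,g)-S(g,f))}\in\TT,
\]
one sees that $\e^{\ima f}\e^{\ima g}=\e^{\ima g}\e^{\ima f}$ holds exactly when $\hat c(f,g)=1$. Hence everything comes down to showing that, under $\supp\e^{\ima f}\cap\supp\e^{\ima g}=\emptyset$, the sign $(-1)^{\langle\Delta_f,\Delta_g\rangle}$ produced by $\varepsilon$ is cancelled by the phase coming from the symmetric form $S$.

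First I would extract what disjointness gives. Outside its support $\e^{\ima f}\equiv 1$, so on the complementary arc $f$ is valued in the period lattice $2\pi L=\ker(\e^{\ima\slot})$ and, being continuous into a discrete set, is locally constant; in particular $f'$ vanishes off its support, so $f'$ is supported in a proper interval $I$, and likewise $g'$ is supported in $J$ with $I\cap J=\emptyset$. Since the closed supports are disjoint there is a gap containing no point of either, so I may place a basepoint $0\notin\overline I\cup\overline J$ and fix lifts adapted to the supports, taking $f$ and $g$ to vanish near $0$ and to be ($2\pi L$-valued) constants on the complements of their supports; this makes the subsequent bookkeeping of the cocycle unambiguous.

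Next I would evaluate $S(f,g)-S(g,f)$ straight from the definition of $S$. Since $f'$ is supported in $I$, on which $g$ has a constant value $g_I$, and since $\int_I f'\,\dd\theta$ is precisely the winding $2\pi\Delta_f$, one obtains $\tfrac1{2\pi}\int_0^{2\pi}\langle f',g\rangle\,\dd\theta=\langle\Delta_f,g_I\rangle$, and symmetrically $\tfrac1{2\pi}\int_0^{2\pi}\langle g',f\rangle\,\dd\theta=\langle\Delta_g,f_J\rangle$. The only delicate point---and the crux of the argument---is that in the chosen parametrisation, sweeping from the basepoint $0$ to $I$ passes across exactly one of the two supports: if that support is $J$, then the winding of $g$ accumulated in between makes $g_I$ and $g(0)$ differ by the full period $2\pi\Delta_g$. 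Threading this shift through the two terms $\tfrac1{2\pi}\int\langle f',g\rangle$ and $\langle\Delta_f,g(0)\rangle$ of $2S(f,g)$ (and the analogous terms of $2S(g,f)$, where no sweep across $I$ occurs) leaves a net contribution $\pi\langle\Delta_f,\Delta_g\rangle$, all remaining terms being pairings of the winding vectors $\Delta_f,\Delta_g\in L$ against values in $2\pi L$ and therefore multiples of $2\pi$. Thus $\e^{\ima(S(f,g)-S(g,f))}=\e^{\ima\pi\langle\Delta_f,\Delta_g\rangle}=(-1)^{\langle\Delta_f,\Delta_g\rangle}$.

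Combining the two factors gives $\hat c(f,g)=(-1)^{\langle\Delta_f,\Delta_g\rangle}\cdot(-1)^{\langle\Delta_f,\Delta_g\rangle}=1$, which is the claimed commutativity. I expect the main obstacle to be exactly the winding bookkeeping of the previous paragraph: keeping track of the lifts so that the difference between $g_I$ and $g(0)$ (and between $f_J$ and $f(0)$) is correctly recorded, since it is precisely this winding term that annihilates the sign contributed by $\varepsilon$---which is, after all, the reason $\varepsilon$ was built into the cocycle $c$ in the first place. The remaining care is merely organisational: treating the possible cyclic orders of $0$, $I$ and $J$, for each of which the same cancellation occurs because $\e^{\ima\pi n}=\e^{-\ima\pi n}$ for $n\in\ZZ$.
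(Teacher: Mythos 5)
Your argument is correct, and it is worth noting that the paper itself offers no proof of this proposition at all: it simply defers to \cite[Proposition 3.4]{DoXu2006}. So what you have written is a self-contained verification of a statement the paper only cites. Your route is the natural one and, I believe, essentially the Dong--Xu computation: reduce commutativity to the vanishing of the antisymmetric part $\hat c(f,g)=c(f,g)c(g,f)^{-1}$ (exactly as in the paper's own proof of Lemma \ref{lem:LGrel}), then evaluate $S(f,g)-S(g,f)$ directly from the definition of $S$ rather than from the mode decomposition $f=\Delta_f\theta+f_0+f_1$ used in that Lemma. That choice is the right one: the zero mode $g_0$ and the periodic part $g_1$ are not controlled by the support hypothesis, whereas $f'$ and the boundary values of $g$ are, so the formula of Lemma \ref{lem:LGrel} cannot be applied blindly here. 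Your key observations --- that $f'$ vanishes off $\supp\e^{\ima f}$, that on $\supp f'$ the lift $g$ takes values in $2\pi L$, and that the leftover $\pm\pi\langle\Delta_f,\Delta_g\rangle$ exactly cancels $\varepsilon(\Delta_f,\Delta_g)\varepsilon(\Delta_g,\Delta_f)=(-1)^{\langle\Delta_f,\Delta_g\rangle}$ --- are precisely the point of building $\varepsilon$ into the cocycle \eqref{eq:cocy}, as you say.

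One small caveat on completeness: you assume $g$ takes a single constant value $g_I\in 2\pi L$ on all of $\supp f'$, which holds when the two supports are intervals (the only case the paper actually uses, via $\Lp_I T$), but not for general disjoint closed supports, where $\supp f'$ may meet several components of the complement of $\supp\e^{\ima g}$ carrying different values of $g$. The fix is routine and does not disturb your conclusion: writing
\[
  \int_0^{2\pi}\langle f',g\rangle\,\dd\theta=\sum_U\Bigl\langle\int_U f'\,\dd\theta,\;g_U\Bigr\rangle,
\]
each endpoint of a component $U$ lies outside $\supp\e^{\ima f}$, so $\int_U f'\in 2\pi L$ while $g_U\in 2\pi L$; hence every summand lies in $4\pi^2\ZZ$ and the whole integral contributes a multiple of $2\pi$ to $S(f,g)-S(g,f)$, leaving only the $\pm\pi\langle\Delta_f,\Delta_g\rangle$ term modulo $2\pi$, exactly as in your interval-by-interval bookkeeping.
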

For $I\in\cI$ we denote by 
$\Lp_I T = \{ \e^{\ima f} \in \Lp T : \supp \e^{\ima f} \subset I\}$ 
all loops with support in $I$ and by $\Lpc_I T$ the preimage of $\Lp_I T$ 
under the covering map. 
Using this locality and well-known results of positive energy representations
of loop groups it is shown in \cite{DoXu2006} that there is a conformal 
net associated with $\Lpc T$, precisely: 
\begin{prop}[Local conformal net associated with $\Lpc T$ cf. \cite{DoXu2006}] 
    There is a correspondence between the elements of $L^\ast/L$ and 
    positive energy representation of $\Lpc T$. Let $\pi_{(L,0)}$ be the (vacuum) representation 
    corresponding to $[0]\in L^\ast /L$, then
    \begin{align*}
        I \longmapsto \A_{\Lp T}(I):= \pi_{(L,0)}(\Lpc_I T)'' 
    \end{align*}
    is a completely rational conformal net with $\mu$-index $\mu=|L^\ast/L|$ 
    and has $\mu$ sectors of statistical dimension $1$ corresponding to the 
    positive energy representations of $\Lpc T$.
\end{prop}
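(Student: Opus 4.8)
The plan is to realize $\A_{\Lp T}$ as a local extension of the Abelian current net $\A_F$ over $F=L\otimes_\ZZ\RR$ (following \cite{DoXu2006}) and to read off every assertion from the sector structure of $\A_F$ recorded above. First I would make $\pi_{(L,0)}$ explicit. Splitting a loop $\e^{\ima f}\in\Lp T$ into its winding $\Delta_f\in L$, zero mode $f_0$, and oscillator part $f_1$, the loops with $\Delta_f=0$ and $f_0=0$ are exactly the image of $\Lp F$ and generate a copy of the current net, while the winding generators shift the charge by an element of $L$. Accordingly $\pi_{(L,0)}$ lives on $\Hil_{(L,0)}=\bigoplus_{\alpha\in L}\Hil_F^{(\alpha)}$, each summand carrying the charge-$\alpha$ sector $[\alpha]$ of $\A_F$ (legitimate since $\alpha\in L\subset F$), with the Fock vacuum of the $\alpha=0$ summand playing the role of $\Omega$. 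Isotony is monotonicity of supports; \textbf{locality} is exactly Proposition~\ref{prop:locality}; M\"obius covariance and positivity of energy come from the geometric rotation action, whose generator is positive on each Fock summand as in the free construction (Proposition~\ref{prop:NetStd} and the second quantization of \cite{Lo2}); and $\Omega$ is cyclic once the charge-shifting generators are adjoined. Hence $I\mapsto\A_{\Lp T}(I)$ is a conformal net.

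Next I would pin down the extension $\A_F\subset\A_{\Lp T}$ and the role of evenness. The zero-winding subnet equals $R(H_F(I))=\A_F(I)$ on the vacuum summand by the Weyl relations and Proposition~\ref{prop:CCR}, and on $\Hil_{(L,0)}$ it acts as $\bigoplus_{\alpha\in L}[\alpha]$; the full net is generated over it by the charge-shifting unitaries for $\alpha\in L$, so $\A_{\Lp T}$ is the extension whose ``charge algebra'' is $\bigoplus_{\alpha\in L}[\alpha]$ with the fusion $[\alpha]\times[\beta]=[\alpha+\beta]$ from the excerpt. The single place where the hypothesis on $L$ enters is the locality of this extension: the relative and self statistics of the $[\alpha]$ are governed by the phases in Lemma~\ref{lem:LGrel}, and the extension is local precisely when $\langle\alpha,\beta\rangle\in\ZZ$ and $\langle\alpha,\alpha\rangle\in2\NN$, \ie when $L$ is an even integral lattice---which is exactly the content of the bimultiplicative $\varepsilon$ with $\varepsilon(\alpha,\alpha)=(-1)^{\langle\alpha,\alpha\rangle/2}$.

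I would then classify the sectors and compute the index. A positive energy representation of $\Lpc T$ is fixed by the joint spectrum of the zero modes compatible with the cocycle $c$ (\cf \cite{Se1981,PS1986}); a current-net charge $q\in F$ extends to $\A_{\Lp T}$ exactly when it has trivial monodromy with the extension algebra, \ie $\langle q,\alpha\rangle\in\ZZ$ for all $\alpha\in L$, which says $q\in L^\ast$, while charges differing by an element of $L$ give the same sector because $L$-charges are absorbed into the vacuum. This yields the bijection between sectors of $\A_{\Lp T}$ and $L^\ast/L$, a finite group since $L\subset L^\ast$ has finite index. The fusion $[q]\times[r]=[q+r]$ descends to the group law of $L^\ast/L$, so every sector is invertible and of statistical dimension $1$; equivalently, the representation category of $\A_{\Lp T}$ is the pointed category of local modules of the commutative algebra $\bigoplus_{\alpha\in L}[\alpha]$ inside the sectors of $\A_F$, labelled by $L^\ast/L$.

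Finally, for complete rationality I would inherit the split property from $\A_F$ (the $\U(1)$-current net and its finite tensor powers are split, and the split property passes to the finite extension) and establish strong additivity; with finitely many sectors, all of dimension $1$, the two-interval index is finite and, by the global-index theorem of \cite{KaLoMg2001}, equals $\sum_i d_i^2=|L^\ast/L|$, so $\mu_{\A_{\Lp T}}=|L^\ast/L|$ and the net is completely rational. The main obstacle is the completeness half of the classification: showing that the positive energy representations of $\Lpc T$ (equivalently the local $L$-modules) exhaust \emph{all} sectors of $\A_{\Lp T}$ and that none are missed requires genuine loop-group input beyond formal manipulation of $\A_F$, and relatedly proving strong additivity of the extension directly---rather than merely importing it---is the delicate analytic point where I would expect most of the work to lie.
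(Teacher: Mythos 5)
Your outline follows essentially the same route as the paper, which itself delegates every substantive point to \cite{DoXu2006}: locality from Proposition~\ref{prop:locality}, the realization of $\A_{\Lp T}$ as the lattice extension of $\A_F$, the sector correspondence with $L^\ast/L$, and the $\mu$-index $|L^\ast/L|$. You also correctly identify, and honestly flag, the two points that cannot be obtained by formal manipulation of $\A_F$ alone --- the completeness of the sector classification and strong additivity --- which are precisely the statements the paper imports from \cite{DoXu2006} (their Propositions 3.1 and 3.15 and Corollary 3.19).

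Two steps in your sketch would not survive as written. First, the split property cannot be ``inherited through the finite extension'': the inclusion $\A_F(I)\subset\A_{\Lp T}(I)$ is a crossed product by the free abelian group $L\cong\ZZ^n$, hence has \emph{infinite} Jones index, and the split property does not pass to infinite-index extensions in general. Dong--Xu establish it directly on $\Hil_L$ via a trace-class condition on $\e^{-sL_0}$ (the $L_0$-eigenspaces on the full lattice Hilbert space still have controlled growth), and that is the argument you would need. Second, your derivation of $\mu=|L^\ast/L|$ inverts the logic of \cite{KaLoMg2001}: the identity $\mu_\A=\sum_i d_i^2$ is a \emph{consequence} of complete rationality (strong additivity, split, and $\mu_\A<\infty$), so it cannot be used to prove that the two-interval index is finite. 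One must compute $\mu_{\A_{\Lp T}}$ directly --- e.g.\ from $\mu_{\A_F}$-type data for the two-interval inclusion of the extension, as in \cite[Corollary 3.19]{DoXu2006} --- and only then does the KLM machinery confirm that the $|L^\ast/L|$ sectors you exhibited exhaust the representation category. With those two repairs your argument matches the one the paper is citing.
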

\begin{proof}
    For the first statement see \cite[Lemma 3.5]{DoXu2006}
    and \cite[Section 9.5]{PS1986}.
    $\A_{\Lp T}$ is a local net by Proposition \ref{prop:locality}
    and \cite[Proposition 3.1]{DoXu2006}
    shows that it is a strongly additive conformal net fulfilling
    the split property. 
    \cite[Proposition 3.15]{DoXu2006} shows the correspondence between sectors
    and elements of
    $L/L^\ast$  and the $\mu$ index is given in 
    \cite[Corollary 3.19]{DoXu2006}.
\end{proof}
\begin{rmk}
We note that the construction depends only on $L$ and we denote this net also 
by $\A_L$, the \emph{conformal net associated with the lattice $L$}.
\end{rmk}

In the rest of the section we give the construction in a more explicit manner. 
In particular the Hilbert space $\Hil_L$ of $\A_L$ can naturally be
identified with $L$ copies of the Hilbert space $\Hil_F$ of $\A_F$ 
(more precisely $\ell^2(L,\Hil_F)$) where $F=L\otimes_\ZZ \RR$.
This enables us to show that $\A_L(I)$ is a crossed product of 
$\A_F(I)$ with $L$.

The identity component $(\Lpc T)_0$ of $\Lpc T$ can be identified with 
$H_F \times T$, 
where $H_F=\Lp F/F\times \TT$ is the Heisenberg group with 
multiplication law $(f,c_1)(g,c_2)=(f+g,\e^{-\ima/2\omega(f,g)}c_1c_2)$.
The representation $\pi_0$ of $\Lpc F$ is a representation of $H_F$ because 
the constant loops lie in the kernel of the representation and it turns out
to be the unique irreducible representation 
(cf. proof of 9.5.10 \cite{PS1986}) with positive energy. 
Let $\tilde W = H_F\times F$ (the idea is to add an operator $Q$ which 
measures the charge). 
All irreducible representations of positive energy of $\tilde W$ are 
classified by a charge $\alpha \in F$ and are of the form
$(\pi_\alpha, (\Hil_F)_\alpha)$ given by 
$\pi_\alpha(f,v) = \e^{-\ima 2\pi \langle \alpha, v\rangle} \pi_0(f)$.
As a set it is $\Lpc T \equiv (\Lpc T)_0 \times L$ and it is shown in 
\cite{PS1986} that all irreducible representations of $\Lpc T$ of positive 
energy are given by points $\lambda\in L^\ast/L$ and are acting on the 
Hilbert space
\begin{equation*}
    \Hil_{(L,\lambda)} =\bigoplus_{\alpha\in \lambda + L} (\Hil_F)_{\alpha}
    \punkt
\end{equation*}

The Hilbert space $\Hil_{(L,0)}$ on which $\Lpc T$ acts is graded by the 
lattice $L$, and we call $\alpha$ the charge of the subspace 
$(\Hil_F)_{\alpha}$ of $\Hil_{(L,0)}$. 
We define for $\alpha\in L$ charge shift operators $\Gamma_\alpha$ by 
$(\Gamma_\alpha x)_\beta = x_{\beta-\alpha}$ and introduce the unbounded 
charge operator $Q$ satisfying $(Qx)_\alpha = \alpha x$.
The $\Gamma_\alpha$ does not fulfill exactly the commutation relations 
suitable for the representation of $\Lpc T$. But the commutation relations 
between different $\Gamma_\alpha$ can be changed by a so called Klein 
transformation. 
Let $\eta:L\times L \longrightarrow \TT$ be a bimultiplicatice map (2-cocycle) 
and  $\tilde\Gamma_{\alpha} = \eta(-Q,\alpha)\Gamma_{\alpha}$ then 
we have:
\begin{lem}\label{lem:projL}
    $\alpha \longmapsto \tilde\Gamma_{\alpha}$ defines a representation 
    of the central extension $\tilde L$ of $L$ by the cocycle 
    $\eta(\slot,\slot)$.
\end{lem}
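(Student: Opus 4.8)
The plan is to check the single algebraic identity
\[
    \tilde\Gamma_\alpha \tilde\Gamma_\beta = \eta(\alpha,\beta)\, \tilde\Gamma_{\alpha+\beta}
    \qquad (\alpha,\beta\in L),
\]
since this is precisely the defining relation for a unitary representation of the central extension $\tilde L$ determined by the $2$-cocycle $\eta$ (recall that a bimultiplicative $\TT$-valued map automatically satisfies the cocycle condition, so $\tilde L$ is well defined). First I would record two elementary facts about the untwisted shifts $\Gamma_\alpha$. From $(\Gamma_\alpha x)_\beta = x_{\beta-\alpha}$ one reads off that each $\Gamma_\alpha$ is unitary and that $\Gamma_\alpha\Gamma_\beta=\Gamma_{\alpha+\beta}$, so $\alpha\mapsto\Gamma_\alpha$ is already an honest (non-projective) representation of $L$. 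Secondly, $\Gamma_\alpha$ maps the charge sector $(\Hil_F)_\beta$ onto $(\Hil_F)_{\alpha+\beta}$, whence $Q\Gamma_\alpha=\Gamma_\alpha(Q+\alpha)$.

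The key step is the resulting covariance relation between $\Gamma_\alpha$ and any function of the charge operator: evaluating on each sector gives $\Gamma_\alpha\, g(Q)=g(Q-\alpha)\,\Gamma_\alpha$ for every Borel function $g$ on $L$. Applying this with $g(Q)=\eta(-Q,\beta)$ and using that $\eta$ is multiplicative in its first slot, $\eta(\alpha-Q,\beta)=\eta(-Q,\beta)\eta(\alpha,\beta)$, I obtain $\Gamma_\alpha\,\eta(-Q,\beta)=\eta(-Q,\beta)\,\eta(\alpha,\beta)\,\Gamma_\alpha$. Inserting the definition $\tilde\Gamma_\alpha=\eta(-Q,\alpha)\Gamma_\alpha$ and pushing $\Gamma_\alpha$ to the right then yields
\begin{align*}
    \tilde\Gamma_\alpha\tilde\Gamma_\beta
    &= \eta(-Q,\alpha)\,\Gamma_\alpha\,\eta(-Q,\beta)\,\Gamma_\beta
    = \eta(-Q,\alpha)\,\eta(-Q,\beta)\,\eta(\alpha,\beta)\,\Gamma_\alpha\Gamma_\beta \\
    &= \eta(-Q,\alpha+\beta)\,\eta(\alpha,\beta)\,\Gamma_{\alpha+\beta}
    = \eta(\alpha,\beta)\,\tilde\Gamma_{\alpha+\beta},
\end{align*}
where I used multiplicativity of $\eta$ in its second slot to combine $\eta(-Q,\alpha)\eta(-Q,\beta)=\eta(-Q,\alpha+\beta)$, together with $\Gamma_\alpha\Gamma_\beta=\Gamma_{\alpha+\beta}$ and $\tilde\Gamma_{\alpha+\beta}=\eta(-Q,\alpha+\beta)\Gamma_{\alpha+\beta}$.

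This is the claimed identity, so $\alpha\mapsto\tilde\Gamma_\alpha$ defines a unitary representation of $\tilde L$. There is no genuine obstacle here; the whole content is bookkeeping, and the only points demanding care are fixing the correct sign in the covariance relation $\Gamma_\alpha g(Q)=g(Q-\alpha)\Gamma_\alpha$ (equivalently, the direction of the charge shift) and applying the bimultiplicativity of $\eta$ in the correct argument at each step---in the first slot to produce the scalar $\eta(\alpha,\beta)$, and in the second slot to recombine into $\tilde\Gamma_{\alpha+\beta}$.
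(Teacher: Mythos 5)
Your proof is correct and follows essentially the same route as the paper: both reduce the claim to the identity $\tilde\Gamma_\alpha\tilde\Gamma_\beta=\eta(\alpha,\beta)\tilde\Gamma_{\alpha+\beta}$ and verify it by commuting $\Gamma_\alpha$ past $\eta(-Q,\beta)$ via $\Gamma_\alpha\, g(Q)=g(Q-\alpha)\,\Gamma_\alpha$ and then using bimultiplicativity of $\eta$ in each slot. The only difference is that you spell out the intermediate covariance relation and the additivity $\Gamma_\alpha\Gamma_\beta=\Gamma_{\alpha+\beta}$ explicitly, which the paper leaves implicit.
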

\begin{proof} 
    We note that $\tilde L = L \times \TT$ with multiplication law 
    $$(\alpha,c)(\beta,d) = (\alpha+\beta,\eta(\alpha,\beta)cd)$$ and the
    representation is obtained by applying 
    $(\alpha,c) \longmapsto c \Gamma_\alpha$. 
    Indeed we calculate
    \begin{align*}
        \tilde\Gamma_\alpha\tilde\Gamma_\beta 
            &= \eta(-Q,\alpha)\Gamma_\alpha \eta(-Q,\beta)\Gamma_\beta
        \\  &= \eta(-Q,\alpha)\eta(-Q+\alpha,\beta)\Gamma_\alpha \Gamma_\beta
        \\  &= \eta(\alpha,\beta) \eta(-Q,\alpha+\beta)\Gamma_{\alpha+\beta}
        \\  &= \eta(\alpha,\beta)\tilde \Gamma_{\alpha+\beta}
        \punkt \qedhere
    \end{align*}
\end{proof}
We choose $\eta(\alpha,\beta)= c(\e^{\ima t_\alpha},\e^{\ima t_\beta})$ 
where $t_\alpha(\theta) =\alpha\cdot \theta$ and get a representation 
of $\{(\e^{\ima t_\alpha},c) : \alpha \in L\} \subset \Lpc T$
by $(\e^{\ima t_\alpha},c) \longmapsto c\cdot \tilde\Gamma_{\alpha}$.
\begin{prop} 
    The vacuum representation of $\Lpc T$ acts by the above construction 
    irreducible on 
    \begin{equation*}
        \Hil_{L}:=\Hil_{(L,0)}\equiv \bigoplus\limits_{\alpha\in L}
        (\Hil_F)_\alpha
    \end{equation*}
    i.e. the local net $\A_L$ acts on $\Hil_{L}$. 
\end{prop}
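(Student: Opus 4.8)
The plan is to compute the commutant $\mathcal{M}'$ of the von Neumann algebra $\mathcal{M} := \pi_{(L,0)}(\Lpc T)''$ and to show that it reduces to the scalars. First I would fix the identification $\Hil_L \cong \Hil_F \otimes \ell^2(L)$ carrying the charge sector $(\Hil_F)_\alpha$ onto $\Hil_F \otimes \delta_\alpha$. Under it the generators of $\mathcal{M}$ split into three families, reflecting the decomposition of a loop $\e^{\ima f}$ into its winding $\Delta_f \in L$, its zeroth mode, and its non-zero modes. The non-zero-mode Weyl operators $\pi_0(f)$, $f \in \Lp F$, act as $\pi_0(f) \otimes 1$, that is by the same operator on every sector. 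The charge-shift operators $\tilde\Gamma_\gamma$ of Lemma \ref{lem:projL} act as $1 \otimes S_\gamma$, where $S_\gamma$ sends $\delta_\alpha$ to $\eta(-\alpha,\gamma)\,\delta_{\alpha+\gamma}$. Finally the constant loops give one-parameter groups $s \longmapsto \pi_{(L,0)}(\e^{\ima s v})$, $v \in F$, whose generators are, up to constants, the components $\langle v, Q\rangle$ of the charge operator $Q = 1 \otimes \hat Q$, $\hat Q\delta_\alpha = \alpha\,\delta_\alpha$. Since these three families together exhaust $\Lpc T$, they generate $\mathcal{M}$.

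Let $T \in \mathcal{M}'$. As $\pi_0$ is the unique irreducible positive-energy representation of the Heisenberg group $H_F$ on $\Hil_F$ (cf.\ \cite{PS1986}), the operators $\pi_0(f)$ generate $\B(\Hil_F)$, so the $\pi_0(f) \otimes 1$ generate $\B(\Hil_F) \otimes 1$ and commuting with all of them forces $T = 1 \otimes M$ for some $M \in \B(\ell^2(L))$. Next I would choose a generic $v \in F$ with $\langle v, \alpha\rangle$ pairwise distinct for distinct $\alpha \in L$, which is possible because $L$ is countable and spans $F$. The one-parameter group $s \longmapsto \pi_{(L,0)}(\e^{\ima s v})$ lies in $\mathcal{M}$, so by Stone's theorem its generator $\langle v, Q\rangle$ is affiliated with $\mathcal{M}$; by the choice of $v$ its spectral projections are exactly the sector projections $P_\alpha$ onto $(\Hil_F)_\alpha$, which therefore lie in $\mathcal{M}$. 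Hence $T$ commutes with each $P_\alpha$ and is block-diagonal, and together with $T = 1 \otimes M$ this gives $T = \bigoplus_{\alpha \in L} c_\alpha\, 1_{(\Hil_F)_\alpha}$ for scalars $c_\alpha$. Lastly, $T$ commutes with every $\tilde\Gamma_\gamma$, which maps $(\Hil_F)_\alpha$ onto $(\Hil_F)_{\alpha+\gamma}$; comparing $T\tilde\Gamma_\gamma$ and $\tilde\Gamma_\gamma T$ on $(\Hil_F)_\alpha$ yields $c_{\alpha+\gamma} = c_\alpha$ for all $\alpha,\gamma \in L$. Since $L$ acts transitively on itself by translation, all $c_\alpha$ coincide, so $T = c\,1$ and $\mathcal{M}' = \CC\,1$.

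The step I expect to be the main obstacle is the middle one, namely placing the sector projections $P_\alpha$ genuinely inside $\mathcal{M}$. This is where the structure $\tilde W = H_F \times F$ is needed: one must verify that the constant loops really implement the charge operator $Q$ with the sectors $(\Hil_F)_\alpha$ as its eigenspaces, and that $L$ spanning $F$ lets a single generic $\langle v, Q\rangle$ separate them. The irreducibility of $\pi_0$ on the Fock space---the infinite-dimensional Stone--von Neumann uniqueness for positive-energy representations of $H_F$---is the other essential ingredient, but it is already recorded above; granting these two facts the first and last steps are routine. Equivalently one may argue positively: the diagonal algebra generated by the $P_\alpha$ together with the shifts $S_\gamma$ is the crossed product $\ell^\infty(L) \rtimes L \cong \B(\ell^2(L))$ of the free transitive translation action, whence $\mathcal{M} \supseteq \B(\Hil_F) \otimes \B(\ell^2(L)) = \B(\Hil_L)$ and the vacuum representation is irreducible.
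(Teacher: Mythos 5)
Your commutant computation is correct, and it actually supplies a detail the paper delegates to the literature: the paper's own proof never argues irreducibility directly but leans on the Pressley--Segal classification of positive energy representations of $\Lpc T$, whereas you derive $\M'=\CC 1$ from the irreducibility of the Fock representation $\pi_0$ of the Heisenberg group, the spectral projections of a generic charge component $\langle v,Q\rangle$ (a generic $v$ exists since $L$ is countable, so the bad set is a countable union of hyperplanes), and the transitivity of the shifts $\tilde\Gamma_\gamma$. Each of these steps is sound; the phase you assign to $S_\gamma$ should be $\eta(-(\alpha+\gamma),\gamma)$ rather than $\eta(-\alpha,\gamma)$, but only the fact that $\tilde\Gamma_\gamma$ is a scalar multiple of the canonical identification $(\Hil_F)_\alpha\to(\Hil_F)_{\alpha+\gamma}$ enters your argument, so this is harmless.

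The gap is at your very first step: you set $\M=\pi_{(L,0)}(\Lpc T)''$ and assert that the three families ``together exhaust $\Lpc T$'' and hence generate $\M$. That is precisely what has to be proved, and it is the entire content of the paper's proof. The operators $\e^{-\ima\langle f_0,Q\rangle}$, $W(f_1)$ and $\tilde\Gamma_{\Delta_f}$ are defined by hand on $\bigoplus_\alpha(\Hil_F)_\alpha$; a priori it is not known that $\e^{\ima f}\longmapsto \e^{-\ima\langle f_0,Q\rangle}W(f_1)\tilde\Gamma_{\Delta_f}$ is a projective representation of $\Lp T$ whose cocycle lies in the class of $c(\slot,\slot)$ from \eqref{eq:cocy}; without that, the construction does not realize the vacuum representation and your $\M$ is not generated by these operators. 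The paper verifies this by checking the cocycle separately on the winding subgroup $L$ (Lemma \ref{lem:projL}) and on the identity component (the Weyl relations), and then matching the mixed commutation relations against Lemma \ref{lem:LGrel} so that Lemma \ref{lem:cocycle} applies. Once that verification is inserted in front of your argument, the whole becomes a complete, and in fact stronger, proof: the cocycle computation shows the construction is a positive energy representation of $\Lpc T$ containing the charge-$0$ vacuum sector, and your commutant computation shows it is irreducible, hence equal to $\pi_{(L,0)}$ by the classification.
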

\begin{proof}
    Let $\e^{\ima f}, \e^{\ima g} \in \Lp T$.
    We note first that for $f=f_\Delta +f_0+f_1$ with 
    $f_\Delta(\theta)=\Delta_f\cdot\theta$ and $f_0$ zero-mode like before, 
    we have $\e^{\ima f} = k\e^{\ima f_0}\e^{\ima f_1} \e^{\ima f_\Delta}$ 
    with an irrelevant phase 
    $k=\e^{\ima/2 \langle f_1(2\pi),\Delta_f\rangle}\in\TT$. 

    We claim that 
    \begin{align*}
        \pi'(\e^{\ima f}) = \e^{-\ima \langle f_0,Q\rangle} W(f_1) 
        \tilde\Gamma_{\Delta_f}
    \end{align*}
    defines a projective representation of $\Lp T$ with a to $c(\slot,\slot)$
    equivalent cocycle $c'(\slot,\slot)$.
    Then there exists a coboundary $b_h(\slot,\slot)$ like in \eqref{eq:cob} 
    with $c(f,g)=b_h(f,g)c'(f,g)$ and $\pi(f) = h(f)\pi'(f)$ is the wanted 
    representation.

    We can write $\Lp T = \Lp T_0 \times L$  where $\Lp T_0$ is the connected
    component of the identity and $\alpha \in L$ is identified with the loop 
    $t_\alpha(\theta)=\alpha\cdot \theta$. 
    The cocycles restricted to $L$ are equivalent by Lemma \ref{lem:projL}.
    Further the Weyl relations give exactly the relations of the cocycle 
    $c(\slot,\slot)$, namely
    \begin{align*}
        \pi(\e^{\ima (f_0+f_1)}) \pi(\e^{\ima (g_0+g_1)}) 
        &= \e^{-\ima \langle f_0,Q\rangle} W(f_1) \e^{-\ima \langle g_0,Q\rangle} 
            W(g_1)
        \\&= \e^{-\ima \langle f_0 +g_0,Q\rangle} 
            \e^{\ima/2 \int \langle f_1',g_1\rangle} 
            W(f_1 + g_1)
        \\&=\e^{\ima/2 \int \langle f_1',g_1\rangle} 
            \pi(\e^{\ima (f_0+g_0+f_1+g_1)})
        \\&=c(f_0+f_1,g_0+f_1) \pi(\e^{\ima (f_0+g_0+f_1+g_1)})
        \komma
    \end{align*}
    so the cocycles restricted to $\Lp T_0$ are also equal. 
    By Lemma \ref{lem:cocycle} it is sufficient to check that the pairwise 
    commutation relations of  $\pi'(\e^{\ima f})$ and $\pi'(\e^{\ima g})$ equal 
    the one of $\Lpc T$ given in Lemma \ref{lem:LGrel}, indeed
    \begin{align*}
        \pi(\e^{\ima f_\Delta}) \pi(\e^{\ima (g_0+g_1)}) 
            \pi(\e^{\ima f_\Delta})^\ast
        &= \tilde \Gamma_{\Delta_f} \e^{-\ima \langle g_0,Q\rangle} W(g_1) 
            \tilde\Gamma_{\Delta_f}^\ast 
        \\&=  \e^{-\ima \langle g_0,Q-\Delta_f \rangle} W(g_1)
        \\&=  \e^{\ima \langle \Delta_f,g_0\rangle} \pi(\e^{\ima (g_0+g_1)})
        \punkt
        \qedhere
    \end{align*}  
\end{proof}
\begin{prop} 
    The local algebras $\A_L(I)$ are given by a crossed product of $\A_F(I)$ 
    with $L$.
\end{prop}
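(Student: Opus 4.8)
The plan is to realise $\A_L(I)$ on $\Hil_L\cong\ell^2(L,\Hil_F)$ as the von Neumann algebra generated by a block-diagonal copy of $\A_F(I)$ together with ``charge-shift'' unitaries $u_\alpha$ ($\alpha\in L$) implementing the $L$-action, and then to recognise this as the crossed product $\A_F(I)\rtimes L$ in standard form. Viewing a vector of $\Hil_L$ as an $L$-indexed family with $\alpha$-component in $(\Hil_F)_\alpha$, the ambient algebra $\B(\Hil_L)$ is nothing but the $L\times L$ matrices over $\B(\Hil_F)$, which is exactly the arena in which the crossed product of $\A_F(I)\subset\B(\Hil_F)$ by $L$ is built; the charge shifts $\tilde\Gamma_\alpha$ are the prototypes of the implementing unitaries.

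First I would sort the local loops by winding number $\Delta_f\in L$. Since every class in $\pi_1(T)=L$ is represented by a loop that can be squeezed into $I$ (constant near $\partial I$), the winding map $\Lp_I T\to L$ is onto, so $\Lpc_I T$ meets every winding sector. A winding-zero local loop $\e^{\ima g}$ is sent by the vacuum representation to $\pi(\e^{\ima g})=\e^{-\ima\langle g_0,Q\rangle}W([g])$, which is block-diagonal with $\alpha$-block $\e^{-\ima\langle g_0,\alpha\rangle}W([g])$; as $[g]$ runs over a dense subspace of $H_F(I)$ these generate a von Neumann algebra $M_I\subset\A_L(I)$ whose $\alpha$-block is precisely the charge-$\alpha$ representation $\rho_\alpha$ of $\A_F(I)$ described above. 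Because $\A_F(I)$ is a type \threeone{} factor its faithful normal representations are mutually quasi-equivalent, so $M_I=\bigoplus_{\alpha\in L}\rho_\alpha(\A_F(I))$ is $\ast$-isomorphic to $\A_F(I)$; this is the copy of $\A_F(I)$ sitting inside $\A_L(I)$. I then fix one local loop $f^{(\alpha)}$ of each winding $\alpha$ and put $u_\alpha:=\pi(\e^{\ima f^{(\alpha)}})\in\A_L(I)$.

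Next I would read off the crossed-product relations from Lemma~\ref{lem:LGrel}. Conjugating a winding-zero generator by $u_\alpha$ multiplies it by a scalar phase $\chi_\alpha(g)$, whose essential part is $\e^{\ima\langle\alpha,g_0\rangle}$ (the case $\Delta_g=0$ of Lemma~\ref{lem:LGrel}); since $g\mapsto\chi_\alpha(g)$ is a character, $\Ad u_\alpha$ restricts to an automorphism $\sigma_\alpha$ of $M_I$, namely the charge-$\alpha$ sector automorphism. Moreover $u_\alpha u_\beta$ and $u_{\alpha+\beta}$ are images of local loops of the same winding $\alpha+\beta$, hence differ by $\pi$ of a winding-zero local loop, which gives $u_\alpha u_\beta=m_{\alpha,\beta}u_{\alpha+\beta}$ with $m_{\alpha,\beta}\in M_I$ the $2$-cocycle coming from the central extension $\tilde L$ and the Klein transformation of Lemma~\ref{lem:projL}. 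Finally any $\pi(\e^{\ima f})$ with $\Delta_f=\alpha$ equals $\pi(\e^{\ima f}(\e^{\ima f^{(\alpha)}})^{-1})\,u_\alpha\in M_I\,u_\alpha$, so $\A_L(I)=\{M_I,u_\alpha:\alpha\in L\}''$; as $M_I$ is block-diagonal while the $u_\alpha$ shift the $\ell^2(L)$-index and implement $\sigma$, this is exactly the crossed product in standard form.

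The main obstacle is the zero-mode twist: the copy of $\A_F(I)$ inside $\A_L(I)$ is not the naive diagonal $1\otimes W([g])$ but carries the charge-dependent phase $\e^{-\ima\langle g_0,Q\rangle}$, and the real work is to verify that this block-diagonal, charge-twisted algebra $M_I$ is genuinely $\ast$-isomorphic to $\A_F(I)$ and that $\Ad u_\alpha$ induces precisely the charge-$\alpha$ automorphism. A secondary point is to pin down the cocycle $m_{\alpha,\beta}$, so that the identification is with the correct (possibly $\eta$-twisted) crossed product rather than with an abstract algebra that merely satisfies crossed-product relations.
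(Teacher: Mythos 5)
Your argument is correct and is essentially the proof given in the paper: both decompose $\Lpc_I T$ into its winding-zero part, which generates (a copy of) $\A_F(I)$, and localized ``step-function'' loops of winding $\alpha\in L$, whose images under $\pi$ implement the charge automorphisms, and then read off the crossed-product structure. The differences are presentational: you make explicit the identification of the block-diagonal, charge-twisted winding-zero algebra with $\A_F(I)$ (via quasi-equivalence of faithful normal representations of a type \threeone{} factor) and you flag the $2$-cocycle carried by the implementing unitaries $u_\alpha$, both of which the paper's proof leaves implicit.
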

\begin{proof} Let $I$ be a proper interval and 
    $y \in \Sc \setminus \overline I$. 
    The local loop group $\Lpc_I T$ is generated
    by loops $\e^{\ima f}$ with $f(x) \in 2\pi L$
    for $x \not\in I$. We note that $\Lpc_I T= (\Lpc_I T)_0 \times L$ as a set, 
    where $(\Lpc_I T)_0$ is the connected component of the identity consisting 
    of loops 
    $\e^{\ima f}$ with $\Delta_f=0$ and $L$ is identified with 
    $\{\e^{\ima f_\alpha} : \alpha\in L\}$,
    where $t_\alpha$ are functions like above with $\Delta_{t_\alpha}=\alpha$.
    We choose a basis $\{\alpha_i\}$  of $L$ and some smooth 
    ``step function''
    $M:\RR \longrightarrow \RR$ with $M(\theta +2\pi)=M(\theta)$ and with 
    $\Delta_M=1$, such that for $x\not\in I$ it is
    $M(x) \in \ZZ$ and therefore $m(x):=M'(x)=0$. 
    The loop $\e^{\ima M\alpha_i}$ has winding number $\alpha_i$
    and implements an automorphism $\beta_i$ of $\pi( (\Lpc T)_0 )''$ 
    \begin{align*}
        \pi(\e^{\ima f}) &= \e^{-\ima \langle Q, f_0\rangle} 
            W([f_1])=:\tilde W (f) \\
        \beta_i &:= \Ad \pi(\e^{\ima M\alpha_i})
        \\
        \beta_i (\tilde W(f)) &= \e^{\ima \int \langle f,m \alpha_i \rangle}
            \tilde W(f).
    \end{align*}
    which defines an automorphic action $\beta$ of $L$ on the algebra 
    $\pi( (\Lpc T)_0 )''$.
    We note that with the notation from above 
    $\Hil_F \cong (\Hil_F)_0=\overline{\pi( (\Lpc T)_0) \Omega}
    \subset \Hil_{(L,0)}$ 
    and denote by
    $\pi_F:(\Lpc T)_0 \longrightarrow \U( (\Hil_F)_0)$ 
    the representation of $(\Lpc T)_0$ on $(\Hil_F)_0$ obtained 
    by restriction of $\pi$.
    By construction we get $\A_F(I) = \pi_F( (\Lpc_I T)_0 )''$.
    This is the vacuum representation and it is $W([f])=\tilde W(f)$. 
    Finally we can see $\beta_i$ as an automorphism of $\A_F(I)$
    \begin{align*}
        \beta_i (W([f])) &= \e^{\ima \int \langle f-f(y),
        m \alpha_i \rangle} W([f])   
    \end{align*}
    and it is clear that 
    $\pi(\Lpc_I T)'' = \A_F(I) \rtimes_\beta L$, where the action is free and 
    faithful.
\end{proof}
\begin{rmk} By construction we have that 
    $\A_{L\oplus Q} \cong \A_L \otimes \A_Q$.
    \label{rmk:irrlattices}
\end{rmk}
The adjoint action of a (localized) loop $\e^{\ima f}$ with  
$\Delta_f=\lambda \in L^*$ gives a localized endomorphism
of $\A_L$ which belongs to the sector $[\lambda]\in L^\ast/L$.
The conformal spin is well known to be 
$\e^{\ima\pi\langle \lambda,\lambda\rangle}$. 
\begin{prop} Let $L\subset Q$ be two even lattices of the same rank $n$. 
    Then the local conformal net
    $\A_Q$ is the simple current extension (see \cite[Lemma 2.1]{KaLo2006}) 
    of $\A_L$ by  
    the subgroup $Q/L$ of the group $L^\ast/L$ of all sectors of 
    $\A_L$. 
    \label{prop:sublattices}
\end{prop}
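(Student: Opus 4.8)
The plan is to invoke the characterisation of simple current extensions in \cite[Lemma 2.1]{KaLo2006} and then to identify the extension it produces with $\A_Q$ by means of the explicit Fock space model. First I would record the relevant chain of lattice inclusions. Since $L\subset Q$ are even of the same rank and carry the same form, integrality of $Q$ gives $Q\subset Q^\ast$, while $L\subset Q$ gives $Q^\ast\subset L^\ast$; hence
\begin{align*}
    L\subset Q\subset Q^\ast\subset L^\ast,
\end{align*}
so that $Q/L$ is a finite subgroup of the sector group $L^\ast/L$ of $\A_L$. As every sector $[\lambda]$, $\lambda\in L^\ast/L$, has statistical dimension $1$, the set $Q/L$ is a finite group of simple currents, automatically closed under the fusion $[\beta]\times[\beta']=[\beta+\beta']$.

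Next I would check the two admissibility conditions for a \emph{local} simple current extension. The sectors of $\A_L$ form a pointed braided category whose data are encoded in the discriminant form: the conformal spin of $[\beta]$ is $\e^{\ima\pi\langle\beta,\beta\rangle}$ and the mutual monodromy of $[\beta]$ and $[\beta']$ is $\e^{2\pi\ima\langle\beta,\beta'\rangle}$. For $\beta,\beta'\in Q$, integrality gives $\langle\beta,\beta'\rangle\in\ZZ$, so the monodromy is trivial and the currents in $Q/L$ are mutually local; evenness gives $\langle\beta,\beta\rangle\in 2\ZZ$, so the spin is $1$ and the currents are bosonic. These are exactly the hypotheses under which \cite[Lemma 2.1]{KaLo2006} produces a unique irreducible local (conformal) extension $\cB\supset\A_L$ whose vacuum representation restricts to $\A_L$ as $\bigoplus_{[\gamma]\in Q/L}[\gamma]$.

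It then remains to identify $\cB$ with $\A_Q$. Writing $T_L=F/2\pi L$, $T_Q=F/2\pi Q$ and using $F=L\otimes_\ZZ\RR=Q\otimes_\ZZ\RR$, the vacuum space of $\A_Q$ is
\begin{align*}
    \Hil_Q=\bigoplus_{\beta\in Q}(\Hil_F)_\beta
    =\bigoplus_{[\gamma]\in Q/L}\;\bigoplus_{\alpha\in\gamma+L}(\Hil_F)_\alpha
    =\bigoplus_{[\gamma]\in Q/L}\Hil_{(L,\gamma)}.
\end{align*}
The von Neumann algebra generated by the localised loops $\e^{\ima f}\in\Lpc_I T_Q$ with winding $\Delta_f\in L$ is a copy of $\A_L$: the charge shifts $\tilde\Gamma_\alpha$ with $\alpha\in L$ preserve each coset $\gamma+L$ and act on the block $\Hil_{(L,\gamma)}$ as the sector representation $\pi_{(L,\gamma)}$. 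Hence $\A_L$ is a subnet of $\A_Q$ (with vacuum subspace $\Hil_{(L,0)}$) and the restriction of the vacuum representation of $\A_Q$ to it is precisely $\bigoplus_{[\gamma]\in Q/L}[\gamma]$, matching $\cB$. The remaining generators of $\A_Q(I)$ are the localised loops with winding $\beta\in Q\setminus L$; by the discussion preceding the statement these implement exactly the charged endomorphisms of $\A_L$ in the sector $[\beta]\in Q/L$, so they are the charged intertwiners adjoined in the simple current extension. Since $\A_Q$ is local (Proposition~\ref{prop:locality}) and $\Omega\in(\Hil_F)_0$ is cyclic, $\A_Q$ is an irreducible local extension of $\A_L$ with the correct module structure, and the uniqueness in \cite[Lemma 2.1]{KaLo2006} forces $\A_Q=\cB$.

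The step I expect to be the main obstacle is the precise matching of these charged loops with the abstract charged intertwiners: one must verify that the cocycle $c$ of $\Lpc T_Q$ together with the Klein transformation $\eta$ reproduce the commutation relations and normalisations dictated by the simple current extension, equivalently that the $2$-cocycle on $Q/L$ coming from $\varepsilon$ lies in the cohomology class prescribed by the discriminant form. This is exactly the bookkeeping handled by Lemmas~\ref{lem:cocycle} and \ref{lem:LGrel}, which reduce the comparison of cocycles to their restrictions to $L$, to the identity component $\Lp T_0$, and to the commutator map, all computed above.
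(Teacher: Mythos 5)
Your proof is correct, and it verifies more than the paper does, but the final identification runs along a genuinely different track. The paper's own argument is a three-line affair: it observes that $\A_L\subset\A_Q$ and $\Hil_L\subset\Hil_Q$ hold by construction, views the simple current extension $\cB$ as a conformal subnet of $\A_Q$ (the charged intertwiners being exactly the localized loops with winding in $Q$), and concludes $\cB=\A_Q$ from $\overline{\cB(I)\Omega}=\Hil_Q$ via Lemma~\ref{lem:subnet}; no uniqueness statement for local extensions is invoked, only cyclicity of the vacuum for a conformal subnet. You instead compute the branching $\Hil_Q=\bigoplus_{[\gamma]\in Q/L}\Hil_{(L,\gamma)}$ of the vacuum representation of $\A_Q$ over $\A_L$ and appeal to the uniqueness clause of \cite[Lemma 2.1]{KaLo2006} to force $\A_Q=\cB$. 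What your route buys is an explicit check that the hypotheses of that lemma are actually satisfied --- trivial mutual monodromy from integrality of $Q$ and trivial spin from evenness --- which the paper leaves tacit; the price is reliance on uniqueness of the local extension with prescribed module structure, which ultimately rests on the vanishing of the symmetric part of $H^2(Q/L,\TT)$, whereas the paper's cyclicity argument sidesteps this entirely. Your closing remark correctly isolates the one point both arguments gloss over: that the cocycle of $\Lpc T_Q$ restricted to loops with winding in $L$ is equivalent to the one used to build $\A_L$, which is needed even to write $\A_L\subset\A_Q$ ``by construction'' and is precisely where the independence of the central extension from the choice of $\varepsilon$ within its cohomology class, together with Lemma~\ref{lem:cocycle}, enters.
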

\begin{proof}
    By construction it is $\A_L\subset \A_Q$ and $\Hil_L\subset \Hil_Q$. 
    Let us denote by $\cB$ the net obtained by 
    the simple current extension 
    by $Q/L$, which is the crossed product with automorphisms given by the
    adjoint action by 
    loops $\e^{\ima f}$  with $\Delta_f\in Q/L$. So clearly we can see
    $\cB$ as conformal subnet
    of $\A_Q$
    and they coincide because $\overline{\cB(I)\Omega}=\Hil_Q$.
\end{proof}
\begin{rmk} In \cite{KaLo2006} another construction of lattice models is given,
    which starts with a conformal net $\A$, which is the simple 
    current extension by the 
    dimension 1 sector of $\Vir_{c=1/2} \otimes \Vir_{c=1/2}$. 
    In \cite[Remark 2.3]{KaLo2006} the authors
    conjecture that $\A$ is a Buchholz--Mack--Todorov extension, namely the
    one with $g=2$ which is in our language the conformal net 
    $\A_{2\ZZ}$, where $2\ZZ$ is the lattice
    with $\langle\alpha,\beta\rangle =\alpha\beta$. 
    Let us assume this conjecture is true. 
    They take even lattices $L\supset (2\ZZ)^n$ (coming from codes) 
    and take the simple 
    current extension by the group  
    $L/(2\ZZ)^n \subset (1/2\ZZ)^n /(2\ZZ)^n$
    of the net $\A^{\otimes n}$, which  
    is under the conjecture 
    isomorphic to $\A_{(2\ZZ)^n}$ using Remark
    \ref{rmk:irrlattices}. By Proposition \ref{prop:sublattices} 
    the extended net  
    then would be 
    isomorphic to our net $\A_L$ and the two constructions would coinside.
\end{rmk}

\subsection{Loop Group Models of Simply Laced Groups at Level 1}\label{sec:ChLGNet}
We show the relation of the lattice model associated with the root lattice 
$L$ of simply laced group $G$ to the level 1 representation of the loop group 
of  $\Lp G$ \cite{PS1986,Se1981,St1995}.

Let $G$ be a 
    compact, connected, simply connected, simply laced Lie group 
with maximal torus $T$.
\emph{Simply laced} means that there is an invariant inner product on its Lie 
algebra $\mathfrak g$ for which all roots have the same length or 
equivalently the Weyl group of $G$ acts transitively on the roots.
By \cite[Theorem 3.12]{GaFr1993}
the vacuum positive energy representation $\pi$ of $\Lp G$
at level $k$ gives rise to a conformal net denoted by $\A_{G_k}$, defined by
$\A_{G_k}(I) = \pi(\Lp_I G)''$. 

Let $\mathfrak{t}$ be the Lie algebra of $T$ and let us  identify 
\begin{align*} 
    \mathfrak t / 2\pi L  &\longbijects T\subset G,  
    [t] \longmapsto \e^{t}
    \punkt
\end{align*}
The roots of $G$ are linear maps $\alpha:\mathfrak t \longrightarrow \RR$. For
each $\alpha$ we define a $h_\alpha\in\mathfrak t$ such that 
$\alpha(t)=\langle h_\alpha,t\rangle$ for $t\in\mathfrak t$ where 
$\langle\slot,\slot\rangle$ is the Cartan--Killing form which we can assume to 
be normalized such that $\langle h_\alpha,h_\alpha\rangle = 2$.
This can 
always be realized, due to $G$ being simply laced. 
In the case $\SU(N)$ and $\Spin(2N)$ it is given explicitly by 
$\langle x,y\rangle =-\tr(xy)$ and $\langle x,y\rangle =-1/2\tr (xy)$, 
respectively.
It is well known that $h_\alpha \in L$ and that the set of the $h_\alpha$ with
$\alpha$ a root coincide with the $x\in L$ such that $\langle x,x\rangle =2$.
By abuse of notation we identify $\alpha$ with $h_\alpha \in L$, \ie
$\langle \alpha,\beta\rangle \equiv \langle h_\alpha,h_\beta\rangle$.
We note the missing $\ima$ in the $\exp$ map due to the conventions 
$t^\ast = -t$ for $t\in\mathfrak t$, \ie by identifying $F=-\ima \mathfrak t$
we get the relation to the former notation. 

Let $\pi$ be a positive energy representation of $\Lp T$
with cocycle \eqref{eq:cocy}, where $T$ is the torus associated with $L$ but 
also a maximal torus of $G$ by the above discussion. 
We note that the cocycle of the level 1 representation of $\Lp G$ 
restricted to $\Lp T$ is (equivalent to) our cocycle \eqref{eq:cocy} by
\cite[Proposition 4.8.3]{PS1986}.
More remarkable is the following result by Segal \cite{Se1981}, 
stating that the representation of $\Lp T$ extends to $\Lp G$. 
This is mainly achieved by taking a limit of loops with winding number
$\Delta_f=\alpha$, and building so called ``vertex'' or ``blib'' operators
which turn out to generate---together with the generators of loops 
with trivial winding number---a representation
of the polynomial algebra $L^\mathrm{alg} \mathfrak g$,
which is then exponentiated.
\begin{prop}[{\cite[Proposition 4.4]{Se1981}}]\label{prop:Segal}
    Let $G$ be a 
        compact, connected, simply connected, simply laced Lie group 
    with maximal torus $T$. 
    If $\pi$ is a positive energy projective representation of $\Lp T$ with 
    cocycle above, then the action of $\Lp T$ extends canonically to an action 
    of $\Lp G$.
\end{prop}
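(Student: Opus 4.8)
The plan is to carry out the Frenkel--Kac--Segal vertex operator construction directly on the Hilbert space $\Hil_L$ on which $\pi$ acts. First I would fix the maximal torus $T\subset G$ with Lie algebra $\mathfrak t$ and pass to the complexification, recording the root space decomposition $\mathfrak g_\CC=\mathfrak t_\CC\oplus\bigoplus_{\alpha\in\Phi}\CC\,e_\alpha$, where $\Phi$ is the root system and each root space is one dimensional. Simply-lacedness enters at this very first step: all roots may be normalised so that $\langle\alpha,\alpha\rangle=2$, which is exactly what will force the vertex operators below to have ``conformal weight one'' and the resulting structure constants to take values $\pm1$. From the given representation $\pi$ of $\Lp T$ I would extract two sets of operators: the Cartan (Heisenberg) currents, i.e. the Fourier modes $a_\beta(n)$ coming from $\pi(\e^{\ima f})$ with $\Delta_f=0$ and $f$ a single Fourier mode in the direction of $\beta$, obeying $[a_\beta(m),a_\gamma(n)]=m\langle\beta,\gamma\rangle\,\delta_{m+n,0}$ (the level one relation in this normalisation); and the charge-shift operators $\tilde\Gamma_\alpha$ together with the charge operator $Q$ furnished by Lemma \ref{lem:projL}.

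The core construction is, for each root $\alpha\in\Phi$, a \emph{vertex operator} (the ``blib'' operator referred to in the discussion preceding the statement)
\begin{equation*}
    X_\alpha(z)=\tilde\Gamma_\alpha\, z^{\langle\alpha,Q\rangle}\,
    \exp\Big(\sum_{n<0}\tfrac{1}{n}\,a_\alpha(n)\,z^{-n}\Big)
    \exp\Big(\sum_{n>0}\tfrac{1}{n}\,a_\alpha(n)\,z^{-n}\Big)
    \komma
\end{equation*}
written here in its standard normal-ordered form (creation modes to the left). This is precisely the renormalised limit, as the support shrinks to the point $z$, of the loops $\e^{\ima f}$ with winding number $\Delta_f=\alpha$ alluded to in the text. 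The cocycle $\varepsilon$ with $\varepsilon(\alpha,\alpha)=(-1)^{\langle\alpha,\alpha\rangle/2}$ is built into $\tilde\Gamma_\alpha$, and it is exactly this sign that will turn a merely projective action into a genuine Lie algebra action. The assertion to be proved is that the modes of the $X_\alpha(z)$ together with the $a_\beta(n)$ furnish a representation of the polynomial affine Lie algebra $L^{\mathrm{alg}}\mathfrak g$ at level $1$ on $\Hil_L$.

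The heart of the proof is the verification of the affine relations, and I expect this to be the main obstacle. Using only the Heisenberg commutators and the bimultiplicative identities for $\varepsilon$ one first computes $[a_\beta(m),X_\alpha(z)]=\langle\beta,\alpha\rangle\,z^{m}X_\alpha(z)$, and then the operator product $X_\alpha(z)X_\beta(w)$: normal ordering the two exponentials produces a scalar factor $(z-w)^{\langle\alpha,\beta\rangle}$ times $\varepsilon(\alpha,\beta)$ times a shifted vertex operator. Reading off the residue as $w\to z$ yields exactly the defining relations of $\hat{\mathfrak g}$ at level one: a contribution $\pm X_{\alpha+\beta}$ when $\alpha+\beta\in\Phi$ (the simple pole, $\langle\alpha,\beta\rangle=-1$), a Cartan current plus the central term $=1$ when $\beta=-\alpha$ (the double pole), and $0$ otherwise. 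The delicate point is the sign bookkeeping: one must check that the $\varepsilon$-signs propagate through the residue computation so that the Jacobi identity holds and the structure constants come out as $N_{\alpha,\beta}=\pm1$. This is the step where both simply-lacedness (all $\langle\alpha,\alpha\rangle=2$) and the specific choice of cocycle are indispensable.

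Finally I would integrate to the group level. Positivity of energy of $\pi$ transfers verbatim to the constructed $\hat{\mathfrak g}$-module, since the energy grading is the one already carried by $\Hil_L$; standard integrability results for positive energy representations of loop algebras \cite[Chapter 9]{PS1986} then promote this Lie algebra representation to a projective unitary positive energy representation of $\Lp G$. By construction its restriction to $\Lp T$ is $\pi$, the two cocycles agreeing via \cite[Proposition 4.8.3]{PS1986}, and uniqueness of the basic (level one) representation shows that the extension is canonical, as claimed.
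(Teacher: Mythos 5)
The paper does not actually prove this statement: it is imported verbatim from Segal (\cite[Proposition 4.4]{Se1981}), accompanied only by the one-sentence gloss that the extension is obtained by taking limits of loops with winding number $\alpha$ to build vertex (``blib'') operators, which together with the zero-winding currents generate a representation of $L^{\mathrm{alg}}\mathfrak g$ that is then exponentiated. Your outline is precisely that Frenkel--Kac--Segal construction, so it matches the approach the paper relies on; the only caveat is that the two genuinely hard steps --- the operator-product and cocycle-sign verification of the affine relations, and the integration of the resulting $L^{\mathrm{alg}}\mathfrak g$-module to a unitary action of $\Lp G$ --- are exactly the content of Segal's paper and remain sketched rather than carried out in your proposal.
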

Now we want to apply this result to show that certain loop group nets at level 
1 are a special case of the conformal nets associated with lattices. The analog 
of the following result is well known in the theory of vertex operator algebras
under the name Frenkel--Kac or Frenkel--Kac--Segal construction.
\begin{prop}[Algebraic version of the Frenkel--Kac--Segal construction]
    Let $G$ be a compact, simple, connected, simply connected and simply laced 
    Lie group and $L$ its root lattice as above. Then the conformal net $\A_L$ 
    is equivalent to the loop group net $\A_{G,1}$ at level 1 associated with 
    $\Lp G$.
    In particular $\A_{G,1}$ is completely rational and has 
    $\mu$-index $\mu=|L^\ast/L|$. 
\end{prop}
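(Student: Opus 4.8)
The plan is to exhibit $\A_L$ as a M\"obius covariant subnet of $\A_{G,1}$ on the \emph{same} Hilbert space with the \emph{same} vacuum, and then to promote the inclusion to an equality by Lemma~\ref{lem:subnet}; the engine of the whole argument is Segal's extension result, Proposition~\ref{prop:Segal}. First I would record the two facts that let the lattice picture and the loop-group picture be compared on the nose: by the discussion preceding Proposition~\ref{prop:Segal} the torus $T=F/2\pi L$ attached to the root lattice is a maximal torus of $G$, and by \cite[Proposition~4.8.3]{PS1986} the cocycle \eqref{eq:cocy} on $\Lp T$ agrees, up to equivalence, with the restriction to $\Lp T$ of the level-$1$ cocycle of $\Lp G$. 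Hence the vacuum representation $\pi_{(L,0)}$ of $\Lpc T$ on $\Hil_L=\Hil_{(L,0)}$ already carries precisely the cocycle that a level-$1$ representation of $\Lp G$ induces on its maximal torus.

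Next I would apply Proposition~\ref{prop:Segal} to extend $\pi_{(L,0)}$ canonically to a representation $\pi$ of $\Lp G$ on the same space $\Hil_L$, and identify $\pi$ with the level-$1$ \emph{vacuum} representation. The representation $\pi$ is of positive energy, since Segal's construction preserves the $\TT$-action and hence the rotation generator $L_0$; it is irreducible, because already $\Lpc T\subset\Lp G$ acts irreducibly on $\Hil_L$; and it sits at level $1$ by the cocycle matching above. Finally its lowest-energy subspace is one-dimensional and spanned by $\Omega$: since $\Hil_L=\bigoplus_{\alpha\in L}(\Hil_F)_\alpha$ and $L$ is even and positive definite, the summand $(\Hil_F)_\alpha$ has minimal $L_0$-eigenvalue $\tfrac12\langle\alpha,\alpha\rangle\geq 1$ for $\alpha\neq 0$, while $(\Hil_F)_0\cong\Hil_F$ contributes only the vacuum at energy $0$. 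An irreducible level-$1$ positive energy representation whose lowest-energy space is the trivial one-dimensional $G$-module is the basic representation, so $\pi$ is the level-$1$ vacuum representation and $\A_{G,1}(I)=\pi(\Lp_I G)''$ indeed acts on $\Hil_L$.

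With this identification the remainder is soft. From $\Lp_I T\subset\Lp_I G$ one gets $\A_L(I)=\pi_{(L,0)}(\Lpc_I T)''\subset\pi(\Lp_I G)''=\A_{G,1}(I)$ for every $I\in\cI$; moreover the M\"obius covariance of both nets is implemented by one and the same geometric (reparametrization) action on $\Hil_L$, namely $\Gamma(U_0)$, so that $\A_L$ is genuinely a M\"obius covariant subnet of $\A_{G,1}$ sharing the vacuum $\Omega$. The vacuum axiom for $\A_L$ states exactly that $\Omega$ is cyclic for $\bigvee_{I\in\cI}\A_L(I)$ on $\Hil_L$, so the Jones projection $e$ of Lemma~\ref{lem:subnet} equals $1$; therefore $\A_L=\A_{G,1}$. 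The ``in particular'' clause is then immediate, since $\A_L$ was already shown to be completely rational with $\mu$-index $|L^\ast/L|$.

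I expect the crux to be the identification carried out in the second step: forcing the Segal extension into precisely the level-$1$ vacuum sector. The delicate points are the cocycle and level bookkeeping drawn from \cite{PS1986} and, above all, verifying that the lowest-energy space is exhausted by $\Omega$, which is exactly where the evenness of the root lattice (equivalently the normalization $\langle h_\alpha,h_\alpha\rangle=2$) is essential; simultaneously one must make sure the two conformal structures genuinely coincide, and not merely their rotation subgroups, so that the inclusion really is one of M\"obius covariant nets and Lemma~\ref{lem:subnet} applies.
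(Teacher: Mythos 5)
Your proof is correct and follows essentially the same route as the paper: realize the level-$1$ vacuum representation of $\Lp G$ on $\Hil_L$ via Segal's extension theorem, observe that $\Lpc_I T\subset\Lpc_I G$ makes $\A_L$ a Möbius covariant subnet of $\A_{G,1}$ with cyclic vacuum, and conclude equality from Lemma~\ref{lem:subnet}. The only difference is one of detail: the paper compresses the identification into the phrase ``by Proposition~\ref{prop:Segal} it can be assumed to act on $\Hil_L$,'' whereas you spell out why the Segal extension is the basic representation (irreducibility inherited from $\Lpc T$, level fixed by the cocycle matching, and the lowest-energy space pinned down by $\tfrac12\langle\alpha,\alpha\rangle\geq 1$ for $0\neq\alpha\in L$), which is a worthwhile elaboration rather than a deviation.
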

The case $G=\SU(N)$ is stated in \cite[3.1.1]{Xu2009}
and the general case in \cite{St1995}*{p. 37/38}.
In principle we could 
try to use the result of Segal to directly proof the proposition, but
locality of the constructed exponentiated currents is not clear and
has to be checked. Therefore we give a more indirect proof using an operator 
algebraic argument.
\begin{proof} 
    Let $\pi$ be the vacuum positive energy representation at level 1 of $\Lp G$
    and by Proposition \ref{prop:Segal} it can be assumed to act on the Hilbert
    space $\Hil_L$.
    We see $\pi$ as a representation of the central extension $\Lpc G$. 
    It is $\Lpc T
    \subset \Lpc G$ and in particular for every $I\in\cI$ also $\Lpc_I T \subset
    \Lpc_I G$. 
    This implies that $\A_L(I)\equiv\pi(\Lpc_I T)''$ is a conformal subnet of 
    $\A_{G,1}(I)=\pi(\Lpc_I G)''$.
    Because $\overline{\A_L(I)\Omega}=\Hil_L$
    by Lemma \ref{lem:subnet} the two nets $\A_L$ and $\A_{G,1}$ have to coincide.
\end{proof}
\begin{example}
    The simple, simply laced groups correspond to 
    the Dynkin diagrams of type A, D and E (see Figure \ref{fig:dynkin}), 
    namely 
    $\SU(n+1)$ for $A_n$ with $n\geq 1$,
    $\Spin(2n)$ for $D_n$ with $n\geq 4$ 
    and in the exceptional case the compact, simply connected 
    forms of $E_6,E_7,E_8$. The level 1 loop group nets of these 
    groups are therefore given by lattice models of their root lattice $L$,
    which is characterized by the basis 
    $\{\alpha_1,\ldots,\alpha_n\}$  with $n$ the rank of $L$ and
    $\alpha_i$ represents the $i$-th vertex of the Dynkin diagram. The
    inner product is specified by the \emph{Cartan matrix} $(C_{ij})$
    via
    $$
    \langle \alpha_i,\alpha_j\rangle = C_{ij}=
        \begin{cases} 
            2 & i=j \\ 
           -1 & $i$\text{ and }$j$\text{ are connected by an edge}\komma\\
           0 & \text{otherwise}\punkt
        \end{cases}
    $$

\end{example}
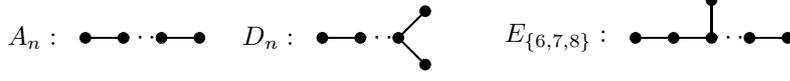
\begin{figure}
            \begin{tikzpicture}[scale=0.5]
                \useasboundingbox (-1,-1.5) rectangle (5,3); 
                \node at  (-0.5,0) [left] {$A_n:$};
                \draw[fill] (0,0) circle (4pt) ;
                \draw[thick] (0,0)--(1,0) ;
                \draw[fill] (1,0) circle (4pt) ;
                \node at (1.5,0) {$\cdots$};
                \draw[fill] (2,0) circle (4pt) ;
                \draw[thick](2,0)--(3,0);
                \draw[fill] (3,0) circle (4pt) ;
            \end{tikzpicture}
            \begin{tikzpicture}[scale=0.5]
                \useasboundingbox (-1,-1.5) rectangle (7,3); 
                \node at  (-0.5,0) [left] {$D_n:$};
                \draw[fill] (0,0) circle (4pt) ;
                \draw[thick] (0,0)--(1,0) ;
                \draw[fill] (1,0) circle (4pt) ;
                \node at (1.5,0) {$\cdots$};
                \draw[fill] (2,0) circle (4pt) ;
                \draw[thick](2,0)--(2.707,0.707);
                \draw[thick](2,0)--(2.707,-0.707);
                \draw[fill] (2.707,0.707) circle (4pt) ;
                \draw[fill] (2.707,-0.707)  circle (4pt) ;
            \end{tikzpicture}
            \begin{tikzpicture}[scale=0.5]
                \useasboundingbox (-1,-1.5) rectangle (5,3); 
                \node at  (-0.5,0) [left] {$E_{\{6,7,8\}}:$};
                \draw[fill] (0,0) circle (4pt) ;
                \draw[fill] (1,0) circle (4pt) ;
                \draw[thick] (0,0)--(2,0)--(2,1)--(2,0) ;
                \draw[fill] (2,0) circle (4pt) ;
                \draw[fill] (2,1) circle (4pt) ;
                \draw[fill] (3,0) circle (4pt) ;
                \draw[fill] (4,0) circle (4pt) ;
                \node at (2.5,0) {$\cdots$};
                \draw[thick] (3,0)--(4,0) ;
            \end{tikzpicture}
    \caption{A, D, E Dynkin diagrams}
    \label{fig:dynkin}
\end{figure}
\section{Boundary Quantum Field Theory -- Nets on Minkowski Half-Plane}\label{sec:BQFT}
In this section we want to construct local nets on Minkowski half-plane
$M_+=\{(t,x) \in \RR^2:x>0\}$ which are time-translation covariant
and which we will call also simply \emph{\BQFTnets}.

Let $I_1, I_2$ be two intervals of the time 
axis such that $I_2 > I_1$ and let us define the \emph{double cone}
\begin{align*}
    \cO = I_1 \times I_2:= \{ (t,x)\in\RR^2 : t-x\in I_1, x+t \in I_2 \}
\end{align*}
like in Figure \ref{fig:doublecone}.
We call such a double cone $\cO=I_1\times I_2$ \emph{proper} if it has a 
positive distance to the boundary, i.e. $\overline {I_1}$ and $\overline {I_2}$ 
have empty intersection; the 
    \emph{set of proper double cones} 
we denote by $\cK_+$.
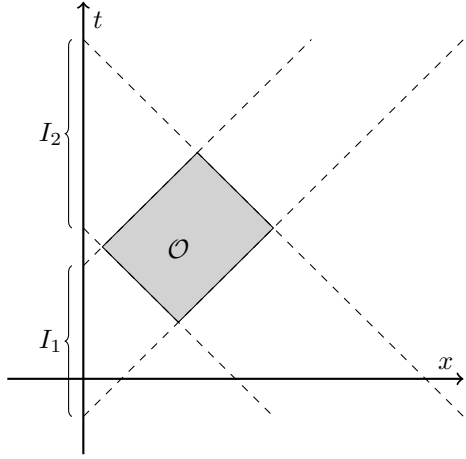
\begin{figure}[ht]
    \definecolor{mygrey}{HTML}{d2d2d2}
    \begin{tikzpicture}
        \draw [thick, ->] (-1,0)--(5,0) node [above left] {$x$};
        \draw [thick, ->] (0,-1)--(0,5) node [below right] {$t$};
        \draw[fill=mygrey] (0.25,1.75)--(1.5,3)--(2.5,2)--(1.25,0.75)
            --(0.25,1.75);
        \draw [dashed] (0,-0.5)--(5,4.5);
    	\draw [dashed] (0,1.5)--(3,4.5);
	    \draw [dashed] (0,2)--(2.5,-0.5);
    	\draw [dashed] (0,4.5)--(5, -.5);
	    \draw[decorate, decoration=brace] (-0.15,-0.5)--node [left] 
            {$I_1$} (-0.15,1.5);
	    \draw[decorate, decoration=brace] (-0.15,2)--node [left] 
            {$I_2$}(-0.15,4.5);
    	\node at (1.25,1.725) {$\cO$};
    \end{tikzpicture}
    \caption{Double cone $\O=I_1\times I_2$ in $M_+$}
    \label{fig:doublecone}
\end{figure}
\subsection{Local Nets of Standard Subspaces on Minkowski Half-Plane}\label{sec:BQFTStdNet}
As an intermediate step we built up local time-translation covariant nets of 
standard subspaces related with the local Möbius covariant nets 
of standard subspaces $H_F$ 
from Proposition \ref{prop:NetStd}
using the semigroup $\cE(H_F(0,\infty))$. 
\begin{defi} 
    By a 
    \emph{local, time-translation covariant net of standard subspaces on $M_+$}
    on a Hilbert space $\Hil$ we mean a family 
    $\{K(\cO)\}_{\cO\in\cK_+}$ of standard subspaces of a Hilbert space $\Hil$  
    which fulfills:
    \begin{enumerate}[{\bf A.}]
         \item\textbf{Isotony.} $\O_1 \subset \O_2$ implies 
            $K(\O_1) \subset K(\O_2)$. 
         \item\textbf{Locality.} If $\O_1,\O_2 \in \cK_+$ are space-like 
            separated then $K(\O_1) \subset K(\O_2)'$.
         \item\textbf{Time-translation covariance.} There is a strongly 
            continuous one-parameter group $U(t)=\e^{\ima t P}$ 
            on $\Hil$ with positive generator $P$, 
            such that 
            \begin{align*}
                U(t)K(\O) &= K(\O_t), &\O \in \cK_+
            \end{align*}
            where $\O_t = \O + (t,0)$ is the in time-direction 
            shifted double cone.
    \end{enumerate}
\end{defi}
\begin{defi}
    Let $F$ be an Euclidean space and $F_\CC=F\otimes_\RR \CC$ its 
    complexification with canonical complex conjugation $x\longmapsto \bar x$. 
    We denote by $\cS_F$ the space of all complex Borel functions 
    $\varphi:\RR \longrightarrow \B(F_\CC)$ which are boundary values of a 
    bounded analytic function $\RR+\ima\RR_+\longrightarrow \B(H)$,
    \ie for $x,y\in F_\CC$ the function $p\longmapsto (x,\varphi(p)y)$ 
    is an analytic Borel function $\RR+\ima\RR_+ \longrightarrow \CC  $ 
    such that $\varphi$ is \emph{symmetric} and \emph{inner},
    \ie  that         $\varphi(-p)=\overline{\varphi(p)}$ and 
    $\varphi(p)\in \U(F_\CC)$ for almost all $p>0$, respectively.
\end{defi}
We note that with $n=\dim F$ the $\cS_F$ space is naturally isomorphic to $\cS^{(n)}$ 
defined in Section \ref{sec:standard}.
\begin{rmk}
    \label{rmk:SemigroupStd}
    We take the standard subspace $H_F:=H_F(0,\infty)$ and $T(t)=U(\tau(t))$ 
    the one-parameter group of translation. 
    Let 
    $\Hil_{0,F}=\Hil_0\otimes_\RR F \cong \bigoplus_{i=1}^n\Hil_0$ from 
    Proposition \ref{prop:NetStd}, which decompose into 
    $n$ copies of the irreducible standard pair $(H_0,T_0)$, i.e. 
    $H_F = H_0\otimes_\RR F \cong \bigoplus_{i=1}^n H_0$.
    Then $\cE(H_F,T)$ can be identified with $\cS_F$ by 
    Theorem \ref{thm:stdred}.
\end{rmk}
\begin{thm}
    Let $H$ be a local Möbius covariant net of standard subspaces,
    then for each $V\in \cE(H(0,\infty),T)$, with $T(t)=U(\tau(t))$ the 
    one-parameter group of translations, there is
    local, time-translation covariant net of standard subspaces on $M_+$
    given by
    \begin{equation*}
        K_{V}:\O \equiv I_1 \times I_2  \longmapsto K_{V}(\O):=
        \overline{H(I_1) + VH(I_2)}
        \punkt
    \end{equation*}
\end{thm}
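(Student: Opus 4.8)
The plan is to check the three net axioms (isotony, locality, time-translation covariance) together with standardness of each $K_V(\cO)$ directly, isolating first the only properties of $V$ that I will need. As $(H(0,\infty),T)$ is a standard pair (positive generator $P$ by positive energy, and $T(t)H(0,\infty)\subseteq H(0,\infty)$ for $t\ge 0$ by isotony and covariance), an element $V\in\cE(H(0,\infty),T)$ is a unitary commuting with $T$ with $VH(0,\infty)\subseteq H(0,\infty)$; translating, $VH(a,\infty)\subseteq H(a,\infty)$ for all $a$. Passing to symplectic complements, using $(VK)'=VK'$ for unitary $V$ and the half-line Haag duality $H(a,\infty)'=H(-\infty,a)$ (valid for $H$ by the geometric action of the modular conjugation on the circle), gives the dual inclusion $V^\ast H(-\infty,a)\subseteq H(-\infty,a)$ for all $a$. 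These two inclusions are all that enter below.

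Time-translation covariance and isotony are then immediate. Since a time shift moves both light-ray coordinates, $\cO_t=(I_1+t)\times(I_2+t)$, I take $U(t):=T(t)=\e^{\ima tP}$ with $P>0$; as $V$ commutes with $T$ and $T(t)H(I)=H(I+t)$, one gets $T(t)K_V(\cO)=\overline{H(I_1+t)+VH(I_2+t)}=K_V(\cO_t)$. For isotony, $\cO\subseteq\tilde\cO$ is equivalent to $I_1\subseteq\tilde I_1$ and $I_2\subseteq\tilde I_2$, so $H(I_1)+VH(I_2)\subseteq H(\tilde I_1)+VH(\tilde I_2)$. Standardness of $K_V(\cO)$ I obtain by sandwiching: it contains the standard subspace $H(I_1)$, which is cyclic, hence $K_V(\cO)$ is cyclic; and choosing $c<\min(\inf I_1,\inf I_2)$ gives $H(I_1)\subseteq H(c,\infty)$ and $VH(I_2)\subseteq VH(c,\infty)\subseteq H(c,\infty)$, so $K_V(\cO)\subseteq H(c,\infty)$, which is standard, in particular separating, whence $K_V(\cO)$ is separating as well.

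The substantial point is locality. In light-ray coordinates $u=t-x$, $v=t+x$, two proper double cones $\cO_a=I_1^a\times I_2^a$ and $\cO_b=I_1^b\times I_2^b$ are space-like separated precisely when $(u_a-u_b)(v_a-v_b)<0$ for all their points, i.e. when $I_1^a<I_1^b,\ I_2^a>I_2^b$ or the reverse; by the symmetry $K_V(\cO_a)\subseteq K_V(\cO_b)'\iff K_V(\cO_b)\subseteq K_V(\cO_a)'$ it is enough to treat the first case. I must show $\Im$ vanishes on the four pairs spanning $K_V(\cO_a)$ and $K_V(\cO_b)$. The diagonal pairs use chiral locality only: $H(I_1^a)\subseteq H(I_1^b)'$ because $I_1^a\cap I_1^b=\emptyset$, and $\Im(VH(I_2^a),VH(I_2^b))=\Im(H(I_2^a),H(I_2^b))=0$ because $I_2^a\cap I_2^b=\emptyset$. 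For the cross terms, properness ($I_1^b<I_2^b$ and $I_1^a<I_2^a$) combined with the case assumption gives $I_1^b<I_2^a$ and $I_1^a<I_2^b$; then
\[
    \Im(VH(I_2^a),H(I_1^b))=\Im(H(I_2^a),V^\ast H(I_1^b)),
\]
and, writing $\beta=\sup I_1^b<\inf I_2^a=\gamma$,
\[
    V^\ast H(I_1^b)\subseteq H(-\infty,\beta)\subseteq H(-\infty,\gamma)=H(\gamma,\infty)'\subseteq H(I_2^a)',
\]
so this term vanishes; the term $\Im(H(I_1^a),VH(I_2^b))=\Im(V^\ast H(I_1^a),H(I_2^b))$ vanishes analogously using $I_1^a<I_2^b$.

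I expect the cross-term step to be the crux: it is exactly here that the semigroup condition $VH(0,\infty)\subseteq H(0,\infty)$, in its dual form on left half-lines, cooperates with the causal ordering of the two intervals, and it is the only place that genuinely uses $V\in\cE(H(0,\infty),T)$ rather than an arbitrary unitary commuting with $T$. The preparatory facts I would prove first as short lemmas are the half-line Haag duality $H(a,\infty)'=H(-\infty,a)$ and the dual inclusion $V^\ast H(-\infty,a)\subseteq H(-\infty,a)$, after which the geometric case distinction is routine.
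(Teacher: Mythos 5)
Your proof is correct and follows essentially the same route as the paper, which itself only sketches the argument and defers locality to Longo--Witten: the crux in both cases is the inclusion $VH(a,\infty)\subseteq H(a,\infty)$ for all $a$ combined with half-line Haag duality to kill the cross terms, and cyclicity of $K_V(\cO)$ from $H(I_1)\subseteq K_V(\cO)$. The only (cosmetic) difference is in the separating step, where the paper exhibits a cyclic subspace inside $K_V(\cO)'$ while you sandwich $K_V(\cO)$ inside the separating subspace $H(c,\infty)$ --- these are dual formulations of the same fact.
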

\begin{proof}
    Isotony is obvious. 
    Locality is shown like in \cite{LoWi2010} and follows from 
    $V\in \cE(H(0,\infty))$.
    But then we have also standardness, namely $K(\O)$ is cyclic because
    $H(I_1)$ is already cyclic and separating because $K(\O)'$ 
    contains $H(I_1^>)$ where $I_1^>$ is the left component of the two piece 
    complement of $I_1$. 
    Time-translation covariance holds because $V$ commutes with $T$.
\end{proof}
\begin{cor} 
    Let $(F, \langle\slot,\slot\rangle)$ be a real $n$-dimensional Euclidean
    space and 
    $H_F$ the net of standard subspaces from 
    Proposition \ref{prop:NetStd}.
    Then for each $V\in \cE(H_F(0,\infty),T)$, 
    i.e. each element in $\cS_F$ as described 
    in 
    Remark \ref{rmk:SemigroupStd}, there is a local, time-translation 
    covariant net 
    of standard subspaces on $M_+$.
\end{cor}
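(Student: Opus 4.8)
The plan is to obtain this statement as a direct specialization of the preceding Theorem to the concrete net $H_F$, so that almost all of the work has already been carried out. First I would invoke Proposition~\ref{prop:NetStd}, which asserts that $H_F$ is a local Möbius covariant net of standard subspaces on $\Hil_{0,F}$; this is precisely the hypothesis needed to apply the Theorem above with $H = H_F$. Taking $T(t) = U(\tau(t))$ to be the one-parameter group of translations, the Theorem then produces, for every $V \in \cE(H_F(0,\infty),T)$, the assignment
\[
    \O \equiv I_1 \times I_2 \longmapsto K_V(\O) := \overline{H_F(I_1) + V H_F(I_2)}
\]
and guarantees that it satisfies isotony, locality and time-translation covariance. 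Thus the only genuinely new content is the translation of the abstract semigroup $\cE(H_F(0,\infty),T)$ into the explicit function-theoretic description by $\cS_F$.

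For that translation I would appeal to Remark~\ref{rmk:SemigroupStd}. The relevant point is that $\Hil_{0,F} \cong \bigoplus_{i=1}^n \Hil_0$ decomposes into $n$ copies of the irreducible standard pair $(H_0,T_0)$, with $H_F(0,\infty) = \bigoplus_{i=1}^n H_0$ decomposing compatibly. Before invoking the reducible characterization of the semigroup, I would first check that $(H_F(0,\infty),T)$ is a genuine non-degenerate standard pair: one has $T(t)H_F(0,\infty) \subset H_F(0,\infty)$ for $t \geq 0$ because the translations map $(0,\infty)$ into itself, while positivity and non-degeneracy of the generator $P$ follow from positivity of energy of the Möbius representation together with the geometric modular action underlying Proposition~\ref{prop:NetStd}. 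Once $(H_F(0,\infty),T)$ is recognised as a multiple of the irreducible pair, Theorem~\ref{thm:stdred} identifies $\cE(H_F(0,\infty),T)$ with the matrix semigroup $\cS^{(n)}$, which in turn is naturally isomorphic to $\cS_F$ as already noted after the definition of $\cS_F$.

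There is essentially no real obstacle here, since the Corollary is a packaging of results already established; the only step demanding a little care is the verification that the decomposition into copies of $(H_0,T_0)$ is compatible simultaneously with the subspace structure and with the translation group, so that the passage from the abstract $\cE$ to the concrete $\cS_F \cong \cS^{(n)}$ is legitimate. Once that compatibility is in hand, combining the preceding Theorem with Remark~\ref{rmk:SemigroupStd} yields the claim immediately.
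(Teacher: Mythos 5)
Your proposal is correct and follows exactly the route the paper intends: the Corollary is a direct specialization of the preceding Theorem to the net $H_F$ of Proposition~\ref{prop:NetStd}, with the identification $\cE(H_F(0,\infty),T)\cong\cS_F$ supplied by Remark~\ref{rmk:SemigroupStd} via Theorem~\ref{thm:stdred}. Your additional verification that $(H_F(0,\infty),T)$ is a non-degenerate standard pair is sound and fills in a detail the paper leaves implicit.
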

\subsection{Local Nets of von Neumann Algebras on Minkowski Half-Plane}
\label{sec:BQFTNet}
Let us recall the definition of a local, time-translation covariant 
net of von Neumann algebras $\A_V$ on Minkowski half-plane \cite{LoWi2010}
related to a local Möbius covariant net $\A$ and an element of $V$ of the 
semigroup $\cE(\A)$.
\begin{defi}
    A \emph{local, time-translation covariant net (of von Neumann algebras) on Minkowski half-plane 
    (\BQFTnet)} on a Hilbert space $\Hil$ is a family 
    of von Neumann algebras $\{\cB(\cO)\}_{\cO\in\cK}$ on $\Hil$
    which fulfills:
    \begin{enumerate}[{\bf A.}]
        \item\textbf{Isotony.} $\O_1 \subset \O_2$ implies $\cB(\O_1) 
            \subset \cB(\O_2)$. 
        \item\textbf{Locality.} If  $\O_1,\O_2 \in \cK_+$ are space-like 
            separated then 
            $[\cB(\O_1),  \cB(\O_2)]=\{0\}$.
        \item\textbf{Time-translation covariance.} There is 
            a unitary continuous one-parameter group 
            $T(t)=\e^{\ima t P}$ 
            on $\Hil$ with positive generator $P$, such that 
            \begin{align*}
                T(t) \cB(\O) T(t)^\ast &= \cB(\O_t), &\O \in \cK_+
            \end{align*}
            where $\O_t = \O + (t,0)$ is the shifted double cone.
        \item\textbf{Vacuum.} There is a 
            (up to phase) unique $T$ invariant vector $\Omega \in \Hil$ 
            which is cyclic and separating for every 
            $\cB(\O)$ with $\O\in\K_+$.
    \end{enumerate}
\end{defi}
Let $\A$ be a Möbius covariant local net of von Neumann algebras 
on the Hilbert space $\Hil$, which we want to regard 
(by restriction) as a net on $\RR$. All unitaries $V$ on $\Hil$, which commutes
with the one-parameter group of translations $T(t)=U(\tau(t))$, satisfy 
$V\Omega = \Omega$ and the equivalent conditions
\begin{enumerate}
    \item $V\A(I_2)V^\ast $ commutes with $\A(I_1)$ for all intervals $I_1,I_2$ of $\RR$ such that 
        $I_2 >I_1$, i.e. $I_2$ is contained in the future of $I_1$,
    \item $V \A(a,\infty)V^\ast \subset \A(a,\infty)$ for all $a\in\RR$,
    \item $V \A(0,\infty)V^\ast \subset \A(0,\infty)$,
\end{enumerate}
form a semigroup denoted by $\cE(\A)$.
Translations $V:=T(t)\equiv U(\tau(t))$ with $t>0$ are elements in $\cE(V)$. 
Also \emph{internal symmetries} $V$ of $\A$, namely $V\A(I)V^\ast = \A(I)$ for all 
$I\in\cI$ give trivial examples of elements in $\cE(\A)$.
Besides
these trivial examples it is in general not much known if there exists
other elements, but if they exist they are of the form stated
as follows.
\begin{rmk}[\cf \cite{LoWi2010}] \label{rmk:cEA}
    Let $\A$ be a conformal net, then $\cE(\A) \subset \cE(H,T)$ with
    the one-parameter group of translations $T(t)=U(\tau(t))=\e^{\ima tP}$  and 
    the standard subspace $H=\overline{\A(0,\infty)_\mathrm{sa}\Omega}$. 
    In particular $H_0:=H\ominus\RR\Omega \subset \Hil_0$ with 
    $\Hil_0:=\Hil\ominus \CC\Omega$
    is a non-degenerated standard pair and by Theorem \ref{thm:stdred} we get  
    $V\restriction_{\Hil_0} = (\varphi_{hk}(P_0))$ 
    (by definition $V\Omega=\Omega$)    
    with $(\varphi_{hk})$ a matrix in $\cS^{(\infty)}$ and 
    $P_0=P\restriction_{\Hil_0}$, \cf \cite[Corollary 2.8]{LoWi2010}.
\end{rmk}
Let $\A$ be a conformal net and 
$V\in\cE(\A)$, then we define 
\begin{align*}
    &\A_V(\cO) := \A(I_1) \vee V\A(I_2)V^\ast, &\cO=I_1\times I_2, ~I_2>I_1
    \punkt
\end{align*}
The special case $V=1$ is exactly the conformal boundary net $\A_+$ 
defined in
\cite{LoRe2004}.
\begin{prop}[\cf \cite{LoWi2010}*{Proposition 3.3 and Corollary 3.4}] If $V\in\cE(\A)$, then $\A_V$ is a \BQFTnet. The map $\cE(\A)\ni V\longmapsto \A_V$ 
is one-to-one modulo internal symmetries,
\ie $\A_{V_1}=\A_{V_2}$ with $V_1,V_2\in \cE(\A)$ iff $V_1=V_2V$ with $V$ an 
internal symmetry.
\end{prop}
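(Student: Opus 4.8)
The plan is to verify the four \BQFTnet axioms for $\A_V$ and then to establish the injectivity statement; the substance lies in locality (Axiom B) and in the ``only if'' direction of injectivity, while the remaining items follow directly from the defining properties of $\cE(\A)$. First the routine axioms. For isotony I would note that $\O_1=I_1\times I_2\subset\O_2=J_1\times J_2$ forces $I_1\subset J_1$ and $I_2\subset J_2$ on the two lightrays, so isotony of $\A$ together with $\Ad V$ preserving inclusions gives $\A_V(\O_1)\subset\A_V(\O_2)$. For time-translation covariance I would use that $V$ commutes with $T(t)=U(\tau(t))$: since $T(t)\A(I)T(t)^\ast=\A(I+t)$ and $T(t)V=VT(t)$, conjugation by $T(t)$ passes through $V$ and yields $T(t)\A_V(\O)T(t)^\ast=\A(I_1+t)\vee V\A(I_2+t)V^\ast=\A_V(\O_t)$. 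The invariant vector is the vacuum $\Omega$ of $\A$, which is $T$-invariant and, by positivity of energy, the unique (up to phase) zero-energy vector. For the vacuum axiom, cyclicity is immediate because $\A(I_1)\subset\A_V(\O)$ and $\Omega$ is cyclic for $\A(I_1)$ by Reeh--Schlieder. Separating is the one place among the easy axioms where a defining property of $\cE(\A)$ enters: writing $b=\inf I_1<\inf I_2=:c$, property (2) gives $V\A(I_2)V^\ast\subset V\A(c,\infty)V^\ast\subset\A(c,\infty)\subset\A(b,\infty)$, while $\A(I_1)\subset\A(b,\infty)$, so $\A_V(\O)\subset\A(b,\infty)$ and $\Omega$, being separating for the proper-interval algebra $\A(b,\infty)$, is separating for $\A_V(\O)$.

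The heart of the first part is locality. Two proper double cones $\O_1=I_1\times I_2$ and $\O_2=J_1\times J_2$ are space-like separated precisely when the two lightray orderings are opposite, i.e. (after possibly exchanging $\O_1,\O_2$) $I_1<J_1$ and $I_2>J_2$. I would then expand $[\A_V(\O_1),\A_V(\O_2)]$ into four commutators. The two ``diagonal'' ones, $[\A(I_1),\A(J_1)]$ and $V[\A(I_2),\A(J_2)]V^\ast$, vanish by locality of $\A$ because $I_1,J_1$ and $I_2,J_2$ are disjoint. The two ``mixed'' ones are exactly where the semigroup condition is used: from $J_2>J_1>I_1$ we get $J_2>I_1$, so by property (1) the algebra $V\A(J_2)V^\ast$ commutes with $\A(I_1)$; symmetrically $I_2>J_2>J_1$ gives $I_2>J_1$, so $V\A(I_2)V^\ast$ commutes with $\A(J_1)$. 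Hence all four commutators vanish and Axiom B holds.

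For injectivity, the ``if'' direction is a one-line computation: if $V_1=V_2V$ with $V$ an internal symmetry, then $V\A(I_2)V^\ast=\A(I_2)$, whence $V_1\A(I_2)V_1^\ast=V_2\A(I_2)V_2^\ast$ and $\A_{V_1}=\A_{V_2}$. For the converse I would try to recover the conjugated algebra intrinsically through the relative-commutant identity
\begin{equation*}
    V\A(I_2)V^\ast = \A_V(\O)\cap\A(I_1)'\komma\qquad \O=I_1\times I_2\punkt
\end{equation*}
The inclusion ``$\subseteq$'' is immediate from property (1) (it is precisely the assertion that $V\A(I_2)V^\ast$ commutes with $\A(I_1)$), and the reverse inclusion is the technical crux: one must show that adjoining the factor $\A(I_1)$ contributes nothing to the relative commutant inside $\A(I_1)'$. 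I expect this to be the main obstacle, since for type \threeone algebras a pair of commuting subfactors does not split off spatially. I would handle it using the factoriality (type \threeone) of the local algebras together with the disjointness of $I_1$ and $I_2$ and, in the completely rational case of interest here, the split property, which yields $\A(I_1)\vee V\A(I_2)V^\ast\cong\A(I_1)\bar\otimes V\A(I_2)V^\ast$ and makes the relative-commutant computation transparent, as in \cite{LoWi2010}.

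Granting the identity, the conclusion is straightforward. If $\A_{V_1}=\A_{V_2}$, then for every $\O$ the two nets share the common subalgebra $\A(I_1)$, so
\begin{equation*}
    V_1\A(I_2)V_1^\ast=\A_{V_1}(\O)\cap\A(I_1)'=\A_{V_2}(\O)\cap\A(I_1)'=V_2\A(I_2)V_2^\ast
\end{equation*}
for all bounded intervals $I_2$ (any such $I_2$ occurs as the upper lightray of some proper $\O$). Setting $W=V_2^\ast V_1$ this reads $W\A(I_2)W^\ast=\A(I_2)$; by additivity it extends to all $I\in\cI$, and since $W$ commutes with $T$ and fixes $\Omega$, it is an internal symmetry with $V_1=V_2W$, as claimed.
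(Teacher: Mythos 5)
The paper offers no proof of this proposition: it is quoted verbatim from Longo--Witten (\cite{LoWi2010}, Proposition 3.3 and Corollary 3.4), so there is no in-paper argument to compare yours against. On its own merits, your verification of the four axioms is correct and complete: the translation of space-like separation in $M_+$ into the opposite lightray orderings $I_1<J_1$, $I_2>J_2$ is right, the two mixed commutators are exactly where property (1) of $\cE(\A)$ enters, and the observation that $\A_V(\O)\subset\A(b,\infty)$ (via property (2)) is the correct way to get the separating property from Reeh--Schlieder. The ``if'' direction of injectivity and the final step (passing from $W\A(I)W^\ast=\A(I)$ for bounded $I$ to all $I\in\cI$ via additivity and Haag duality) are also fine.

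The one genuine issue is the relative-commutant identity $V\A(I_2)V^\ast=\A_V(\O)\cap\A(I_1)'$, which you correctly single out as the crux. For commuting von Neumann algebras $M$ and $N\subset M'$ with $M$ a factor, the identity $(M\vee N)\cap M'=N$ is \emph{false} in general: if $N\subset M'$ is an irreducible subfactor then $(M\vee N)'=M'\cap N'=\CC$, so $(M\vee N)\cap M'=M'\supsetneq N$. So some structural input is unavoidable, and the split property is indeed sufficient: choosing $c$ with $\sup I_1<c<\inf I_2$, splitness of $\A(I_1)\subset\A((-\infty,c))$ together with Haag duality gives a spatial isomorphism $\A(I_1)\vee\A(c,\infty)\cong\A(I_1)\bar\otimes\A(c,\infty)$ carrying $\A(I_1)\vee V\A(I_2)V^\ast$ onto $\A(I_1)\bar\otimes V\A(I_2)V^\ast$ (note $V\A(I_2)V^\ast\subset\A(c,\infty)$ by property (2)), and then the commutation theorem yields the relative commutant $Z(\A(I_1))\bar\otimes V\A(I_2)V^\ast=\CC\bar\otimes V\A(I_2)V^\ast$ by factoriality. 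You should make this computation explicit rather than gesturing at it. The resulting caveat is that your proof establishes the proposition only for split nets, whereas the statement is made for an arbitrary conformal net $\A$; this costs nothing for the applications in this paper (all the nets used here are completely rational, hence split), but for the general statement you must either add the split hypothesis or defer to the argument in \cite{LoWi2010}.
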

The study of such \BQFTnets $\A_V$ associated with a \CFTnet $\A$ simplifies 
therefore to the study of $\cE(\A)$. 
So the question is the characterization and classification 
of the semigroup $\cE(\A)$ for a given \CFTnet $\A$. 
The rest of the paper we investigate in the explicit construction 
of families of such elements.
\subsection{Second Quantization Boundary Nets}
\label{sec:BQFTAbNet}

Let $(F, \langle \slot,\slot\rangle)$ be a $n$-dimensional Euclidean space. 
For the net $\A_F$ of Abelian currents constructed in Section \ref{sec:ChAbNet}
we know all $V =\Gamma(V_0)\in \cE(\A_F)$ which are second quantization
unitaries by the following theorem. 

\begin{thm}[\cf {\cite[Theorem 3.6]{LoWi2010}}]
    \label{thm:semigroupAb}
    $V=\Gamma(V_0)\in \cE(\A_F)$  if and only if 
    $V_0=\varphi(P_0)$ 
    with $\varphi\in \cS_F$.
\end{thm}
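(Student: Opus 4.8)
The plan is to reduce the statement to the one-particle characterization of $\cE(H_F,T)$ already recorded in Remark~\ref{rmk:SemigroupStd}. Since that remark identifies $\cE(H_F,T)$ with $\cS_F$ via Theorem~\ref{thm:stdred}, where $H_F=H_F(0,\infty)$ and $T(t)=U_0(\tau(t))$ is the one-particle translation group with positive generator $P_0$, it suffices to establish the equivalence
\begin{align*}
    \Gamma(V_0)\in\cE(\A_F) \quad\Longleftrightarrow\quad V_0\in\cE(H_F,T)\punkt
\end{align*}
In other words, I want to transport the three defining conditions of $\cE(\A_F)$ (commutation with translations, fixing of $\Omega$, and $V\A_F(0,\infty)V^\ast\subset\A_F(0,\infty)$) back and forth through the second quantization functor $\Gamma$.

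For the implication $\Leftarrow$ I would start from $V_0\in\cE(H_F,T)$, so that $V_0$ commutes with $T(t)$ and $V_0H_F\subset H_F$, and check the three properties for $V=\Gamma(V_0)$. Functoriality gives $\Gamma(V_0)\Gamma(T(t))=\Gamma(V_0T(t))=\Gamma(T(t)V_0)=\Gamma(T(t))\Gamma(V_0)$, so $V$ commutes with the Fock-space translations $\Gamma(T(t))$, while $\Gamma(V_0)\Omega=\Gamma(V_0)\e^0=\e^{V_0\cdot 0}=\Omega$ is automatic. For the algebraic condition I use $\Gamma(V_0)W(f)\Gamma(V_0)^\ast=W(V_0f)$ together with the fact that conjugation by a unitary commutes with the double commutant to compute
\begin{align*}
    \Gamma(V_0)\A_F(0,\infty)\Gamma(V_0)^\ast
    = \{W(V_0f):f\in H_F\}''
    = R(V_0H_F)\subset R(H_F)=\A_F(0,\infty)\komma
\end{align*}
the inclusion being immediate from $V_0H_F\subset H_F$ and monotonicity of $R$. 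Hence $V\in\cE(\A_F)$.

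For the converse $\Rightarrow$ I would extract the one-particle conditions from $V=\Gamma(V_0)\in\cE(\A_F)$. Commutation of $\Gamma(V_0)$ with $\Gamma(T(t))$, combined with injectivity of $\Gamma$ on unitaries (its restriction to the one-particle subspace is the identity), forces $V_0$ to commute with $T(t)$. The containment $\Gamma(V_0)\A_F(0,\infty)\Gamma(V_0)^\ast\subset\A_F(0,\infty)$ reads $R(V_0H_F)\subset R(H_F)$, with $V_0H_F$ again a closed real subspace. This last step is where I expect the main obstacle: Proposition~\ref{prop:CCR} supplies only the equality criterion $R(K)=R(H)\Leftrightarrow\overline K=\overline H$, so I must upgrade it to the inclusion $R(K)\subset R(H)\Rightarrow K\subset H$ for closed $K,H$. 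I would do this with the Weyl relations: the join $R(V_0H_F)\vee R(H_F)$ is generated by the operators $W(h)$ with $h\in V_0H_F+H_F$, hence equals $R(\overline{V_0H_F+H_F})$ by Proposition~\ref{prop:CCR}(1); the assumed inclusion makes this join equal to $R(H_F)$, so $\overline{V_0H_F+H_F}=\overline{H_F}=H_F$ and therefore $V_0H_F\subset H_F$. This yields $V_0\in\cE(H_F,T)$, and Remark~\ref{rmk:SemigroupStd} then produces $\varphi\in\cS_F$ with $V_0=\varphi(P_0)$, completing the proof.
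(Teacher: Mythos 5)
Your argument is correct and is essentially the standard one: the paper itself gives no proof of this statement but imports it from \cite{LoWi2010}*{Theorem 3.6}, whose proof is exactly your reduction of $\Gamma(V_0)\in\cE(\A_F)$ to $V_0\in\cE(H_F,T)$ via $\Gamma(V_0)W(f)\Gamma(V_0)^\ast=W(V_0f)$, followed by the one-particle characterization of Theorem \ref{thm:stdred} as recorded in Remark \ref{rmk:SemigroupStd}. Your lattice-theoretic upgrade of Proposition \ref{prop:CCR}(1) to the implication $R(K)\subset R(H)\Rightarrow K\subset H$ for closed subspaces is sound and fills the only step that is not an immediate consequence of functoriality of $\Gamma$.
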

\begin{rmk}
    These models are exactly the second quantization of the models constructed
    in Section \ref{sec:BQFTStdNet}.
\end{rmk}

The next task is to find which $V\in\cE(\A_F)$ extend to 
a $\tilde V \in \cE(\A_L)$ for 
an even integral lattice $L\subset F$.

\subsection{Semigroup for Subnets}
If we have a conformal net $\cB$ with conformal subnet $\A$ and $V\in\cE(\cB)$ 
the question
arises when $V$ restricts to an element in $\cE(\A)$. 
\begin{lem}\label{lem:JonesvsCondExp}
    Let $\Omega \in \Hil$ be a cyclic and separating vector for the von Neumann 
    factor
    $B\subset\B(\Hil)$ and separating for the subfactor $A \subset B$ 
    and we assume there is a conditional expectation $E_A:B\to A$ which leaves 
    the state $\phi_\Omega=(\Omega,\slot\Omega)$ invariant. 
    Let 
    $V \in \U(\Hil)$ with $V\Omega = \Omega$ and
    $VBV^\ast \subset B$. Then the following is equivalent
    \begin{enumerate}
        \item $V$ commutes with the Jones projection $e_A$.
        \item $E_A$ and $\Ad_V$ commute, \ie
            $E_A(VbV^\ast)=VE_A(b)V^\ast$ for all $b\in B$.
    \end{enumerate}
\end{lem}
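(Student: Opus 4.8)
The plan is to reduce everything to the single defining relation of the Jones projection. Recall that $e_A$ is the orthogonal projection onto $\overline{A\Omega}$, and that for every $x\in B$ one has $e_A x\Omega = E_A(x)\Omega$. First I would recall the short proof of this relation: for $a\in A$, using that $E_A$ is an $A$-bimodule map and that $\phi_\Omega\circ E_A=\phi_\Omega$, one gets $(a\Omega,x\Omega)=\phi_\Omega(a^\ast x)=\phi_\Omega(a^\ast E_A(x))=(a\Omega,E_A(x)\Omega)$, so that $x\Omega-E_A(x)\Omega\perp A\Omega$; since $E_A(x)\Omega\in\overline{A\Omega}$ this gives $e_A x\Omega=E_A(x)\Omega$. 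The only other inputs I need are that $V\Omega=\Omega$ forces $V^\ast\Omega=\Omega$, that $VBV^\ast\subset B$ (so the relation above may be applied to $x=VbV^\ast$), and that $B\Omega$ is dense while $\Omega$ is separating for $B$.

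For the implication $(2)\Rightarrow(1)$ I would compare the two bounded operators $Ve_A$ and $e_AV$ on the dense subspace $B\Omega$. For $b\in B$ I would compute
\begin{align*}
    Ve_A b\Omega = VE_A(b)\Omega = VE_A(b)V^\ast\Omega = E_A(VbV^\ast)\Omega = e_A VbV^\ast\Omega = e_A Vb\Omega,
\end{align*}
where the first and fourth equalities are the Jones relation (applied to $b$ and to $VbV^\ast\in B$), the second and last use $V^\ast\Omega=\Omega$, and the middle equality is hypothesis (2). Since this holds for all $b\in B$ and $B\Omega$ is dense, I conclude $Ve_A=e_AV$.

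For $(1)\Rightarrow(2)$ I would run the same computation in reverse. Assuming $Ve_A=e_AV$, for $b\in B$ I get $E_A(VbV^\ast)\Omega=e_A VbV^\ast\Omega=e_A Vb\Omega=Ve_A b\Omega=VE_A(b)\Omega=VE_A(b)V^\ast\Omega$. Now both $E_A(VbV^\ast)$ and $VE_A(b)V^\ast$ belong to $B$ (the latter because $E_A(b)\in A\subset B$ and $VBV^\ast\subset B$), and they agree on $\Omega$; since $\Omega$ is separating for $B$ they coincide, which is exactly $E_A(VbV^\ast)=VE_A(b)V^\ast$, \ie condition (2).

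I do not expect a serious obstacle: the whole argument is the interplay of the identity $e_A x\Omega=E_A(x)\Omega$ with the fixed-vector property $V\Omega=V^\ast\Omega=\Omega$. The one point that needs care is bookkeeping about where elements live---making sure $VbV^\ast\in B$ so that the Jones relation applies, and, in the direction $(1)\Rightarrow(2)$, invoking that $\Omega$ separates $B$ to upgrade an equality of vectors to an equality of operators. The factoriality of $A$ and $B$ is not really used beyond fixing the setting, and the conditional expectation $E_A$ is given in the hypotheses.
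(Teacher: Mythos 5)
Your proof is correct and follows essentially the same route as the paper: both arguments rest on the Jones relation (you use $e_Ax\Omega=E_A(x)\Omega$, the paper uses the equivalent $E_A(b)e_A=e_Abe_A$ evaluated on $\Omega$), combined with $V\Omega=V^\ast\Omega=\Omega$, cyclicity of $\Omega$ for $B$ in one direction and its separating property in the other. The only differences are cosmetic: you swap the order of the two implications and include the short standard derivation of the Jones relation from $\phi_\Omega$-invariance of $E_A$.
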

\begin{proof}    
    By definition we have $E_A(b)  e_A = e_A b e_A$ for all $b\in B$. 
    Let us assume $[e_A,V]=0$, then
    \begin{align*}
        VE_A(b) V^\ast \Omega 
            &= V E_A(b) \Omega\\ 
            &= V E_A(b) e_A \Omega\\ 
            &= Ve_A b e_A V^\ast \Omega\\
            &= e_A V b V^\ast  e_A \Omega\\
            &= E_A (VbV^\ast) e_A \Omega\\
            &= E_A (VbV^\ast) \Omega
    \end{align*}
    and by the separating property $[E_A,\Ad_V]=0$ follows.
    On the other hand,  let us assume now $[E_A,\Ad_V]=0$.    Then 
    \begin{align*}
        e_A V b \Omega &= e_A V b V^\ast e_A \Omega 
        \\&= E_A(\Ad_V (b)) e_A \Omega
        \\&= \Ad_V(E_A(b)) e_A \Omega
        \\&= V E_A(b) \Omega
        \\&= V E_A(b)e_A \Omega
        \\&= V e_A b e_A \Omega
        \\&= V e_A b \Omega
    \end{align*}
    and cyclicity implies that $e_AV=Ve_A$.
\end{proof}
\begin{prop}
    \label{prop:restrictions}
    Let $\cB$ be a conformal net on $\Hil$ with vacuum $\Omega$; 
    let $\A$ be conformal subnet of $\cB$ and let $e$ be the projection on 
    $\overline{\A(I)\Omega}$ for some $I\in\cI$. 
    Further let $V \in \cE(\cB)$ and $\eta=\Ad V$, then the following are equivalent
    \begin{enumerate}
        \item $V \restriction_{e\Hil} \in \cE(\A)$, regarding $\A$ as a conformal net on $e\Hil$.
        \item For every $a\in \RR$ it is $\eta (E_a(b))=E_a(\eta(b))$ 
            for all $b\in \cB(a,\infty)$,
            where $E_a$ 
            is the conditional expectation 
            $\cB(a,\infty) \longrightarrow \A(a,\infty)$. 
        \item It is $\eta (E_0(b))=E_0(\eta(b))$ 
            for all $b\in \cB(a,\infty)$,
            where $E_0$ is the conditional expectation 
            $\cB(0,\infty) \longrightarrow \A(0,\infty)$. 
        \item \label{item:proj} 
            $V$ commutes with the projection $e$.
    \end{enumerate}
\end{prop}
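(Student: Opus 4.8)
The plan is to run everything through Lemma \ref{lem:JonesvsCondExp}, applied to the half-line inclusions $\A(a,\infty)\subset\cB(a,\infty)$. First I would record the structural facts about $e$. Since $\A$ is itself a conformal net on $e\Hil=:\Hil_\A$, the Reeh--Schlieder property gives that $\overline{\A(I)\Omega}$ is independent of $I\in\cI$, so $e$ is simply the projection onto $\Hil_\A$, no matter which interval defines it. Exactly as in the proof of Lemma \ref{lem:subnet}, for each $a\in\RR$ the Bisognano--Wichmann property identifies the modular group of $(\cB(a,\infty),\Omega)$ with the dilations fixing $(a,\infty)$, which by $\Mob$-covariance of $\A$ leave $\A(a,\infty)$ invariant; Takesaki's theorem then produces a normal, vacuum-preserving conditional expectation $E_a\colon\cB(a,\infty)\to\A(a,\infty)$, and $e$ is the associated Jones projection. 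As $\A(a,\infty)$ and $\cB(a,\infty)$ are factors and $\Omega$ is cyclic and separating for them, the hypotheses of Lemma \ref{lem:JonesvsCondExp} are met.

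Next I would establish $(4)\Leftrightarrow(2)$ and $(4)\Leftrightarrow(3)$ directly. For $V\in\cE(\cB)$ we have $V\Omega=\Omega$ and $V\cB(a,\infty)V^\ast\subset\cB(a,\infty)$ for every $a$, so Lemma \ref{lem:JonesvsCondExp} applied to $A=\A(a,\infty)\subset B=\cB(a,\infty)$ asserts that $V$ commutes with $e$ iff $E_a\circ\Ad_V=\Ad_V\circ E_a$ on $\cB(a,\infty)$, where $\Ad_V=\eta$. Quantifying over $a$ this is precisely $(4)\Leftrightarrow(2)$, and the choice $a=0$ yields $(4)\Leftrightarrow(3)$; since $(2)\Rightarrow(3)$ is immediate, the three conditions $(2)$, $(3)$, $(4)$ are equivalent.

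It then remains to connect these with $(1)$. For $(4)\Rightarrow(1)$, if $[V,e]=0$ then $V$ restricts to a unitary $W=V\restriction_{e\Hil}$ of $\Hil_\A$, and I would verify $W\in\cE(\A)$: the representation $U$ of $\Mob$ commutes with $e$ (being the reduction defining $\A$ on $e\Hil$), so $W$ commutes with the translations $U(\tau(t))$ and fixes $\Omega$; for the endomorphism condition, given $a\in\A(0,\infty)$ the already proved $(4)\Rightarrow(3)$ gives $E_0(VaV^\ast)=VE_0(a)V^\ast=VaV^\ast$, so $VaV^\ast\in\cB(0,\infty)$ is $E_0$-fixed and hence lies in $\A(0,\infty)$, whence $W\A(0,\infty)W^\ast\subset\A(0,\infty)$. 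Conversely, membership of $V\restriction_{e\Hil}$ in $\cE(\A)$ presupposes that $V$ maps $e\Hil$ unitarily onto itself, and a unitary of $\Hil$ preserving $e\Hil$ preserves $(e\Hil)^\perp$ as well, so $[V,e]=0$; this gives $(1)\Rightarrow(4)$ and closes the cycle.

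The only genuinely delicate point is the preliminary step: showing that $e$ is the common Jones projection of all the inclusions $\A(a,\infty)\subset\cB(a,\infty)$ and producing the vacuum-preserving conditional expectations $E_a$, which rests on Reeh--Schlieder, Bisognano--Wichmann and Takesaki just as in Lemma \ref{lem:subnet}. Once the hypotheses of Lemma \ref{lem:JonesvsCondExp} are checked for these half-line inclusions, the four equivalences are essentially bookkeeping around that lemma.
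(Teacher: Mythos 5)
Your proof is correct and follows essentially the same route as the paper: identify $e$ as the common Jones projection of the half-line inclusions and let Lemma \ref{lem:JonesvsCondExp} carry the equivalence between commutation with $e$ and commutation with the conditional expectations, then deduce the endomorphism property of $\Ad V\restriction_{\A(0,\infty)}$ from $E_0$-invariance. Your treatment is in fact slightly tidier in two places: you handle conditions (2) and (3) explicitly for every $a$ (the paper only argues (1)$\Leftrightarrow$(4)), and your (1)$\Rightarrow$(4) replaces the paper's computation on $\A(0,\infty)\Omega$ plus block-matrix argument by the direct observation that a unitary of $\Hil$ restricting to a unitary of $e\Hil$ must preserve $(e\Hil)^\perp$ and hence commute with $e$.
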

\begin{proof}
    The projection $e$ does not depend on $I$ and is the Jones projection of
    the inclusion $\A(I)\subset \cB(I)$ for any $I\in \cI$. 
    Let $V\in\cE(\cB)$ such that 
    $V\restriction_{e\Hil} \in \cE(\A)$. We show that (\ref{item:proj}) 
    is true, namely for $a\in \A(0,\infty)$ 
    using $V\A(0,\infty)V^\ast \subset \A(0,\infty)$ we compute
    \begin{align*} 
         eVa \Omega   &= eVaV^\ast \Omega \\
                        &= Va V^\ast \Omega\\
                        & = V a \Omega\\
                        & = V e a \Omega
    \end{align*}
    thus by continuity $[V,e] \restriction e \Hil =0$.
    Let us write $\Hil = e\Hil \oplus e^\perp \Hil$ and let $V_1=eVe=Ve$ and 
    $V_2= e^\perp V e^\perp$; we write
    \begin{align*} 
        V=  \begin{pmatrix}
                V_1     & X \\ 
                0       & V_2 
            \end{pmatrix}
        \punkt
    \end{align*}
    Because $V$ and $V_1$ are unitaries on $\Hil$ and $e\Hil$, 
    respectively, it follows that $X=0$.
    We claim that also $[V,e] \restriction e^\perp \Hil =0$, namely for
    $\xi \in e_A^\perp \Hil$ we calculate
    $$
        e_A V\xi    = e_A V_2\xi
                    = e_A e_A^\perp V e_A^\perp \xi 
                    = 0 
                    = V e_A \xi 
        \punkt
    $$
    Conversely, let $V\in \cE(\A)$ with  $[e,V] = 0$. 
    By Lemma \ref{lem:JonesvsCondExp} $\eta=\Ad_V$ commutes with the 
    conditional expectation $E:\cB(0,\infty) \longrightarrow \A(0,\infty)$,
    \ie $E(VaV^\ast) = VE(a)V^\ast$ for $a\in\A(0,\infty)$. We claim
    that $\Ad_V$ is an endomorphism of $\A(0,\infty)$, namely
    \begin{align*}
        V\A(0,\infty)V^\ast &= VE( \cB(0,\infty))V^\ast\\
                            &= E(V\cB(0,\infty)V^\ast)\\
                            &\subset  E(\cB(0,\infty))\\
                            &=\A(0,\infty) 
        \punkt
    \end{align*}
    Since $V$ and $e$ commute $V\restriction_{e\Hil} = eVe$ is a unitary on 
    $\e\Hil$ and commutes with $T(t)\restriction_{eH}=eT(t)e$, i.e.
    $V\restriction_{e\Hil} \in \cE(\A)$.
\end{proof}
\subsection{Extensions for the Crossed Product with Free Abelian Groups}
Let $\M$ be a type \three{} factor. $\End(\M)$ is a tensor-\Cstar-category with
objects $\rho\in\End(\M)$ normal endomorphisms of $\M$ and 
arrows $\Hom_\M(\rho,\eta)=
\{t\in \M: t\rho(x)=\eta(x) t \text{ for all } x\in \M\}$. 
For any $\rho \in\End(\M)$ we have $\Hom(\rho,\rho)\ni\id_\rho:=1$. The tensor 
product is defined by the composition $\eta\otimes\rho:=\eta\rho$
and for $f\in\Hom_\M(\rho,\rho')$ and $g\in\Hom_\M(\eta,\eta')$
it is $f\otimes g := f\rho(g) = \rho'(g) f \in \Hom_\M(\rho\eta,\rho'\eta')$.

Let $L$ be a free Abelian group of rank $n$ with generators ($\ZZ$-basis) 
$\lbrace\alpha_1,\ldots\alpha_n\rbrace$ and let $\beta$ be a faithful action on
a von Neumann algebra $\M \subset \B(\Hil_0)$ with cyclic and separating vector
$\Omega$. 
The action of $L$ is characterized by the action of the automorphisms 
$\beta_i:= \beta_{\alpha_i}$. 
Let $\Hil=\oplus_{\alpha\in L}\Hil_\alpha\supset \Hil_0$. 
We assume $\beta_i$ to be implemented by unitaries $U_i$ mapping 
$\Hil_\alpha \to \Hil_{\alpha+\alpha_i}$. 
We note that
\begin{align*}
    L\ni g=\sum_{i=1}^n g_i\alpha_i &\longmapsto 
    U_g=U_1^{g_1} \cdots U_n^{g_n} & g_i\in \ZZ
\end{align*}
defines a projective representation of $L$ on $\Hil$.

Let $\tilde \M = \M \rtimes_\beta Q \subset \B(\Hil)$ be the von Neumann 
algebra generated by $\M$ and $\{U_i\}$ on $\Hil$.
We are interested in extension of endomorphisms of $\M$ to endomorphisms
of $\tilde \M$. 
The following is in principal a generalization of 
\cite[Proposition 3.8 and 3.9]{LoWi2010}.
\begin{lem}
    \label{lem:ext}
    Let $\M$ as above and $\tilde \M_0\subset\tilde \M$ the algebra finitely
    generated by $\M$ and $\{U_\alpha\}_{\alpha\in L}$.
    Further let $R:L \to L$ be an automorphism of $L$ and for $i=1,\ldots, n$
    let  $\tilde \beta_i := \beta_{R(\alpha_i)}$ be automorphisms of $\M$ 
    having $\tilde U_i=U_{R(\alpha_i)}$ as implementing unitaries.
    If there exist unitaries 
        $z_i\in \Hom_\M(\tilde\beta_i\circ\eta,\eta\circ\beta_i)$
    satisfying 
    \begin{align*} 
        z_i\tilde\beta_i(z_j) &= z_j\tilde\beta_j(z_i)
    \end{align*}
    then $\eta$ extends to an endomorphism 
    $\tilde\eta_0$ of $\tilde \M_0$ characterized by 
    $\tilde\eta_0(U_i) = z_i \tilde  U_i$. 
\end{lem}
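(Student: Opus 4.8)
The plan is to define $\tilde\eta_0$ on the generators of $\tilde\M_0$ by $\tilde\eta_0\restriction_\M=\eta$ and $\tilde\eta_0(U_i)=z_i\tilde U_i$, and then to check that this assignment is compatible with all the defining relations of the crossed product, so that it extends (uniquely) to an endomorphism of the finitely generated algebra $\tilde\M_0$. Since every element of $\tilde\M_0$ is a finite sum $\sum_\alpha a_\alpha U_\alpha$ with $a_\alpha\in\M$, the algebra is presented by $\M$ together with the unitaries $U_\alpha$ subject to (a) the covariance relations $U_i a U_i^{-1}=\beta_i(a)$ for $a\in\M$, and (b) the projective group relations among the $U_\alpha$, say $U_iU_j=\hat c(\alpha_i,\alpha_j)U_jU_i$ with $\hat c$ the commutator phase of the projective representation $g\mapsto U_g$. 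It therefore suffices to show that the images $W_i:=z_i\tilde U_i$ satisfy, relative to $\eta(\M)$, exactly the same relations that the $U_i$ satisfy relative to $\M$.

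First I would record the role of the two hypotheses. Writing $\tilde\beta_i=\Ad\tilde U_i$, the fact that $\tilde U_i$ implements $\tilde\beta_i$ gives $W_i\eta(a)W_i^{-1}=z_i\tilde\beta_i(\eta(a))z_i^{-1}$, and the intertwining hypothesis $z_i\in\Hom_\M(\tilde\beta_i\circ\eta,\eta\circ\beta_i)$, i.e. $z_i\tilde\beta_i(\eta(a))=\eta(\beta_i(a))z_i$, collapses this to $\eta(\beta_i(a))$. Hence relation (a) is preserved, and this is exactly where the $\Hom$-condition on the $z_i$ enters. For the group part I would use the cocycle identity $z_i\tilde\beta_i(z_j)=z_j\tilde\beta_j(z_i)$ to define, for a general $\alpha=\sum_i g_i\alpha_i\in L$, a unitary $z_\alpha$ by the $\tilde\beta$-cocycle rule $z_{\alpha+\gamma}=z_\alpha\tilde\beta_\alpha(z_\gamma)$ (up to a scalar coming from the projective cocycle); because $L$ is free abelian, the stated identity is precisely what makes this independent of the chosen word in the generators, so that $\alpha\mapsto z_\alpha$ is a well-defined unitary $1$-cocycle for the $\tilde\beta$-action and one can set $\tilde\eta_0(U_\alpha)=z_\alpha\tilde U_\alpha$.

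The main obstacle is multiplicativity on products $U_\alpha U_\gamma$, which reduces to comparing $W_iW_j=z_i\tilde\beta_i(z_j)\tilde U_i\tilde U_j$ with $W_jW_i=z_j\tilde\beta_j(z_i)\tilde U_j\tilde U_i$: the cocycle identity makes the two operator prefactors equal, so the whole comparison collapses onto the scalar commutation phase of $\tilde U_i=U_{R(\alpha_i)}$ and $\tilde U_j=U_{R(\alpha_j)}$. Matching relation (b) then amounts to the identity $\hat c(R(\alpha_i),R(\alpha_j))=\hat c(\alpha_i,\alpha_j)$, i.e. to $R$ respecting the commutator bicharacter of the projective representation; this is the delicate point of the whole argument, and it holds in the situations of interest (where $R$ is induced by a lattice symmetry preserving $\langle\slot,\slot\rangle$ modulo $2$, so that $\hat c(\alpha,\gamma)=(-1)^{\langle\alpha,\gamma\rangle}$ is $R$-invariant). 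Once the phases are matched, $\tilde\eta_0$ respects every relation in the presentation and hence extends to an endomorphism of $\tilde\M_0$ with $\tilde\eta_0\restriction_\M=\eta$ and $\tilde\eta_0(U_i)=z_i\tilde U_i$; that it preserves adjoints is automatic, since each $W_i$ is unitary and $\tilde\eta_0$ is multiplicative. Throughout, the real bookkeeping is in tracking the projective phases, which is exactly what the cocycle condition on the $z_i$ is designed to absorb.
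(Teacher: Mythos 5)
Your argument is correct and is essentially the paper's proof in a different packaging: the intertwining hypothesis disposes of the covariance relation exactly as in the paper, and your word-independent cocycle $\alpha\mapsto z_\alpha$ is precisely the paper's $v_g$, built there as an ordered ``diamond''-product whose order-independence (including for negative exponents, via $z_i^-=\tilde\beta_i^{-1}(z_i^\ast)$) is deduced from $z_i\tilde\beta_i(z_j)=z_j\tilde\beta_j(z_i)$, after which multiplicativity on monomials $aU_g$ is a direct computation. The phase-matching condition $\hat c(R\alpha_i,R\alpha_j)=\hat c(\alpha_i,\alpha_j)$ that you isolate is indeed needed and is used silently in the paper's proof at the step where $\tilde U_g\tilde U_h$ must reproduce the same projective cocycle as $U_gU_h$; the paper only makes this hypothesis explicit in the converse statement (Proposition \ref{prop:converse}), and, as you observe, it is automatic in the lattice applications where $R$ is an isometry and $\hat c(\alpha,\gamma)=(-1)^{\langle\alpha,\gamma\rangle}$.
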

\begin{proof} 
    Each $g\in L$ can uniquely be written as $g=\sum_i g_i\alpha_i$ with
    $g_i\in \ZZ$ for $i=1,\ldots,n$.
    Further we denote $U_g:=U_1^{g_1}\cdots U_n^{g_n}$ which defines a 
    projective representation of $Q$. For finite non-zero $a_g \in M$ we define:
    \begin{align*}
        \tilde \eta_0: \sum_g a_g U_g 
        \longmapsto & \sum_g a_g (z_1U_1)^{g_1}\cdots (z_nU_n)^{g_n} =:\sum_g a_g v_g U_g
    \end{align*}
    which is well-defined because the action is faithful. 
    It is easy to check that $\tilde\eta_0$ is an endomorphism if $v_g\in M$ 
    is a ``cocycle'' (similar like in \cite{Ka2001}) satisfying
    \begin{align*}
        v_g\tilde\beta_g(\eta(x)) &= \eta(\beta_g(x)) v_g &x\in \M
        \\v_{g+h} &= v_g\tilde\beta_g(v_h) 
    \end{align*}
    with $\tilde\beta_g=\Ad \tilde U_g$. 
    Indeed, using the tensor category calculus we write for arrows 
    $t:\sigma\eta\to\eta\sigma'$
    and $s:\rho\eta\to\eta\rho'$
    \begin{align}
        s\diamond t:=(s \otimes \id_{\tau'})(\id_\rho\otimes t) : \rho\sigma\eta \xrightarrow{\id_\rho\otimes t}
        \rho\eta\sigma' \xrightarrow{s \otimes \id_{\tau'}} \eta\rho'\tau'
    \end{align} 
    for example
    $z_i\diamond z_j:=(z_i\otimes
    \id_{\beta_j})(\id_{\tilde \beta_i}\otimes z_j) \equiv z_i\tilde\beta_i(z_j)$. The
    condition $z_i\tilde\beta_i(z_j)=z_j\tilde\beta_j(z_i)$ reads
    $z_i \diamond z_j = z_j \diamond z_i$. 
    Let us write
    $z_i^{-}=\tilde\beta_i^{-1}(z_i^\ast)$ in particular $z_i\diamond z_i^-=z_i^-\diamond z_i =
    1$ and we have also
    \begin{align*} 
        z_j\diamond z_i^- &=z_j\tilde\beta_j(\tilde\beta_i^{-1}(z_i^\ast))
        \\ &= \tilde\beta_i^{-1}(\tilde\beta_i(z_j)\tilde\beta_j(z_i^\ast))
        \\ &= \tilde\beta_i^{-1}(z_i^\ast z_i \tilde\beta_i(z_j)\tilde\beta_j( z_i^\ast))
        \\ &= \tilde\beta_i^{-1}(z_i^\ast z_j \tilde\beta_j(z_i)\tilde\beta_j( z_i^\ast))
        \\ &= \tilde\beta_i^{-1}(z_i^\ast z_j)
        \\&= z_i^-\diamond z_j
        \punkt
    \end{align*}
    With this notation it is  
    \begin{align*} 
        v_g=z_1^{\diamond g_1}\diamond\cdots\diamond z_n^{\diamond g_n} :=
        \underbrace{z_1^\pm\diamond \cdots \diamond z_1^\pm}_{\pm g_1\text{--times}} 
            \diamond \cdots \diamond
            \underbrace{z_n^\pm \diamond \cdots \diamond z_n^\pm}_{\pm g_n\text{--times}}
        \in \Hom_\M(\tilde\beta_g\eta,\eta\beta_g)
    \end{align*}
    which does not depend on the order of the $z_i$ and $z_i^-$, so in particular 
    \begin{align*}
        v_g\tilde \beta_h (v_h) \equiv  (v_g \otimes \id_{\beta_h})(\id_{\tilde\beta_g}\otimes v_h)
            &= v_g \diamond v_h = v_{g+h} \punkt&
            \qedhere
    \end{align*}
\end{proof}

\begin{prop}
    \label{prop:ext}
    Let $\eta$ be a $\phi_\Omega$-preserving endomorphism of $\M$.
    Under the hypothesis of Lemma \ref{lem:ext} the endomorphism $\eta$ extends
    to a $\phi_\Omega$-preserving endomorphism $\tilde\eta$ of $\tilde \M$ 
    characterized by $\tilde\eta(U_i) = z_i \tilde U_i$. 
\end{prop}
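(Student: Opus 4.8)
The plan is to feed Lemma \ref{lem:ext} the data at hand, obtain the endomorphism $\tilde\eta_0$ on the finitely generated $\ast$-algebra $\tilde\M_0$, and then upgrade it to a normal endomorphism of the whole von Neumann algebra $\tilde\M$; the extra hypothesis that $\eta$ is $\phi_\Omega$-preserving is exactly what powers this upgrade. First I would record that $\tilde\M_0\Omega$ is dense in $\Hil=\bigoplus_{\alpha\in L}\Hil_\alpha$: since $\Omega$ is cyclic and separating for $\M$ on $\Hil_0$ and each $U_g$ restricts to a unitary of $\Hil_0$ onto $\Hil_g$, the vectors $U_g\M\Omega$ are total, so $\Omega$ is cyclic for $\tilde\M$, and separating as well (the action being free, as in the construction of $\A_L$). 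Hence $\phi_\Omega$ is a faithful normal state on $\tilde\M$, and $\tilde\M'\Omega$ is dense too.

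The decisive step is to check that $\tilde\eta_0$ preserves $\phi_\Omega$. Writing a general element of $\tilde\M_0$ as a finite sum $\sum_g a_g U_g$ with $a_g\in\M$, Lemma \ref{lem:ext} gives $\tilde\eta_0(a_g U_g)=\eta(a_g)\,v_g\,\tilde U_g$ with $v_g\in\M$ a unitary and $\tilde U_g=U_{R(g)}$. Now $\M$ preserves the grading $\Hil=\bigoplus_\alpha\Hil_\alpha$, while $U_g$ shifts it by $g$ and $\tilde U_g$ by $R(g)$. Thus $\phi_\Omega(a_gU_g)=0$ for $g\neq 0$, and $\phi_\Omega(\tilde\eta_0(a_gU_g))=0$ as well, because $R$ is an automorphism and so $R(g)\neq 0$; for $g=0$ the term is $\tilde\eta_0(a_0)=\eta(a_0)$ and $\phi_\Omega(\eta(a_0))=\phi_\Omega(a_0)$ by the assumption on $\eta$. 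Summing yields $\phi_\Omega\circ\tilde\eta_0=\phi_\Omega$ on $\tilde\M_0$.

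With state preservation in hand I would introduce the implementing isometry $W$ on $\Hil$ determined by $Wx\Omega=\tilde\eta_0(x)\Omega$ for $x\in\tilde\M_0$; it is isometric precisely because $\|\tilde\eta_0(x)\Omega\|^2=\phi_\Omega(\tilde\eta_0(x^\ast x))=\phi_\Omega(x^\ast x)=\|x\Omega\|^2$, and it satisfies $Wx=\tilde\eta_0(x)W$. To extend $\tilde\eta_0$ I would invoke normality: take a bounded net $x_\lambda\in\tilde\M_0$ converging $\sigma$-strong$^\ast$ to $y\in\tilde\M$ (Kaplansky density); for any $y'\in\tilde\M'$ the commutation of $\tilde\eta_0(x_\lambda)\in\tilde\M$ with $y'$ gives $\tilde\eta_0(x_\lambda)y'\Omega=y'Wx_\lambda\Omega\to y'Wy\Omega$. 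Since $\|\tilde\eta_0(x_\lambda)\|\le\|x_\lambda\|$ is uniformly bounded and $\tilde\M'\Omega$ is dense, $\tilde\eta_0(x_\lambda)$ converges $\sigma$-strongly to a bounded operator $\tilde\eta(y)\in\tilde\M$, independent of the approximating net. Joint strong continuity of multiplication and of the adjoint on bounded sets then shows that $\tilde\eta$ is a normal $\ast$-endomorphism of $\tilde\M$ extending $\tilde\eta_0$; evaluating at $\Omega$ gives $\phi_\Omega\circ\tilde\eta=\phi_\Omega$, and in particular $\tilde\eta(U_i)=\tilde\eta_0(U_i)=z_i\tilde U_i$.

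I expect the main obstacle to be this last step, \ie passing from the purely algebraic extension of Lemma \ref{lem:ext} to a genuinely normal endomorphism of $\tilde\M$. The implementing isometry $W$ is in general not unitary (as $\eta$ need not be onto), so it does not pin down $\tilde\eta(y)$ off its range and one cannot simply set $\tilde\eta(y)=WyW^\ast$. The commutant trick above circumvents this by sliding $\tilde\eta_0(x_\lambda)$ past an element of $\tilde\M'$ onto the vacuum, where state preservation controls the limit; this is precisely where both the faithfulness of $\phi_\Omega$ (equivalently, that $\Omega$ is separating for $\tilde\M$) and the $\phi_\Omega$-invariance of $\eta$ are genuinely used.
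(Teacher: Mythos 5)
Your argument tracks the paper's proof up to and including the construction of the implementing isometry: the observation that $\tilde\eta_0$ annihilates all modes $g\neq 0$ under $\phi_\Omega$ (because $R$ is an automorphism of $L$, so $R(g)\neq 0$) and hence intertwines the conditional expectation $\sum_g a_gU_g\mapsto a_0$, giving $\phi_\Omega\circ\tilde\eta_0=\phi_\Omega$, is exactly the paper's first step. The divergence is in how the algebraic endomorphism of $\tilde\M_0$ is promoted to $\tilde\M$. The paper shows that the implementer is in fact a \emph{unitary} $\tilde V$: the closure of $\tilde\eta_0(\tilde\M_0)\Omega$ contains $\overline{\eta(\M)\Omega}=\Hil_0$ (in every instance used in the paper $\eta=\Ad V$ with $V$ unitary and $V\Omega=\Omega$) and is invariant under the unitaries $\tilde\eta_0(U_i)=z_i\tilde U_i$, whose products carry $\Hil_0$ onto each $\Hil_{R(g)}$; hence $\tilde V$ has dense range, is unitary, and $\tilde\eta:=\Ad\tilde V$ is automatically a normal endomorphism of $\tilde\M=\tilde\M_0''$ extending $\tilde\eta_0$. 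Your reason for rejecting this route --- that $W$ ``is in general not unitary as $\eta$ need not be onto'' --- conflates surjectivity of $\eta$ with cyclicity of $\Omega$ for the image algebra $\tilde\eta_0(\tilde\M_0)$; it is the latter that decides unitarity, and it holds in the situations the proposition is applied to.

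Because you forgo unitarity you are forced into the Kaplansky-density argument, and there the step $\|\tilde\eta_0(x_\lambda)\|\le\|x_\lambda\|$ is a genuine gap. The map $\tilde\eta_0$ is a $\ast$-homomorphism defined only on the algebraic crossed product $\tilde\M_0$, a dense $\ast$-subalgebra of $\tilde\M$ that is not a $C^\ast$-algebra, and $\ast$-homomorphisms on such domains need not be contractive for the ambient operator norm (the usual proof of contractivity uses functional calculus inside the domain, which is unavailable here; positivity of $\|x\|^2-x^\ast x$ does not exhibit it as $c^\ast c$ with $c\in\tilde\M_0$). The bound is in fact true, but it needs an argument: for instance, $\tilde\eta_0$ is induced by the covariant pair $\bigl(\eta,\ g\mapsto v_gU_{R(g)}\bigr)$ and is therefore contractive for the full twisted crossed-product norm on $\tilde\M_0$, which coincides with the operator norm inherited from $\B(\Hil)$ because $L\cong\ZZ^n$ is amenable. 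Without some such justification the net $\tilde\eta_0(x_\lambda)$ is not known to be bounded, and convergence on the dense subspace $\tilde\M'\Omega$ does not yield a bounded limit, so the construction of $\tilde\eta(y)$ collapses. The simplest repair is the paper's: prove that the implementer is unitary, after which contractivity, normality and well-definedness are all immediate from $\tilde\eta=\Ad\tilde V$.
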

\begin{proof}
    $\tilde \eta_0$ preserves the conditional expectation $\sum a_\alpha
    U_\alpha
    \mapsto a_0$ so it preserves the state $\phi_\Omega$ and $\Omega$ is cyclic
    for $\tilde\eta_0(\tilde\M_0)$, because the space
    $\overline{\tilde\eta_0(\tilde\M_0)\Omega}$ contains $\Hil_0$ and is
    $U_i$ invariant. Finally, there exists a unitary 
    $\tilde V$ with $\tilde Vx\Omega=\tilde\eta_0(x)\Omega$ and 
    $\tilde\eta=\Ad \tilde V$ is the extension.
\end{proof}
Let us in the case $\eta\in\Aut(\M)$ and $v_g\in \TT$ speak of an 
\emph{internal
symmetry}. 
In the special case $z_i=1$ for $i=1,\ldots, n$ it is $\tilde\beta_i\eta=\eta\beta_i$ 
and $\eta$ extends to a symmetry $\tilde\eta$ related to the automorphism $R$ 
of $L$;
in the case $\eta=\id_\M$ we speak of a \emph{toral symmetry}.
On the other hand let's in the case $R=\id_L$ talk about \emph{charge preserving}
endomorphisms. A charge preserving internal symmetry is toral.
\begin{rmk}
    Let $\tilde \tau:U_i\longmapsto z_iU_i$ a charge preserving transformation 
    which 
    extends $\tau$ and $\tilde \sigma:U_i \to c_i\tilde U_i$ inner
    then $\widetilde{\tau\sigma}: U_i \longmapsto c_i z_i\tilde U_i$ defines 
    an extension of
    $\tau\sigma$.

    Given $\tilde\eta$ an extension of $\eta$ with $R$ and $\tilde\sigma$ an 
    inner transformation 
    with $\tilde\sigma: \tilde U_i\to c_i U_i^\ast$ where $c_i\in\TT$ extending 
    some $\sigma\in\Aut(\M)$ having 
    $R^{-1}:L\longrightarrow L$ as automorphism of
    $L$. 
    Then $\tilde\eta\tilde\tau$ is charge preserving.
\end{rmk}
\begin{rmk} In the case when $\eta$ has a charge preserving extension $\tilde
    \eta:U_i\longmapsto z_iU_i$, let us look into the full monoidal subcategory
    $\cC$ generated by $\beta_i$ and $\eta$. Then 
    $v_g \in \Hom_M(\beta_g\eta,\eta\beta_g)$ is similar (the number of
    endomorphisms is not finite) two the half-braiding with respect to $\eta$ 
    defined in \cite{Iz2000}. 
    The condition $z_i\beta_i(z_j)=z_j\beta_j(z_i)$ reflects the fusion-braid 
    equation.
\end{rmk}
We also have a converse of Proposition \ref{prop:ext}, namely that extensions of this form are given by $z_i$ 
like in Lemma \ref{lem:ext}. 
\begin{prop} \label{prop:converse}
    If $\tilde\eta$ is an endomorphism of $\tilde \M$ 
    and restricts to an endomorphism of $\M$ and
    $\eta(e_\alpha)=e_{R(\alpha)}$ such that 
    $U_iU_j = \hat c(\alpha_i,\alpha_j) U_jU_i$ $\Longleftrightarrow$ $\tilde U_i\tilde U_j = 
    \hat c (\alpha_i,\alpha_j) \tilde U_j
    \tilde U_i$.
    Then there exist $z_i\in \Hom_\M(\tilde \beta_i\eta,\eta\beta_i)$
    with $z_i\tilde\beta_i(z_j)=z_j\tilde\beta_j(z_i)$.
\end{prop}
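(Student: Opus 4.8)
The plan is to recover the intertwiners by the obvious formula, setting $z_i := \tilde\eta(U_i)\tilde U_i^\ast$ for $i=1,\dots,n$, where $\tilde U_i=U_{R(\alpha_i)}$, and then to verify the three required properties one at a time: that each $z_i$ actually lies in $\M$, that $z_i\in\Hom_\M(\tilde\beta_i\eta,\eta\beta_i)$, and finally that the compatibility relation $z_i\tilde\beta_i(z_j)=z_j\tilde\beta_j(z_i)$ holds. Since $\tilde\eta$ is a unital $\ast$-endomorphism of $\tilde\M$, the image $\tilde\eta(U_i)$ is a unitary, so each $z_i$ is a priori a unitary in $\tilde\M$; the real content is to push it into $\M$ and to match the cocycle condition. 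As in Proposition \ref{prop:ext}, I would use that $\tilde\eta$ is spatially implemented, $\tilde\eta=\Ad\tilde V$, and read the hypothesis $\eta(e_\alpha)=e_{R(\alpha)}$ as the identity $\tilde V e_\alpha\tilde V^\ast=e_{R(\alpha)}$ in $\B(\Hil)$, i.e. $\tilde V\Hil_\alpha=\Hil_{R(\alpha)}$.

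For the first and main step I would argue via the charge grading. Recall that $e_\alpha$ is the projection onto $\Hil_\alpha$, that $U_i$ raises the charge by $\alpha_i$, i.e. $U_i e_\gamma=e_{\gamma+\alpha_i}U_i$, and that $\M=\{x\in\tilde\M:[x,e_\alpha]=0\text{ for all }\alpha\}$ is exactly the charge-zero part of the crossed product. Conjugating the relation $U_i e_\gamma=e_{\gamma+\alpha_i}U_i$ by $\tilde V$ and inserting $\tilde V e_\gamma\tilde V^\ast=e_{R(\gamma)}$ gives $\tilde\eta(U_i)\,e_{R(\gamma)}=e_{R(\gamma)+R(\alpha_i)}\,\tilde\eta(U_i)$; since $R$ is an automorphism of $L$, $R(\gamma)$ runs over all of $L$, so $\tilde\eta(U_i)$ raises the charge uniformly by $R(\alpha_i)$. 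As $\tilde U_i^\ast=U_{R(\alpha_i)}^\ast$ lowers the charge by $R(\alpha_i)$, the product $z_i$ preserves every $\Hil_\alpha$, hence commutes with all $e_\alpha$, and therefore lies in $\M$.

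The intertwiner property then follows by a direct computation: I apply $\tilde\eta$ to $U_i x U_i^\ast=\beta_i(x)$ for $x\in\M$, use $\tilde\eta\restriction_\M=\eta$ to rewrite both sides in terms of $\eta$, substitute $\tilde\eta(U_i)=z_i\tilde U_i$, and move $\tilde U_i$ across $\eta(x)$ by $\tilde U_i\eta(x)=\tilde\beta_i(\eta(x))\tilde U_i$. Cancelling $\tilde U_i$ yields $z_i\tilde\beta_i(\eta(x))=\eta(\beta_i(x))z_i$, which is precisely the statement $z_i\in\Hom_\M(\tilde\beta_i\eta,\eta\beta_i)$.

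For the compatibility relation I would compute $z_i\tilde\beta_i(z_j)=z_i\tilde U_i z_j\tilde U_i^\ast=\tilde\eta(U_i)\tilde\eta(U_j)\tilde U_j^\ast\tilde U_i^\ast=\tilde\eta(U_iU_j)(\tilde U_i\tilde U_j)^\ast$, using $z_i\tilde U_i=\tilde\eta(U_i)$, and symmetrically $z_j\tilde\beta_j(z_i)=\tilde\eta(U_jU_i)(\tilde U_j\tilde U_i)^\ast$. Feeding in $U_iU_j=\hat c(\alpha_i,\alpha_j)U_jU_i$ together with the assumed equality of commutators $\tilde U_i\tilde U_j=\hat c(\alpha_i,\alpha_j)\tilde U_j\tilde U_i$, the phase $\hat c(\alpha_i,\alpha_j)\in\TT$ produced by $\tilde\eta(U_iU_j)$ cancels against its conjugate coming from $(\tilde U_i\tilde U_j)^\ast$, leaving $z_i\tilde\beta_i(z_j)=z_j\tilde\beta_j(z_i)$. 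The only place where a genuine hypothesis is essential is the grading condition $\eta(e_\alpha)=e_{R(\alpha)}$, which is exactly what forces $z_i\in\M$; I expect this to be the main obstacle, the remaining two steps being formal manipulations in the crossed product.
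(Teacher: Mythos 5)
Your proposal is correct and follows essentially the same route as the paper: you define $z_i:=\tilde\eta(U_i)\tilde U_i^\ast$, verify the intertwiner property by applying $\tilde\eta$ to $U_ixU_i^\ast=\beta_i(x)$, and obtain the compatibility relation from $z_i\tilde\beta_i(z_j)=\tilde\eta(U_iU_j)(\tilde U_i\tilde U_j)^\ast$ exactly as in the paper's computation. The only (harmless) variation is in showing $z_i\in\M$: you use the full charge grading and the hypothesis $\eta(e_\alpha)=e_{R(\alpha)}$ to see that $z_i$ commutes with every $e_\alpha$, whereas the paper invokes Proposition \ref{prop:restrictions} to get commutation with the single Jones projection $e_0$, which already characterizes $\M$ inside $\tilde\M$.
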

\begin{proof}
    If $\tilde\eta$ restricts to an endomorphism of $\M$ means it 
    commutes with the Jones projection $e_0$ by Proposition 
    \ref{prop:restrictions} and  
    $z_i:=\tilde \eta(U_i)\tilde U_i^\ast \in \tilde \M$. But also 
    $z_i\in \M$ because it commutes with $e_0$. Finally
    \begin{align*}
            z_i\tilde\beta_i(\eta(x))
                &=\tilde\eta(U_i)\eta(x)\tilde U_i^\ast
            \\  &=\eta(\beta_i(x))\tilde\eta(U_i)\tilde U_i^\ast
            \\  &=\eta(\beta_i(x))z_i
    \end{align*}
    and 
    \begin{align*}
            z_i\tilde\beta_i(z_j)   
                &=\tilde\eta(U_iU_j)\tilde U_j^\ast \tilde U_i^\ast
            \\  &=\tilde\eta\left(\hat c(\alpha_i,\alpha_j)U_jU_i\right)
                \hat c(\alpha_i,\alpha_j)^\ast \tilde U_i^\ast \tilde U_j^\ast
            \\  &=\tilde\eta(U_jU_i)
                U_i^\ast \tilde U_j^\ast
            \\  &=z_j\tilde\beta_j(z_i)   
    \end{align*}
    which completes the proof.
\end{proof}
\begin{prop}\label{prop:extV}
    Let $\A$ be conformal net and $\Aext$ a local extension by
    $\lbrace\beta_i\rbrace_{i=1\ldots n}$ automorphisms of $\A$ localized in 
    $(0,\infty)$, such that $\Aext(0,\infty)= \A(0,\infty) \rtimes_\beta Q$. 
    Further let $V \in \cE(\A)$, $\eta = \Ad V$. 
    Then there exists an extension $\tilde V$ of $V$, with 
    $\tilde V \in \cE(\Aext)$ associated to an automorphism 
    $R:\alpha_i \longmapsto \tilde\alpha_i$ of $L$ if and only if
    \begin{enumerate}
        \item there are 
            $z_i\in \Hom_{\A(0,\infty)}(\tilde\beta_i\eta,\eta\beta_i)$ for 
            $i=1,\ldots,n$ such that 
            \begin{align*}
                z_i\tilde\beta_i(z_j) &=z_j\tilde\beta_j(z_i)
                \komma
            \end{align*}
        \item and there are unitary one-parameter groups $u_i(t)$ with 
            $\Ad u_i(t) \beta_i(\tau_t(x)) = \tau_t(\beta_i(x))$
            for all $x\in \A(0,\infty)$ 
            with $u_i(t)\beta_i(u_j(t))=u_j(t)\beta_j(u_i(t))$ which extends
            $\tau_t=\Ad T(t)$ from $\A$ 
            to $\Aext$ by $\tilde\tau_t(U_i)=u_i(t)U_i$ satisfying
            \begin{equation*}
                z_i \tilde u_i(t)^\ast =\eta\left(u_i(t)^\ast\right)\tau_t(z_i)
                \punkt
            \end{equation*}
    \end{enumerate}
\end{prop}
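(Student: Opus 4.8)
The plan is to split the equivalence into its \emph{endomorphism} content, governed by Propositions \ref{prop:ext} and \ref{prop:converse}, and its \emph{translation covariance} content, encoded by condition (2). Write $\M=\A(0,\infty)$ and $\tilde\M=\Aext(0,\infty)=\M\rtimes_\beta L$. An element $\tilde V\in\cE(\Aext)$ extending $V$ is precisely a unitary with $\tilde V\Omega=\Omega$ such that $\tilde\eta:=\Ad\tilde V$ restricts to an endomorphism of $\tilde\M$ whose restriction to $\M$ is $\eta=\Ad V$, and such that $\tilde V$ commutes with the translation group $\tilde T(t)$ of $\Aext$. Condition (1) is exactly the hypothesis/conclusion of Proposition \ref{prop:ext} and its converse Proposition \ref{prop:converse} for the first requirement, while condition (2) will be shown to capture the commutation with $\tilde T(t)$.

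For the ``if'' direction I would first invoke Proposition \ref{prop:ext} with the $z_i$ from condition (1): this yields a $\phi_\Omega$-preserving endomorphism $\tilde\eta$ of $\tilde\M$ with $\tilde\eta(U_i)=z_i\tilde U_i$, implemented by a unitary $\tilde V$ with $\tilde V\Omega=\Omega$ and $\tilde V\tilde\M\tilde V^\ast\subset\tilde\M$. It then remains to check $[\tilde V,\tilde T(t)]=0$. Using condition (2) to define the extended translations by $\tilde\tau_t(U_i)=u_i(t)U_i$, and hence $\tilde\tau_t(\tilde U_i)=\tilde u_i(t)\tilde U_i$, I would compute on the generators: on one hand $\tilde\eta(\tilde\tau_t(U_i))=\eta(u_i(t))\,z_i\tilde U_i$, and on the other $\tilde\tau_t(\tilde\eta(U_i))=\tau_t(z_i)\,\tilde u_i(t)\tilde U_i$. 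These coincide exactly when $z_i\tilde u_i(t)^\ast=\eta(u_i(t)^\ast)\tau_t(z_i)$, the displayed identity in condition (2); since $V\in\cE(\A)$ forces $\eta\tau_t=\tau_t\eta$ on $\M$, the endomorphisms $\tilde\eta\tilde\tau_t$ and $\tilde\tau_t\tilde\eta$ agree on all of $\tilde\M$. To upgrade this to commutation of the unitaries I would use that both $\tilde V$ and $\tilde T(t)$ fix $\Omega$ and, for $t\geq0$, map $\tilde\M$ into $\tilde\M$; evaluating on $a\Omega$ gives $\tilde V\tilde T(t)\,a\Omega=\tilde\eta\tilde\tau_t(a)\Omega=\tilde\tau_t\tilde\eta(a)\Omega=\tilde T(t)\tilde V\,a\Omega$, and since $\Omega$ is cyclic for $\tilde\M=\Aext(0,\infty)$ by Reeh--Schlieder this forces $[\tilde V,\tilde T(t)]=0$ for $t\geq0$ and hence for all $t$, so $\tilde V\in\cE(\Aext)$.

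For the ``only if'' direction I would start from $\tilde V\in\cE(\Aext)$ extending $V$. As $\tilde\eta=\Ad\tilde V$ restricts to $\eta$ on $\M$, it commutes with the Jones projection of $\M\subset\tilde\M$ by Proposition \ref{prop:restrictions}, so Proposition \ref{prop:converse} applies and produces $z_i=\tilde\eta(U_i)\tilde U_i^\ast\in\Hom_\M(\tilde\beta_i\eta,\eta\beta_i)$ with $z_i\tilde\beta_i(z_j)=z_j\tilde\beta_j(z_i)$, which is condition (1). The translations $\tilde T(t)$ of the conformal net $\Aext$ are available a priori; writing $\tilde\tau_t=\Ad\tilde T(t)$ and using that $\tilde\tau_t(U_i)$ again carries winding $\alpha_i$, one defines $u_i(t)\in\M$ by $\tilde\tau_t(U_i)=u_i(t)U_i$, and the fact that $\tilde\tau_t$ is an automorphism restricting to $\tau_t$ on $\M$ yields both the intertwining relation $\Ad u_i(t)\,\beta_i\tau_t=\tau_t\beta_i$ and the mutual compatibility $u_i(t)\beta_i(u_j(t))=u_j(t)\beta_j(u_i(t))$. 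Reversing the generator computation of the previous paragraph, $[\tilde V,\tilde T(t)]=0$ is then equivalent to the displayed relation of condition (2), which completes the equivalence.

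I expect the main obstacle to be the translation-covariance bookkeeping rather than the endomorphism part. Concretely, one must check that $\tilde\tau_t(U_i)=u_i(t)U_i$ genuinely defines the extended translation on $\tilde\M$: this is a one-parameter analogue of Lemma \ref{lem:ext} applied to $\tau_t$, requiring the cocycle identity $u_i(s+t)=\tau_s(u_i(t))u_i(s)$ together with the compatibility of the $u_i$, and one must verify that the \emph{single} displayed identity in condition (2) encodes exactly the generator-level commutation of $\tilde\eta$ with $\tilde\tau_t$ (the $\M$-level commutation being automatic from $V\in\cE(\A)$). Once these identifications are in place, the invocations of Propositions \ref{prop:ext} and \ref{prop:converse} and the Reeh--Schlieder cyclicity argument are essentially routine.
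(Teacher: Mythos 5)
Your proposal is correct and follows essentially the same route as the paper: the endomorphism part is delegated to Propositions \ref{prop:ext} and \ref{prop:converse}, and the translation part is reduced to the generator-level identity $\tau_t(z_i)\tilde u_i(t)=\eta(u_i(t))z_i$, which is exactly the paper's (very terse) argument. The extra steps you supply --- upgrading commutation of the endomorphisms $\tilde\eta\tilde\tau_t=\tilde\tau_t\tilde\eta$ to commutation of the unitaries via cyclicity of $\Omega$, and the cocycle identity $u_i(s+t)=\tau_s(u_i(t))u_i(s)$ identifying $\tilde\tau_t$ with $\Ad\tilde T(t)$ --- are details the paper leaves implicit, so your write-up is if anything more complete.
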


\begin{proof}
    The first part follows directly by Proposition \ref{prop:ext}
    and the converse by Proposition \ref{prop:converse}.
    We note that $\tau_t$ is extended to $\Aext$ via a cocycle
    $u_i(t)$ namely $\tilde\tau_t(U_i) = u_i(t)U_i$. 
    That $\tilde\tau_t$ commutes with $\tilde\eta$ means equality of 
    \begin{align*}
        \tilde\tau_t(\tilde\eta(U_i)) 
            &= \tilde\tau_t(z_i\tilde U_i)=\tau_t(z_i)\tilde u_i(t)\tilde U_i\\
        \tilde\eta(\tilde\tau_t(U_i)) 
            &= \tilde\eta(u_i(t)U_i) = \eta(u_i(t)) z_i\tilde U_i
    \end{align*}
    which is equivalent with
    \begin{equation*}
        \eta\left(u_i(t)^\ast\right)\tau_t(z_i)=z_i \tilde u_i(t)^\ast.
    \end{equation*}
\end{proof}
\subsection{Boundary Nets Associated with Lattices}
\label{sec:BQFTLaNets}
We investigate in the semigroup elements for the \CFTnets associated with 
lattices and give corresponding \BQFTnets. 
Here it is more convenient to use the real line picture by 
identifying $x=\tan \theta/2$.
For $f\in L^2(\RR,F)$, we denote its Fourier transform by 
$\hat f\in L^2(\RR,F_\CC)$, namely:
\begin{align*}
    \hat f(p) &= \frac1{2\pi}\int_\RR \e^{-\ima px} f(x) \dd x &
    \overline{\hat f(p)}=\hat f(-p)
    \\f(x) &= \int_\RR \e^{\ima px} \hat f(p)\dd p
    \punkt
\end{align*}
We note that in $\Hil_{0,F}$ the complex structure is given by 
$\widehat{\cJ f}(p)= -\ima \sign(p)\hat f(p)$ 
and the action of the translation by $T(t)=\e^{\ima t P}$ by 
$$  
    \widehat{T(t)f}(p)=\e^{-\ima \sign(p) t |p|}\hat f(p)=\e^{-\ima t p} 
    \hat f(p)
    \punkt
$$ 
and the sesquilinear form by
\begin{equation*} 
    \omega(f,g) := \Im(f,g)
    = \frac 1 2 
    \int_\RR \langle f'(x),g(x)\rangle\frac{\dd x }{2\pi} =:  
    \frac12\int \langle f',g\rangle
    \punkt
\end{equation*}
The norm of $\Hil_{0,F}$ is 
$\| f \|_{\Hil_{0,F}} = \mathrm{const.}\int_0^\infty \|\hat f(p)\|_{F_\CC} 
p \dd p$ and we note that $f\in L^2(\RR,F)$ is in 
$\Hil_{0,F}$ if the norm 
$\| f \|_{\Hil_{0,F}}$ is finite. 

Let $L$ be an even lattice. We write it as a sum of irreducible components
$L=L_1 \oplus \cdots \oplus L_k$ with $\langle L_i,L_j\rangle=0$.
We call a linear, isometric, isomorphic map $L\xrightarrow{\sim} L$ an 
\emph{automorphsim} of $L$ and denote the set of 
automorphisms of $L$ by $\Aut L$.
\begin{defi}
    Let $R:L\xrightarrow\sim L$ be an automorphism of $L=L_1\oplus\cdots\oplus
    L_k$ and $F=L\otimes_\ZZ\RR$. We denote by $\cS_{L,R}$ the space of 
    elements $\varphi\in\cS_F$, such that $\varphi(p)$ maps $\CC\alpha_i$ to 
    $\CC R\alpha_i$ for all $i=1,\ldots, n$ and for almost all $p$.
\end{defi}
\begin{lem}\label{lem:SF}   
    With this notation, there is a bijection between $\cS_{L,R}$  
    and $\cS^{ \times k}$. 
    It is given by 
    $\cS^{ \times k}\ni(\varphi_1,\ldots,\varphi_k)\longmapsto \varphi$ with
    \begin{align*}
        \varphi(p) \alpha_i &:= \varphi_j(p) R\alpha_i & \alpha_i \in L_j
        .
    \end{align*}
\end{lem}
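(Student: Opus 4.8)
The plan is to verify that the displayed assignment is a bijection by constructing both directions explicitly and transferring the defining properties of $\cS$ through the orthogonal decomposition $F_\CC=\bigoplus_{j=1}^k F_{j,\CC}$, where $F_{j,\CC}:=(L_j\otimes_\ZZ\RR)\otimes_\RR\CC$ are the mutually orthogonal complexified components. First I would check that the prescription $\varphi(p)\alpha_i:=\varphi_j(p)R\alpha_i$ (for $\alpha_i\in L_j$) produces an element of $\cS_{L,R}$. The point is that it factors as $\varphi(p)=R\circ M(p)$ with $M(p)=\bigoplus_j\varphi_j(p)\,\id_{F_{j,\CC}}$ acting as the scalar $\varphi_j(p)$ on the summand $F_{j,\CC}$. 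Since each $\varphi_j$ is inner, $|\varphi_j(p)|=1$ for almost all $p$ (symmetry extends this from $p\ge0$ to $p<0$), so $M(p)$ is a direct sum of unimodular scalars on mutually orthogonal subspaces and hence unitary; as $R$ extends to a unitary of $F_\CC$, the product $\varphi(p)$ lies in $\U(F_\CC)$ for almost all $p$. Boundedness and analyticity pass to $\varphi$ via the analytic extension $\Phi(z)=R\bigoplus_j\Phi_j(z)\,\id_{F_{j,\CC}}$ of the bounded analytic functions $\Phi_j$ representing the $\varphi_j$, and symmetry $\overline{\varphi(p)}=\varphi(-p)$ follows because $R$ and each $F_{j,\CC}$ are defined over $\RR$, so the canonical conjugation commutes with both $R$ and the decomposition and sends $\varphi_j(p)$ to $\overline{\varphi_j(p)}=\varphi_j(-p)$. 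Injectivity is then immediate, since for $\alpha_i\in L_j$ one recovers $\varphi_j(p)=\langle R\alpha_i,\varphi(p)\alpha_i\rangle/\langle R\alpha_i,R\alpha_i\rangle$.

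The substantive direction is surjectivity: an arbitrary $\varphi\in\cS_{L,R}$ must arise this way. By definition $\varphi(p)\CC\alpha_i=\CC R\alpha_i$, so $\varphi(p)\alpha_i=c_i(p)R\alpha_i$ for scalars $c_i(p)$, and unitarity of $\varphi(p)$ together with isometry of $R$ forces $|c_i(p)|=1$ almost everywhere. The key step is to show that $c_i(p)$ depends only on the component containing $\alpha_i$. Comparing two basis vectors $\alpha_i,\alpha_{i'}$ of the same $L_j$, unitarity gives $\langle\varphi(p)\alpha_i,\varphi(p)\alpha_{i'}\rangle=\langle\alpha_i,\alpha_{i'}\rangle$, which by isometry of $R$ reduces to $\overline{c_i(p)}\,c_{i'}(p)\,\langle\alpha_i,\alpha_{i'}\rangle=\langle\alpha_i,\alpha_{i'}\rangle$; hence $c_i(p)=c_{i'}(p)$ whenever $\langle\alpha_i,\alpha_{i'}\rangle\neq0$.

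I then propagate this equality across all of $L_j$ using its irreducibility: if the graph on $\{\alpha_i\in L_j\}$ with edges given by non-vanishing inner products were disconnected, its parts would span mutually orthogonal proper sublattices whose orthogonal direct sum is $L_j$, contradicting indecomposability. Thus $c_i(p)$ is constant on each $L_j$, and setting $\varphi_j(p):=c_i(p)$ for any $\alpha_i\in L_j$ yields a preimage. It remains to confirm $\varphi_j\in\cS$: writing $\varphi_j$ as the scalar matrix coefficient $\langle R\alpha_i,\varphi(p)\alpha_i\rangle/\langle R\alpha_i,R\alpha_i\rangle$ makes boundedness and analyticity on $\RR+\ima\RR_+$ automatic from $\varphi\in\cS_F$; innerness is the almost-everywhere identity $|\varphi_j(p)|=1$ noted above, and symmetry $\overline{\varphi_j(p)}=\varphi_j(-p)$ follows by applying the symmetry of $\varphi$ to the real vector $\alpha_i$ and extracting the coefficient of the real vector $R\alpha_i$.

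I expect the only genuine obstacle to be the connectedness argument in the surjectivity step, where the indecomposability of each $L_j$ has to be converted into the propagation of a single scalar across the whole component; the remaining verifications are routine transfers of the defining properties of $\cS$ and $\cS_F$ through the orthogonal direct-sum structure of $F_\CC$.
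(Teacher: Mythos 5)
Your proposal is correct and follows essentially the same route as the paper: write $\varphi(p)\alpha_i=c_i(p)R\alpha_i$, observe that unitarity of $\varphi(p)$ together with isometry of $R$ forces $\overline{c_i(p)}c_{i'}(p)\langle\alpha_i,\alpha_{i'}\rangle=\langle\alpha_i,\alpha_{i'}\rangle$, and conclude that the scalars agree on each irreducible component. You merely make explicit the connectedness argument and the routine verifications (well-definedness of the forward map, injectivity, and membership of the $\varphi_j$ in $\cS$) that the paper leaves implicit.
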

\begin{proof} We write $\varphi(p) \alpha_i = c_i(p) R\alpha_i$
    with $c_i\in \cS$. 
    That $\varphi(p)\in \U(F_\CC)$ is equivalent with,
    $  
        \overline {c_i(p)}c_j(p)\langle \alpha_i,\alpha_j\rangle
        = \langle\alpha_i,\alpha_j\rangle
    $
    for all $i,j$. But this means $c_i(p)=c_j(p)$ on each component.
\end{proof}
Let us abbreviate $\tilde \alpha_i =R\alpha_i$.
We call $\varphi\in \cS$ \emph{Hölder continuous} at $0$, if 
\begin {equation*}  
    p\longmapsto \frac{|\varphi(p)-1|^2}{|p|}
\end{equation*}
is locally integrable and denote the subset of Hölder continuous 
functions by $\cS^\Hol$. In a obvious way we denote 
$\cS_{L,R}^\Hol \cong \cS^{\Hol\times k}$.
\begin{lem} Let $L$ be an even lattice, $R\in\Aut(L)$ and $F=L\otimes_\ZZ\CC$.
    Let $\eta = \Ad V$ with $V\in \cE(\A_F)$ related to  
    $\varphi\in\cS_{L,R}^\Hol\subset \cS_F$ like in 
    Theorem \ref{thm:semigroupAb}.
    Then there exist unitaries $z_i\in \A_F(0,\infty)$, such that 
    \begin{enumerate}
        \item $z_i\in\Hom_{\A_F(0,\infty)}(\tilde\beta_i\eta,\eta\beta_i)$,
            \label{item:a1}
        \item  $z_i\tilde\beta_i(z_j) = z_j\tilde\beta_j(z_i)$,
            \label{item:a2}
        \item $z_i\tilde u_i(t)^\ast = \eta\left(u_i(t)^\ast\right)\tau_t(z_i)$.
            \label{item:a3}
    \end{enumerate}
\end{lem}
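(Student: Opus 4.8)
The plan is to produce all three $z_i$ as single Weyl unitaries. Recall from Theorem~\ref{thm:semigroupAb} that $V=\Gamma(V_0)$ with $V_0=\varphi(P_0)$, so that $\eta(W(f))=\Gamma(V_0)W(f)\Gamma(V_0)^\ast=W(V_0f)$. Writing a charged automorphism through its (formal) primitive, I would realise $\beta_i$ and $\tilde\beta_i=\beta_{R\alpha_i}$ on the Weyl operators as $\beta_i(W(f))=\e^{2\ima\omega(L_i,f)}W(f)$ and $\tilde\beta_i(W(f))=\e^{2\ima\omega(\tilde L_i,f)}W(f)$, where $L_i$ and $\tilde L_i$ are real ``kinks'' whose transition sits in $(0,\infty)$ and which carry charges $\alpha_i$ and $\tilde\alpha_i=R\alpha_i$, respectively, and the translation cocycles extending $\tau_t$ are $u_i(t)=W(L_i-T(t)L_i)$ and $\tilde u_i(t)=W(\tilde L_i-T(t)\tilde L_i)$ (both genuine elements of $\A_F(0,\infty)$, since $L_i-T(t)L_i$ is compactly supported of charge $0$). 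The ansatz is then
\[
    z_i:=W(\xi_i),\qquad \xi_i:=\tilde L_i-V_0L_i .
\]

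With this ansatz each condition collapses, via the Weyl relations together with $\omega(g,V_0f)=\omega(V_0^\ast g,f)$, to an identity for $\omega=\Im(\slot,\slot)$. For (\ref{item:a1}) one computes $\tilde\beta_i\eta(W(f))=\e^{2\ima\omega(\tilde L_i,V_0f)}W(V_0f)$ and $\eta\beta_i(W(f))=\e^{2\ima\omega(L_i,f)}W(V_0f)$; conjugating $W(\xi_i)$ through $W(V_0f)$ shows that the intertwining property is equivalent to $\omega(\tilde L_i-\xi_i,V_0f)=\omega(L_i,f)$ for all $f$, i.e.\ to $V_0L_i=\tilde L_i-\xi_i$, which holds by definition of $\xi_i$. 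For (\ref{item:a2}) the same manipulation reduces $z_i\tilde\beta_i(z_j)=z_j\tilde\beta_j(z_i)$ to $\omega(\tilde L_i,\xi_j)-\omega(\tilde L_j,\xi_i)=\omega(\xi_i,\xi_j)$, and after substituting $\xi=\tilde L-V_0L$ and using that $V_0$ is isometric this becomes $\omega(\tilde L_i,\tilde L_j)=\omega(L_i,L_j)$. For (\ref{item:a3}), using that $V_0$ commutes with $T(t)$, the vectors on the two sides match identically and the residual phase reduces to $\omega(\tilde L_i,T(t)\tilde L_i)=\omega(L_i,T(t)L_i)$. Both of the last two identities express a (self-)linking of kinks depending only on the inner products $\langle\alpha_i,\alpha_j\rangle$; choosing $\tilde L_i$ of the same shape as $L_i$ and using that $R$ is an isometry, so $\langle\tilde\alpha_i,\tilde\alpha_j\rangle=\langle\alpha_i,\alpha_j\rangle$, they hold.

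The real content, and the step I expect to be the main obstacle, is to show that $\xi_i=\tilde L_i-V_0L_i$ is a \emph{genuine} element of $H_F(0,\infty)$, so that $z_i=W(\xi_i)\in R(H_F(0,\infty))=\A_F(0,\infty)$ — note that none of $L_i$, $\tilde L_i$, $V_0L_i$ lies in $\Hil_{0,F}$, as each carries a nonzero winding and hence a $p^{-1}$ singularity of its Fourier transform at $p=0$. Reality of $\xi_i$ follows from $\varphi(-p)=\overline{\varphi(p)}$. For finiteness of the $\Hil_{0,F}$-norm I would pass to the Fourier side: writing $\varphi(p)\alpha_i=c_i(p)\tilde\alpha_i$ with $c_i\in\cS^\Hol$ (Lemma~\ref{lem:SF}), the singular parts $\widehat{\tilde L_i}(p)\sim\tilde\alpha_i/\ima p$ and $\widehat{V_0L_i}(p)\sim c_i(p)\tilde\alpha_i/\ima p$ cancel, leaving
\[
    \|\widehat{\xi_i}(p)\|^2\,p \ \sim\ |\tilde\alpha_i|^2\,\frac{|1-c_i(p)|^2}{p}\qquad(p\to 0^+),
\]
so that the only possible divergence of $\int_0^\infty\|\widehat{\xi_i}(p)\|^2\,p\,\dd p$, at $p=0$, is killed exactly by Hölder continuity of $c_i$; in particular $c_i(0^+)=1$, which is what allows the two singularities to cancel in the first place. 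Localisation in $(0,\infty)$ I would obtain by approximating $L_i$ (and $\tilde L_i$) by charge-zero elements of $H_F(0,\infty)$ agreeing with the kink on ever larger intervals, applying the semigroup property $V_0H_F(0,\infty)\subset H_F(0,\infty)$ to each approximant, and invoking the estimate above so that the differences converge in $\Hil_{0,F}$ to $\xi_i\in\overline{H_F(0,\infty)}=H_F(0,\infty)$. This is the lattice/$R$-twisted analogue of \cite[Propositions 3.8 and 3.9]{LoWi2010}, the genuinely new points being the matching of the winding directions $\alpha_i\mapsto\tilde\alpha_i$ under $V_0$ and the simultaneous control of several currents.
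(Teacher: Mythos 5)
Your proposal is correct and follows essentially the same route as the paper: the ansatz $z_i=W(\tilde L_i-V_0L_i)$ coincides (up to sign conventions) with the paper's $W(\eM_i-\eM\tilde\alpha_i)$, where $\eM_i$ is the primitive of $\varphi_j(P)\emm\,\tilde\alpha_i$, and the three properties are verified by the same Weyl-relation phase computations, with the same key analytic point that the infrared $1/p$ singularities of the two kinks cancel and Hölder continuity of $\varphi_j$ makes the difference an element of $\Hil_{0,F}$. The only (minor) divergence is in establishing localisation in $(0,\infty)$: you approximate by charge-zero kinks and invoke $V_0H_F(0,\infty)\subset H_F(0,\infty)$, whereas the paper applies the Paley--Wiener theorem directly to the charge-zero density $\emm\tilde\alpha_i-\emm_i$ and then takes its primitive.
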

\begin{proof}
The automorphisms localized in $(0,\infty)$ can be chosen to be 
\begin{align*}
    \beta_i(W(f)) &= \e^{\ima \int \langle f, \emm \cdot \alpha_i\rangle } W(f)
\end{align*} 
with $\emm:\RR \longrightarrow \RR$ a Schwartz function with 
support in $(0,\infty)$ and $\int_\RR  \emm(x)=1$.
Let $R\in\Aut(L)$ and $\varphi\in \cS_{L,R}^\Hol$ with corresponding 
$(\varphi_1,\ldots,\varphi_k)\in \cS^{\Hol\times k}$ given by Lemma 
\ref{lem:SF} and let us formally define $\emm_i := \varphi_j(P)m\tilde\alpha_j$
for $\alpha_i\in L_j$, more precisely
\begin{align*}
    \emm_{i}(x)&=
    \int \e^{\ima p x} \overline{\varphi_j(p)}\hat\emm(p)\tilde\alpha_i\dd p & 
  \alpha_i \in L_j 
  \punkt
\end{align*}
Then $\emm\tilde\alpha_i-\emm_i$ has zero integral, because 
$\hat \emm(0)\tilde\alpha_i = \hat \emm_i(0)$ and it is in $H_F(0,\infty)$ 
because $\varphi_j\in\cS$ is analytic in the upper strip 
using the Paley-Wiener theorem. 
Further its principal $\eM_i-\eM\tilde\alpha_i$ has support in $(0,\infty)$ and
is in $\Hil_{0,F}$ because the norm 
\begin{align*}
    \int_0^\infty \|\hat\eM_i(p)-\hat\eM(p)\tilde\alpha_i\|_{F_\CC}^2 
    \,p\,\dd p
    =\int_0^\infty 
    \frac{ |\varphi(p)-1|^2}{|p|} \|\hat \emm(p)\tilde\alpha_i\|_{F_\CC}^2 
    \,\dd p <\infty
\end{align*}
is finite due to the Hölder continuity. In particular, we get 
$\eM_i-\eM\tilde \alpha_i\in H_F(0,\infty)$.

We claim that $z_i := W(\eM_i-\eM\tilde\alpha_i) \in \A_F(0,\infty)$
defines unitaries with the wanted properties.
Namely, to check (\ref{item:a1}) let us calculate 
    \begin{align*}
    \Ad{z_i} ( \tilde\beta_i(\eta(W(f)))) 
        &=  \Ad{z_i} (\tilde\beta_i(W(V_0 f)))
    \\  &=  \e^{\ima \int \langle \emm \tilde\alpha_i,V_0f \rangle}
            \Ad{z_i} (W(V_0 f))
    \\  &=  \e^{\ima \int \langle \emm \tilde\alpha_i,V_0f \rangle} 
            \e^{\ima \int \langle (\eM_i - \eM\tilde\alpha_i)', V_0f\rangle}
            W(V_0 f)
    \\  &=  \e^{\ima \int \langle M_i',V_0f \rangle}
            W(V_0 f) 
    \\  &=  \e^{\ima \int \langle \emm\alpha_i,f \rangle}
            W(V_0 f)
    \\  &=  \e^{\ima \int \langle \emm \alpha_i,f \rangle}\eta(W(f))
    \\  &=  \eta(\beta_i(W(f))).
    \end{align*}
To verify (\ref{item:a2}) we compute
    \begin{align*}
        z_i\tilde\beta_i(z_j)   
            &=  W(\eM_i-\eM\tilde\alpha_i)\tilde\beta_i(W(\eM_j-\eM\alpha_j))
        \\  &=  \e^{\ima\int\langle\emm\tilde\alpha_i,
                    \eM_j-\eM\tilde\alpha_j \rangle}
                 W(\eM_i-\eM\tilde\alpha_i)W(\eM_j-\eM\tilde\alpha_j)
        \\  &=  \e^{\ima\int
                    \langle\emm\tilde\alpha_i, \eM_j-\eM \alpha_j\rangle}
                \e^{\frac\ima2\int\langle \eM_i'-\emm\tilde\alpha_i,
                    \eM_j-\eM\tilde\alpha_j\rangle}
        \\  &=  
                \e^{\frac\ima2\int\langle \eM_i'+\emm\tilde\alpha_i,
                    \eM_j-\eM\tilde\alpha_j\rangle}
                W(\eM_i+\eM_j-\eM(\tilde\alpha_i+\tilde\alpha_j))
                W(\eM_i+\eM_j-\eM(\tilde\alpha_i+\tilde\alpha_j))
    \end{align*}
    which is symmetric under $i \leftrightarrow j$ realizing that
    \begin{align*}
        &\langle M_i'+m\tilde\alpha_i,M_j-M\tilde\alpha_j\rangle
        \\&\quad
            = \langle \eM_i',\eM_j\rangle
            - \langle \eM_i',\eM\tilde\alpha_j\rangle
            + \langle \emm\tilde\alpha_i,\eM_j\rangle 
            - \langle \emm\tilde\alpha_i,\eM\alpha_j\rangle
        \\& \quad  
            = \langle \emm\tilde\alpha_i,M_j\rangle
            + \langle \emm\tilde\alpha_j,M_i\rangle
            - \langle \eM_i,\eM\tilde\alpha_j\rangle'
            + \frac12\langle \eM_i',\eM_j\rangle
            - \frac12\langle \eM\alpha_i',\eM\alpha_j\rangle'
    \end{align*}
    and noting that 
    $\langle \eM_i,\eM\tilde\alpha_j\rangle=\langle\eM_j,\eM\tilde\alpha_i
    \rangle$. This is true, because if  
    $\langle\tilde\alpha_i,\tilde\alpha_j\rangle\neq 0$, then $\alpha_i$ and 
    $\alpha_j$ are connected and sit in the same component, e.g. $L_k$
    and $\eM_i$ and $\eM_j$ are obtained both by multiplication with the 
    same function $\varphi_k$.

    To show (\ref{item:a3}) we recall that $u_i(t)=W( (\eM_t-\eM)\alpha_i)$
    and $\tilde u_i(t) = W( (\eM_t-\eM)\tilde\alpha_i)$
    \begin{align*} 
            z_i\tilde u_i(t)^\ast &=W(\eM_i-\eM\tilde\alpha_i)  W(\eM\tilde\alpha_i-\eM_t\tilde\alpha_i)\\
            &=\e^{\ima/2\int
                    \langle \eM_i',\eM\tilde\alpha_i\rangle 
                    - \langle \eM_i',\eM_t\tilde\alpha_i\rangle
                    - \langle \emm\tilde\alpha_i,\eM\tilde\alpha_i\rangle
                    + \langle \emm\tilde\alpha_i,\eM_t\tilde\alpha_i\rangle} W(M_i-M_t\tilde\alpha_i)\\
            &=\e^{\ima/2\int
                \langle \eM_{ti}',\eM_t\tilde\alpha_i\rangle 
                    - \langle \eM_i',\eM_t\tilde\alpha_i\rangle
                    - \langle \eM_{ti}',\eM_{ti}\rangle
                    + \langle \eM_i',\eM_{ti}\rangle} W(M_i-M_t\tilde\alpha_i)\\
          &=W(\eM_i-\eM_{ti})W( \eM_{ti}-\eM_t\tilde \alpha_i)      
      \\    &=\eta\left(W(\eM\alpha_i-\eM_t\alpha_i)  \right)\tau_t
            \left(  W( \eM_i-\eM\tilde\alpha_i)\right)
        \\  &=\eta\left(u_i(t)^\ast\right)\tau_t(z_i)\, ,
    \end{align*}
where $\eM_i$ as before, $\eM_t(x):=\eM(x-t)$ and $\eM_{it}(x):=\eM_i(x-t)$.
\end{proof}

\begin{rmk} In particular, 
    the theorem shows that $V\in \cE(\A_F)$ corresponding 
    to $\cS^{\Hol}_{L,R}$ extends to $\tilde V\in \cE(\A_L)$
    by Proposition \ref{prop:extV}.
    For the case of \BQFTnets we can choose $R=\id_L$
    because the obtained $\tilde V$ just differ by internal symmetries.
\end{rmk}
Putting this together we have proven.
\begin{prop}
    Let $L=L_1\oplus \cdots \oplus L_k$ be an even integral lattice with $k$ 
    components and  
    $\varphi \in \cS^{\Hol}_{L,1}$ corresponding to
    $(\varphi_1,\ldots,\varphi_k) \in \cS^{\Hol\times k}$,
    then there is a 
        local, time-translation covariant net on Minkowski half-plane 
    associated with the conformal net $\A_L$ and $\varphi$.
\end{prop}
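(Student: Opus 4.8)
The plan is to chain together the results established in this section, so that the statement becomes a corollary of Proposition~\ref{prop:extV} together with the boundary-net construction recalled in Section~\ref{sec:BQFTNet}. First I would observe that $\cS^{\Hol}_{L,1}\subset\cS_F$, so that Theorem~\ref{thm:semigroupAb} associates with $\varphi$ the second quantization unitary $V=\Gamma(V_0)\in\cE(\A_F)$, where $V_0=\varphi(P_0)$. Since $\A_L(0,\infty)=\A_F(0,\infty)\rtimes_\beta L$ with the automorphisms $\beta_i$ localized in $(0,\infty)$, the net $\A_L$ is a local extension of $\A_F$ of exactly the type treated in Proposition~\ref{prop:extV}, with $\A=\A_F$ and $\Aext=\A_L$.

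Next I would feed $\varphi$ into the preceding Lemma, specialized to $R=\id_L$ (so that $\tilde\alpha_i=\alpha_i$ and $\tilde\beta_i=\beta_i$). The Lemma supplies unitaries $z_i\in\A_F(0,\infty)$ satisfying the relations \ref{item:a1}--\ref{item:a3}; together with the explicit translation cocycles $u_i(t)=W((\eM_t-\eM)\alpha_i)$ these are precisely the two hypotheses of Proposition~\ref{prop:extV}, with \ref{item:a1} and \ref{item:a2} yielding the first and \ref{item:a3} the second. Applying Proposition~\ref{prop:extV} then produces an extension $\tilde V\in\cE(\A_L)$ of $V$. As noted in the Remark just above, restricting to $R=\id_L$ rather than a general automorphism of $L$ is harmless here, since different choices of $R$ change $\tilde V$ only by an internal symmetry and hence yield the same boundary net.

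Finally, with $\tilde V\in\cE(\A_L)$ secured, I would invoke the boundary-net construction (cf.\ \cite{LoWi2010}): the assignment
\begin{align*}
    \O = I_1\times I_2 \longmapsto (\A_L)_{\tilde V}(\O)
        := \A_L(I_1)\vee \tilde V\,\A_L(I_2)\,\tilde V^\ast,
    \qquad I_2>I_1,
\end{align*}
is automatically a local, time-translation covariant net on the Minkowski half-plane associated with $\A_L$ and $\varphi$, which is the assertion.

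I expect the genuine obstacle to lie not in this final assembly but already inside the preceding Lemma, which carries the entire analytic weight: the well-definedness of the $z_i$ depends on the finiteness of the $\Hil_{0,F}$-norm of $\eM_i-\eM\tilde\alpha_i$, and this is exactly the point at which the Hölder continuity of $\varphi$ enters, while the localization of the $z_i$ in $(0,\infty)$ rests on analyticity in the upper half-plane via the Paley--Wiener theorem. In the assembly itself the only thing requiring care is to match, term by term, the three relations furnished by the Lemma with the two hypotheses of Proposition~\ref{prop:extV}, and to keep track of the translation cocycles $u_i(t)$ that implement time-translation covariance on the extension.
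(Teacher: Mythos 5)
Your proposal is correct and follows exactly the route the paper takes: the paper states this proposition with the words ``Putting this together we have proven,'' meaning its proof is precisely the assembly you describe --- Theorem~\ref{thm:semigroupAb} to get $V=\Gamma(\varphi(P_0))\in\cE(\A_F)$, the preceding Lemma (with $R=\id_L$) to supply the cocycles $z_i$ and the compatibility with $u_i(t)$, Proposition~\ref{prop:extV} to extend to $\tilde V\in\cE(\A_L)$, and the Longo--Witten construction $\O\mapsto \A_L(I_1)\vee\tilde V\A_L(I_2)\tilde V^\ast$ to produce the boundary net. You also correctly locate the analytic content (Hölder continuity and Paley--Wiener) in the Lemma rather than in this final assembly.
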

\begin{cor}
    Let $G$ be a compact, simple, connected, simply connected, simply laced Lie 
    group and $\varphi \in \cS^{\Hol}$, then there is a 
        local, time-translation covariant net on Minkowski half-plane 
    associated with the conformal net $\A_{G,1}$ 
    (associated with the level 1 representation of $\Lp G$) 
    and $\varphi$.
    Further if $G$ is just semisimple, i.e. it is a product of $k$ simple 
    groups of 
    type A, D and E, then we obtain a such net 
    for every $(\varphi_1,\ldots,\varphi_k) \in \cS^{\Hol \times k}$. 
\end{cor}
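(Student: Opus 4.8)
The plan is to reduce the statement to the preceding Proposition on lattice nets, using the algebraic Frenkel--Kac--Segal identification of $\A_{G,1}$ with the net $\A_L$ attached to the root lattice $L$ of $G$. First I would treat the simple case. By that Proposition $\A_{G,1}$ is equivalent to $\A_L$, and since $G$ is simple and simply laced its root system is irreducible, so $L$ is an irreducible even lattice; in the notation $L=L_1\oplus\cdots\oplus L_k$ this is the case $k=1$. For $k=1$ Lemma \ref{lem:SF}, specialised to $R=\id_L$, canonically identifies $\cS^{\Hol}_{L,1}$ with $\cS^{\Hol}$. A given $\varphi\in\cS^{\Hol}$ therefore determines an element of $\cS^{\Hol}_{L,1}$, and the preceding Proposition produces a local, time-translation covariant net on $M_+$ associated with $\A_L$; transporting it along the equivalence $\A_L\cong\A_{G,1}$ gives the desired net associated with $\A_{G,1}$.

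For the semisimple case I would write $G=G_1\times\cdots\times G_k$ with each $G_j$ simple and simply laced, so that $\Lp G=\prod_j\Lp G_j$ and the level 1 vacuum representation factorises as a tensor product, whence $\A_{G,1}\cong\A_{G_1,1}\otimes\cdots\otimes\A_{G_k,1}$. Applying Frenkel--Kac--Segal to each factor gives $\A_{G_j,1}\cong\A_{L_j}$ with $L_j$ the irreducible root lattice of $G_j$, and iterating Remark \ref{rmk:irrlattices} yields $\A_{G,1}\cong\A_{L_1}\otimes\cdots\otimes\A_{L_k}\cong\A_L$ for $L=L_1\oplus\cdots\oplus L_k$, which now has exactly $k$ irreducible components. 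A tuple $(\varphi_1,\ldots,\varphi_k)\in\cS^{\Hol\times k}$ then corresponds through Lemma \ref{lem:SF} (again with $R=\id_L$) to an element of $\cS^{\Hol}_{L,1}$, and the preceding Proposition once more supplies the net associated with $\A_L\cong\A_{G,1}$.

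The only structural input beyond this bookkeeping is the irreducibility of the root lattice of a simple simply laced group, a standard fact about the ADE root systems, which is precisely what pins the number of components to $k=1$ in the simple case and to $k$ in the semisimple case. The one point I would verify explicitly is that the Frenkel--Kac--Segal equivalence $\A_L\cong\A_{G,1}$ is an isomorphism of nets on $\RR$ intertwining the translation group $T$, so that it carries $\cE(\A_L)$ bijectively onto $\cE(\A_{G,1})$ and hence transports the constructed semigroup element $\tilde V$ together with the associated net $\A_{\tilde V}$. As the equivalence is implemented by a unitary intertwining the two Möbius representations, and translations belong to $\Mob$, this compatibility is automatic, so I do not expect a genuine obstacle here.
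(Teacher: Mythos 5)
Your proposal is correct and is essentially the derivation the paper intends: the corollary is an immediate consequence of the preceding proposition on lattice nets via the algebraic Frenkel--Kac--Segal identification $\A_{G,1}\cong\A_L$, with the irreducibility of the ADE root lattices pinning down $k=1$ in the simple case and Remark \ref{rmk:irrlattices} handling the semisimple product case. Your extra check that the unitary implementing the equivalence intertwines the translation groups, and hence carries $\cE(\A_L)$ onto $\cE(\A_{G,1})$, is the right point to verify and is indeed automatic since the identification respects the Möbius representations.
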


\subsection{Further Examples Coming from the Orbifold Construction}
In this section we want to give further examples of \BQFTnets 
coming from the loop group net of  
$G=\Spin(2n)$ at level 2
using the orbifold construction.
\begin{defi}
    Let $\A$ be a conformal net on $\Hil$. Let $V:G\longrightarrow \U(\Hil)$ 
    be a faithful unitary representation of a finite group $G$ on $\Hil$. 
    It is said that  $G$ acts properly on the conformal net $\A$ if the 
    following conditions are satisfied:
    \begin{enumerate} 
        \item for each $I\in\cI$ and each $g\in G$, $\alpha_g(a):= V(g)a V(g)^\ast \in \A(I)$ for 
             all $a\in\A(I)$,
        \item for each $g\in G$ it is $V(g)\Omega = \Omega$.
    \end{enumerate}
\end{defi}
\begin{defi}
    Let $\A$ be a conformal net on $\Hil$ and let $V:G\longrightarrow \U(\Hil)$ 
    be a proper action on $\Hil$.
    Let $\Hil_0=\{x\in\Hil: V(g)x=x \text{ for all } g\in G \}$ and $P_0$ the 
    projection on $\Hil_0$. Then  $\cB(I)=\{a\in\A(I): \Ad V(g)a=a\}$ is 
    a conformal subnet and we denote by $\A^G(I) = \cB(I)P_0$ the conformal net 
    on $\Hil_0$, called the \emph{orbifold net}.
\end{defi}

We use following result from \cite{Xu2000-2} to obtain loop group net of 
$\Lp\Spin(m)$ at level 2.
By identifying $\RR^{2m}\ni (x,y)\longmapsto x+\ima y \in \CC^m$ where $x,y$ are
``column'' vectors with $m$ real entries we have the natural inclusion 
$\Lp \SU(m)_1 \times \Lp\U(1)_m \subset \Lp\Spin(2m)_1$ where $\U(1)$ acts on 
$\CC^m$ as scalars.
A further natural inclusion is given by 
$\Lp \Spin(m)_2 \subset \Lp \SU(m)_1 \subset \Lp\Spin(2m)_1$.
Let $K:=(I_m,-I_m) \in \SO(2m)$ which lifts to $\Spin(2m)$.
Then it is $KAK =\overline A$ for $A\in \SU(m)$ and $KAK=A$ for 
$A\in \Spin(2m)$.
$K$ defines a proper action of $\ZZ_2$ on $\A_{(\SU(n),1)}$.
\begin{prop}[Lemma 5.1 \cite{Xu2000-2}] 
    The loop group net $\A_{(\Spin(m),2)}$ of $\Spin(m)$ at level 2 is 
    isomorphic to the $\ZZ_2$ orbifold net $\A_{(\SU(m),1)}^{\ZZ_2}$ of the 
    level 1 loop group net  $\A_{(\SU(n),1)}$ associated with $\Lp \SU(n)$, 
    \ie 
    $\A_{(\Spin(m),2)}\cong \A_{(\SU(m),1)}^{\ZZ_2} \cong \A_{A_{m-1}}^{\ZZ_2}$.
\end{prop}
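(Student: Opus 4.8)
The plan is to exhibit $\A_{(\Spin(m),2)}$ as a conformal subnet of the orbifold net $\A_{(\SU(m),1)}^{\ZZ_2}$ and then to promote this inclusion to an equality by means of Lemma \ref{lem:subnet}, exactly as in the proof of the Frenkel--Kac--Segal identification earlier. Let $\pi$ be the level $1$ vacuum representation of $\Lp\SU(m)$ on $\Hil$ with vacuum $\Omega$, let $V(K)$ be the unitary implementing the charge conjugation $A\mapsto\overline A$ (so that $V(K)\Omega=\Omega$), and let $P_0$ be the projection onto the fixed subspace $\Hil_0=\{x:V(K)x=x\}$. The fixed-point net is $\cB(I)=\{a\in\A_{(\SU(m),1)}(I):\Ad V(K)a=a\}$ and $\A_{(\SU(m),1)}^{\ZZ_2}(I)=\cB(I)P_0$ acts on $\Hil_0$.

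First I would treat the easy inclusion. Since $KAK=A$ for $A\in\Spin(m)\subset\SU(m)$, every loop in $\Lp\Spin(m)$ is pointwise fixed by $\Ad K$, hence $\pi(\Lp_I\Spin(m))''\subset\cB(I)$ for each $I\in\cI$; moreover $\Omega$ lies in $\Hil_0$, so the closed subspace generated by these algebras from $\Omega$ is contained in $\Hil_0$. Compressing with $P_0$ therefore yields a conformal subnet $I\mapsto\A_{(\Spin(m),2)}(I)=\pi(\Lp_I\Spin(m))''P_0$ of $\A_{(\SU(m),1)}^{\ZZ_2}$ acting on $\Hil_0$, with the same vacuum and the same representation of $\Mob$. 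A consistency check that supports equality is that the two nets carry the same stress--energy tensor: both have central charge $c=m-1$ (for $\SU(m)_1$ this is $m-1$ and orbifolding leaves $c$ unchanged, while for $\Spin(m)_2$ one computes $c=2\binom{m}{2}/(2+(m-2))=m-1$), so the inclusion $\Spin(m)_2\subset\SU(m)_1$ is \emph{conformal} and the Sugawara Virasoro nets coincide with no coset remainder.

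The hard part will be verifying that the projection $e$ onto $\overline{\bigvee_{I}\A_{(\Spin(m),2)}(I)\Omega}$ equals $1$ on $\Hil_0$, i.e.\ that $\Omega$ is already cyclic for the $\Spin(m)_2$ net inside the orbifold space. This is the representation-theoretic content of the statement and I would settle it through the branching rule of the conformal embedding: as a $\Spin(m)_2$-module the $\SU(m)_1$ vacuum module decomposes as a direct sum of the $\Spin(m)_2$ vacuum module and the $\Spin(m)_2$ vector (defining) module, and these two summands are precisely the $+1$ and $-1$ eigenspaces of $V(K)$ (charge conjugation fixes the vacuum sector and reverses the defining sector). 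Consequently $\Hil_0$ is exactly the $\Spin(m)_2$ vacuum module, on which $\Omega$ is cyclic, giving $e=1$. Establishing this branching and matching it with the $K$-grading---most cleanly via a character computation comparing the $\SU(m)_1$ character, its $K$-twisted version, and the $\Spin(m)_2$ characters, or by appealing to the positive-energy representation theory of \cite{PS1986}---is the step I expect to require real work. Once it is in place, Lemma \ref{lem:subnet} forces $\A_{(\Spin(m),2)}=\A_{(\SU(m),1)}^{\ZZ_2}$, and identifying $\A_{(\SU(m),1)}$ with $\A_{A_{m-1}}$ through the Frenkel--Kac--Segal proposition already proved yields the final isomorphism $\A_{(\Spin(m),2)}\cong\A_{A_{m-1}}^{\ZZ_2}$.
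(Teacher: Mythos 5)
The paper itself offers no proof of this proposition: it is quoted verbatim as Lemma 5.1 of \cite{Xu2000-2}, with only the surrounding geometric setup ($\Lp\Spin(m)_2\subset\Lp\SU(m)_1\subset\Lp\Spin(2m)_1$ and the element $K=(I_m,-I_m)$) supplied. So there is nothing internal to compare against, and your proposal has to stand on its own. Its architecture is sound and is essentially the standard argument (and, as far as one can tell, Xu's): exhibit $\A_{(\Spin(m),2)}$ as a conformal subnet of the fixed-point net, then force equality via Lemma \ref{lem:subnet} by showing $\Omega$ is cyclic for the subnet on $\Hil_0$, which reduces to the branching rule of the conformal embedding $\mathfrak{so}(m)_2\subset\mathfrak{su}(m)_1$ together with the identification of the two summands with the $\pm1$ eigenspaces of $V(K)$. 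The central charge check and the Schur-type argument ($V(K)$ commutes with $\Lp\Spin(m)$, fixes $\Omega$, and is not the identity, hence acts as $+1$ on the vacuum summand and $-1$ on the complementary irreducible) are the right ingredients.

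There is, however, a concrete error in the one representation-theoretic fact you assert: the second summand of the branching is \emph{not} the vector (defining) module of $\Spin(m)_2$. That module has conformal weight $h=\tfrac{C_2(\Lambda_1)}{2(2+m-2)}=\tfrac{m-1}{2m}$, which is non-integral, so it cannot occur inside the $\SU(m)_1$ vacuum module, whose $L_0$-spectrum lies in $\NN_0$. The correct decomposition is $\Hil_{\SU(m)_1}^{\mathrm{vac}}=\Hil^{\mathrm{vac}}_{\Spin(m)_2}\oplus\Hil^{2\Lambda_1}_{\Spin(m)_2}$, where $2\Lambda_1$ is the symmetric traceless two-tensor module with $h=\tfrac{2m}{2m}=1$; this is visible already at grade one, where the adjoint of $\mathfrak{su}(m)$ splits under $\mathfrak{so}(m)$ into antisymmetric (adjoint) plus symmetric traceless parts. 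With this correction the rest of your argument closes as written. Two smaller points you should make explicit: (i) the unitary $V(K)$ with $V(K)\Omega=\Omega$ exists because $K$ is a constant loop in $\Spin(2m)$ acting in the level~1 representation of $\Lp\Spin(2m)$, restricted to the cyclic subspace of $\Lp\SU(m)$ --- $K$ itself does not lie in $\SU(m)$; and (ii) the Schur argument needs the complementary summand to be a \emph{single} irreducible (multiplicity one), which is part of what the character computation must deliver.
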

\begin{prop} Let $\varphi \in \cS^{\Hol}$, 
    then there is a 
        local, time translation covariant net on Minkowski half-plane
    associated with the loop group net $\A_{\Spin(m),2}$ of $\Spin(m)$ at 
    level 2. 
\end{prop}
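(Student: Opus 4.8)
The plan is to realize the desired net as the restriction of a boundary net over $\A_{(\SU(m),1)}$ to its $\ZZ_2$-orbifold subnet. By the preceding proposition we have $\A_{(\Spin(m),2)}\cong\A_{(\SU(m),1)}^{\ZZ_2}\cong\A_{A_{m-1}}^{\ZZ_2}$, the $\ZZ_2$-action being implemented by the unitary $\gamma:=V(K)$ induced by complex conjugation $A\mapsto\overline A$ on $\SU(m)$. Since the root lattice $L=A_{m-1}$ of the simple group $\SU(m)$ is irreducible (one component, $k=1$), Lemma~\ref{lem:SF} identifies $\cS^{\Hol}_{L,\id}$ with $\cS^{\Hol}$, so the given $\varphi\in\cS^{\Hol}$ determines, with $R=\id_L$, an element $V\in\cE(\A_L)=\cE(\A_{(\SU(m),1)})$ by the construction of Section~\ref{sec:BQFTLaNets}.

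The first step is to reduce everything to a single commutation relation. By Proposition~\ref{prop:restrictions}, applied with the big net $\A_{(\SU(m),1)}$ and the fixed-point subnet $\cB(I)=\{a\in\A_{(\SU(m),1)}(I):\Ad\gamma(a)=a\}$, the element $V$ restricts to an element of $\cE(\A_{(\SU(m),1)}^{\ZZ_2})$ — regarding the orbifold as a net on the fixed-point space $\Hil_0=P_0\Hil$ — exactly when $V$ commutes with the projection $e$ onto $\overline{\cB(I)\Omega}$. By the Reeh--Schlieder property for the subnet $\cB$, this projection is the projection onto the $\ZZ_2$-invariants, i.e. $e=P_0=\tfrac12(1+\gamma)$. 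Hence the whole problem reduces to showing $[V,\gamma]=0$.

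The main step, and the main obstacle, is this commutation. I would first note that $\gamma$ is an internal symmetry commuting with the translations $T(t)$, and that complex conjugation induces on the charges the lattice automorphism $-\id_L$ (which is an automorphism of every lattice) and on the one-particle space the negation $f\mapsto -f$. I would then verify $\gamma\,\Ad V\,\gamma^{*}=\Ad V$ on the two families of generators of $\A_L$. On the Weyl operators $W(f)$ one uses $\Ad V(W(f))=W(\varphi(P)f)$ together with the fact that $\varphi(P)$, being a scalar function of the translation generator $P$, commutes with the one-particle negation. On the charge-shift operators $U_{\alpha_i}$ one uses that the intertwiners $z_i=W(M_i-M\alpha_i)$ are built from the single scalar $\varphi$, so that $z_{-\alpha_i}=z_i^{*}$ and $\gamma(z_i^{*})=z_i$; conjugating $\Ad V(U_{\alpha_i})=z_iU_{\alpha_i}$ by $\gamma$ then reproduces $z_iU_{\alpha_i}$, the two Klein/cocycle phases coming from $\gamma(U_{\alpha_i})$ and $\gamma(U_{-\alpha_i})$ cancelling because $\gamma$ is a unitary involution. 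The delicate point is precisely this bookkeeping of the cocycle phases relating $U_{\alpha_i}$ and $U_{-\alpha_i}$ under $\Ad\gamma$, and the fact that the computation closes relies essentially on $\varphi$ being one and the same scalar in all lattice directions — which is exactly what $R=\id_L$ and the irreducibility of $A_{m-1}$ provide.

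Having established $\gamma\,\Ad V\,\gamma^{*}=\Ad V$ at the level of automorphisms, the unitaries $\gamma V\gamma^{*}$ and $V$ implement the same automorphism and both fix the cyclic and separating vector $\Omega$ (since $\gamma\Omega=V\Omega=\Omega$), hence $\gamma V\gamma^{*}=V$ and $[V,P_0]=0$. Then Proposition~\ref{prop:restrictions} gives $V\restriction_{\Hil_0}\in\cE(\A_{(\Spin(m),2)})$, and the associated \BQFTnet of the previous subsection, $\O=I_1\times I_2\longmapsto \A_{(\Spin(m),2)}(I_1)\vee V\A_{(\Spin(m),2)}(I_2)V^{*}$ (with $V$ understood as $V\restriction_{\Hil_0}$), is the required local, time-translation covariant net on the Minkowski half-plane.
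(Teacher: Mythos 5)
Your overall architecture coincides with the paper's: reduce, via Proposition \ref{prop:restrictions}, to showing that the extended unitary $\tilde V\in\cE(\A_{A_{m-1}})$ commutes with the unitary $\gamma$ implementing the $\ZZ_2$-action (equivalently, with the fixed-point projection $e=\tfrac12(1+\gamma)$), and your treatment of the Weyl-operator generators is correct. The gap sits exactly in the step you flag as ``delicate'' and then assert without computation: with the unmodified cocycle $z_i=W(M_i-M\alpha_i)$ the commutation of $\tilde\eta=\Ad\tilde V$ with $\tau=\Ad\gamma$ on the charge-shift operators \emph{fails}, and the surviving phase is not one of the Klein phases you mention. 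Writing $\tau(U_i)=c_iU_i^\ast$ and using $U_i^\ast x=\beta_i^{-1}(x)U_i^\ast$ one finds
\begin{align*}
(\tau\circ\tilde\eta\circ\tau)(U_i)
=\tau\bigl(c_i\,U_i^\ast z_i^\ast\bigr)
=\tau\bigl(c_i\,\beta_i^{-1}(z_i^\ast)\,U_i^\ast\bigr)
=\e^{\ima\int\langle M_i-M\alpha_i,\;m\alpha_i\rangle}\,z_iU_i .
\end{align*}
The phases $c_i$ and $\overline{c_i}$ from $\tau(U_i)$ and $\tau(U_i^\ast)$ do cancel, but pulling $z_i^\ast$ through $U_i^\ast$ produces $\beta_i^{-1}(z_i^\ast)=\e^{\ima\int\langle M_i-M\alpha_i,m\alpha_i\rangle}z_i^\ast$, and this factor --- which depends on $\varphi$ and is nontrivial whenever $\Im\varphi\neq0$ --- survives. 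Hence $\tau\tilde\eta\tau$ and $\tilde\eta$ differ by a nontrivial toral symmetry, $\gamma\tilde V\gamma^\ast\neq\tilde V$, $\tilde V$ does not commute with $e$, and the restriction on which your whole argument rests does not exist for this choice of cocycle.

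The paper's proof supplies precisely the missing ingredient: it replaces the cocycle by the phase-adjusted
$z_i:=\beta_i^{1/2}\bigl(W(M_i-M\alpha_i)\bigr)
=\e^{\frac{\ima}{2}\int\langle M_i-M\alpha_i,\,m\alpha_i\rangle}\,W(M_i-M\alpha_i)$,
which is still admissible in Lemma \ref{lem:ext} (a central phase disturbs none of the intertwining or cocycle identities, and changes $\tilde V$ only by an internal symmetry) but splits the offending phase symmetrically, so that
$\eta(\tau(U_i))=\beta_i^{-1/2}\bigl(W(M_i-M\alpha_i)^\ast\bigr)c_iU_i^\ast=\tau(z_iU_i)=\tau(\eta(U_i))$
holds exactly. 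With this modification your reduction closes and yields $\tilde V_1=\tilde V\restriction_{e\Hil}\in\cE(\A_{\Spin(m),2})$; without it, the claimed identity $[\tilde V,\gamma]=0$ is false.
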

\begin{proof}
    Let $L$ be the $A_{n-1}$ lattice, $F=L\otimes_\ZZ\RR$ the 
    associated Euclidean space 
    and $\eta=\Ad V$ the endomorphism $\A_F$ associated with the function 
    $\varphi(p)\cdot 1_{n-1}$. 
    We choose the special cocycle 
    $$
        z_i=\e^{\ima \int \langle \emm\alpha_i,\eM_i-\eM\alpha_i\rangle}
        W(\eM_i-\eM\alpha_i)
         =:\beta_i^{1/2}(W(\eM_i-\eM\alpha_i))
    $$ 
    similar like before which differs from the $z_i$ just by a phase and 
    denote $\tilde \eta =\Ad \tilde V$ the endomorphism of 
    $\A_L\equiv \A_{\SU(n),1}$ coming from the cocycle $z_i$.
    Let $\tau:W(f)\longmapsto W(-f)$, $U_\alpha \longmapsto c_\alpha U_{\alpha}^\ast$.
    This gives a proper action of $\ZZ_2$. 
    Finally $\eta$ and $\tau$ commute
    \begin{align*} 
        \eta(\tau(U_i))
        &= \eta(c_{\alpha_i} U_i^\ast)
        \\  &= \beta_i^{-1/2}(W(\eM_i-\eM\alpha_i)^\ast)c_{\alpha_i} U_i^\ast
        \\  &= \tau(\beta_i^{1/2}(W(\eM_i-\eM\alpha_i))U_i)
        \\  &= \tau(z_i U_i)
        \\  &= \tau(\eta(U_i))
    \end{align*}
    and $\tilde\eta$ restricts to an endomorphism 
    $\tilde\eta^\tau=\Ad \tilde V_1$ of 
    $\A_{\SU(n),1}^{\ZZ_2}=\A_{\Spin(n),2}$, because $\tilde\eta$ commutes
    with $\tau$ and therefore with the Jones projection on the fixpoint. 
    In particular, we have 
    constructed $\tilde V_1 \in \cE(\A_{\Spin(n),2})$.
\end{proof}
\section{Conclusions and Outlook}
By exploiting the explicit construction of a family of conformal nets 
containing loop group nets of simply 
laced groups at level 1, namely conformal nets
associated with lattices, we have obtained 
semigroup elements of the Longo--Witten semigroup $\cE(\A)$. These elements 
give rise to new models in BQFT, 
\ie local, time-translation covariant nets on Minkowski half-plane.

The level 1 loop group models can also be embedded in  
free Fermi nets, which could lead to different 
elements of the semigroup, coming from 
restrictions of second quantization unitaries.

It would be desirable to analyze the 
semigroup for loop group models
at higher level. These loop group nets are subnets of 
the tensor product of level 1 nets and one could 
ask if the here obtained endomorphism restricts to these subnets.
By applying the coset and orbifold construction one obtains
new nets and should also get new semigroup elements.
A simple example using the orbifold construction we have given in this paper.

Regarding the Longo--Witten semigroup $\cE(\A)$ in general, remarkable questions and applications
arise. An example is the 
mystery relation between  
elements of semigroup and integrable 
models with factorizing S-matrix \cite{ZaZa1979} 
on two dimensional
Minkowski space, constructed in the operator algebraic setting in \cite{Le2008}.
Both of them take inner symmetric (or scattering) functions as an input, but 
at the moment a deeper relation is not yet found.

Noteworthy applications of the Longo--Witten semigroup can be noticed in 
deformations of chiral conformal nets, where 
the endomorphisms $\Ad V$ associated with $V\in\cE(\A)$ bring 
deformations of chiral CQFT's on two dimensional Minkowski 
space. Particularly, in \cite{Ta2011}  the endomorphisms are used for a family of deformations of 
the \Uonenet and the Ising net which are both second quantization nets.
In this point the question that arises is if such deformations also exists
for the endomorphisms of the conformal nets 
associated with lattices (obtained in this work), or more generalllyy for any 
Longo--Witten endomorphism.

Another application could be the construction of massive models in higher dimensions
from conformal nets.
Here the idea is, basically, that the restriction of a massive free field net
to a light-ray gives a conformal net; then certain translations 
yield Longo--Witten endomorphisms.
In \cite{BiMeReWa2009} it is shown, in a field theoretic 
context, how to reconstruct the massive theory,
namely how one obtains back the scalar massive free field
from infinity copies of the $\U(1)$-current.
This idea translated back to the algebraic context 
uses one-parameter groups of the Longo--Witten semigroup. Unfortunately, 
Hölder continuity rules out the functions $\varphi_t(p)=\exp(-\ima t/p)$
that produce a one-parameter group with negative generator needed to construct
a 2D local net.
We hope to come back to this issue, and possibly new constructions of 
nets in higher dimensions from conformal nets.

\subsection*{Acknowledgements} 
I am grateful to Roberto Longo for his constant support
and many useful discussions. 
Further I would like to thank Yoh Tanimoto, Karl-Henning Rehren,
Sebastiano Carpi, 
André Henriques and
John E. Roberts
for useful discussions and hints.


\def\cprime{$'$}
\begin{bibdiv}
\begin{biblist}

\bib{BGL2002}{article}{
      author={Brunetti, Romeo},
      author={Guido, Daniele},
      author={Longo, Roberto},
       title={{Modular Localization and Wigner Particles}},
        date={2002},
     journal={Rev. Math. Phys.},
      volume={14},
       pages={759–785},
      eprint={arXiv:math-ph/0203021v2},
}

\bib{BrGuLo1993}{article}{
      author={Brunetti, Romeo},
      author={Guido, Daniele},
      author={Longo, Roberto},
       title={Modular structure and duality in conformal quantum field theory},
        date={1993},
        ISSN={0010-3616},
     journal={Commun. Math. Phys.},
      volume={156},
       pages={201–219},
      eprint={funct-an/9302008v1},
         url={http://dx.doi.org/10.1007/BF02096738},
}

\bib{BiMeReWa2009}{article}{
      author={Bischoff, Marcel},
      author={Meise, D.},
      author={Rehren, Karl-Henning},
      author={Wagner, Ingo},
       title={{Conformal quantum field theory in various dimensions.}},
    language={English},
        date={2009},
     journal={Bulg. J. Phys.},
      volume={36},
      number={3},
       pages={170–185},
      eprint={0908.3391v1},
}

\bib{BuMaTo1988}{article}{
      author={Buchholz, Detlev},
      author={Mack, G.},
      author={Todorov, Ivan},
       title={{The current algebra on the circle as a germ of local field
  theories}},
        date={1988},
     journal={Nucl. Phys., B, Proc. Suppl.},
      volume={5},
      number={2},
       pages={20–56},
}

\bib{Co1973}{article}{
      author={Connes, Alain},
       title={Une classification des facteurs de type {${\rm} III$}},
        date={1973},
     journal={Ann. Sci. École Norm. Sup.(4)},
      volume={6},
       pages={133–252},
}

\bib{CoSl1998}{book}{
      author={Conway, John~Horton},
      author={Sloane, N.J.A.},
       title={Sphere packings, lattices, and groups},
     edition={3rd ed.},
      series={Grundlagen der mathematischen Wissenschaften},
   publisher={Springer-Verlag (New York)},
        date={1998},
      volume={290},
}

\bib{DoHaRo1969II}{article}{
      author={Doplicher, Sergio},
      author={Haag, Rudolf},
      author={Roberts, John~E.},
       title={{Fields, observables and gauge transformations II}},
        date={1969},
        ISSN={0010-3616},
     journal={Commun. Math. Phys.},
      volume={15},
       pages={173–200},
         url={http://dx.doi.org/10.1007/BF01645674},
}

\bib{DoXu2006}{article}{
      author={Dong, Chongying},
      author={Xu, Feng},
       title={Conformal nets associated with lattices and their orbifolds},
        date={2006},
        ISSN={0001-8708},
     journal={Adv. Math.},
      volume={206},
      number={1},
       pages={279–306},
      eprint={math/0411499v2},
         url={http://dx.doi.org/10.1016/j.aim.2005.08.009},
}

\bib{FrJr1996}{article}{
      author={Fredenhagen, K.},
      author={Jörß, Martin},
       title={{Conformal Haag-Kastler nets, pointlike localized fields and the
  existence of operator product expansions}},
        date={1996},
     journal={Commun. Math. Phys.},
      volume={176},
      number={3},
       pages={541–554},
}

\bib{GaFr1993}{article}{
      author={Gabbiani, Fabrizio},
      author={Fröhlich, Jürg},
       title={{Operator algebras and conformal field theory}},
        date={1993},
        ISSN={0010-3616},
     journal={Commun. Math. Phys.},
      volume={155},
      number={3},
       pages={569–640},
}

\bib{Gu2011}{incollection}{
      author={Guido, Daniele},
       title={{Modular Theory for the Von Neumann Algebras of Local Quantum
  Physics}},
        date={2011},
   booktitle={Contemporary mathematics},
      series={Contemporary Mathematics},
      volume={534},
   publisher={AMS Bookstore},
       pages={97–120},
         url={http://arxiv.org/abs/0812.1511v1},
}

\bib{Ha}{book}{
      author={Haag, Rudolf},
       title={{Local quantum physics}},
   publisher={Springer Berlin},
        date={1996},
}

\bib{Iz2000}{article}{
      author={Izumi, Masaki},
       title={{The Structure of Sectors Associated with Longo–Rehren
  Inclusions\\I. General Theory}},
        date={2000},
        ISSN={0010-3616},
     journal={Commun. Math. Phys.},
      volume={213},
       pages={127–179},
         url={http://dx.doi.org/10.1007/s002200000234},
}

\bib{Ka1998}{book}{
      author={Kac, V.G.},
       title={{Vertex algebras for beginners}},
   publisher={Amer Mathematical Society},
        date={1998},
        ISBN={082181396X},
}

\bib{Ka2001}{article}{
      author={Kawahigashi, Y.},
       title={{Braiding and extensions of endomorphisms of subfactors}},
        date={2001},
     journal={Mathematical physics in mathematics and physics: quantum and
  operator algebraic aspects},
      volume={30},
       pages={261},
}

\bib{KaLo2006}{article}{
      author={Kawahigashi, Y.},
      author={Longo, Roberto},
       title={{Local conformal nets arising from framed vertex operator
  algebras}},
        date={2006},
        ISSN={0001-8708},
     journal={Adv. Math.},
      volume={206},
      number={2},
       pages={729–751},
      eprint={math/0411499v2},
}

\bib{KaLoMg2001}{article}{
      author={Kawahigashi, Y.},
      author={Longo, Roberto},
      author={Müger, Michael},
       title={{Multi-Interval Subfactors and Modularityof Representations in
  Conformal Field Theory}},
        date={2001},
     journal={Commun. Math. Phys.},
      volume={219},
       pages={631–669},
      eprint={arXiv:math/9903104},
}

\bib{Ko1998}{book}{
      author={Kosaki, Hideki},
       title={Type {III} factors and index theory},
      series={Lecture Notes Series},
   publisher={Seoul National University Research Institute of Mathematics
  Global Analysis Research Center},
     address={Seoul},
        date={1998},
      volume={43},
}

\bib{Le2008}{article}{
      author={Lechner, Gandalf},
       title={Construction of quantum field theories with factorizing
  s-matrices},
        date={2008},
        ISSN={0010-3616},
     journal={Commun. Math. Phys.},
      volume={277},
       pages={821–860},
      eprint={arXiv:math-ph/0601022v3},
         url={http://dx.doi.org/10.1007/s00220-007-0381-5},
}

\bib{Lo2003}{article}{
      author={Longo, Roberto},
       title={{Conformal Subnets and Intermediate Subfactors}},
        date={2003},
        ISSN={0010-3616},
     journal={Commun. Math. Phys.},
      volume={237},
       pages={7–30},
      eprint={arXiv:math/0102196v2 [math.OA]},
         url={http://dx.doi.org/10.1007/s00220-003-0814-8},
}

\bib{Lo2}{article}{
      author={Longo, Roberto},
       title={Lecture notes on conformal nets},
        date={2008},
      eprint={http://www.mat.uniroma2.it/longo/Lecture_Notes.html},
         url={{http://www.mat.uniroma2.it/longo/Lecture_Notes.html}},
        note={first part published as},
}

\bib{Lo}{incollection}{
      author={Longo, Roberto},
      title={Real {H}ilbert subspaces, modular theory, {${\rm SL}(2,\mathbb{R})$}
  and {CFT}},
        date={2008},
   booktitle={Von {N}eumann algebras in {S}ibiu},
      series={Theta Ser. Adv. Math.},
      volume={10},
   publisher={Theta, Bucharest},
       pages={33–91},
}

\bib{LoRe2004}{article}{
      author={Longo, Roberto},
      author={Rehren, Karl-Henning},
       title={{Local Fields in Boundary Conformal QFT}},
        date={2004},
     journal={Rev. Math. Phys.},
      volume={16},
       pages={909–960},
      eprint={arXiv:math-ph/0405067},
}

\bib{LoRe2011}{article}{
      author={Longo, Roberto},
      author={Rehren, Karl-Henning},
       title={{Boundary Quantum Field Theory on the Interior of the Lorentz
  Hyperboloid}},
        date={2011-03},
     journal={ArXiv e-prints},
      eprint={arXiv:1103.1141 [math-ph]},
}

\bib{LoRe1995}{article}{
      author={Longo, Roberto},
      author={Rehren, Karl-Henning},
       title={{Nets of Subfactors}},
        date={1995},
     journal={Rev. Math. Phys.},
      volume={7},
       pages={567–597},
      eprint={arXiv:hep-th/9411077},
}

\bib{LoWi2010}{article}{
      author={Longo, Roberto},
      author={Witten, Edward},
       title={{An Algebraic Construction of Boundary Quantum Field Theory}},
        date={2011},
        ISSN={0010-3616},
     journal={Commun. Math. Phys.},
      volume={303},
       pages={213–232},
      eprint={arXiv:1004.0616v1 [math-ph]},
         url={http://arxiv.org/abs/1004.0616},
}

\bib{PS1986}{book}{
      author={Pressley, A.},
      author={Segal, G.},
       title={{Loop groups}},
   publisher={Clarendon press Oxford},
        date={1986},
        ISBN={019853535X},
}

\bib{Se1981}{article}{
      author={Segal, G.},
       title={{Unitary representations of some infinite dimensional groups}},
        date={1981},
        ISSN={0010-3616},
     journal={Commun. Math. Phys.},
      volume={80},
      number={3},
       pages={301–342},
}

\bib{St1995}{thesis}{
      author={Staszkiewicz, C.P.},
       title={{Die lokale Struktur abelscher Stromalgebren auf dem Kreis}},
        type={Ph.D. Thesis},
     address={Freie Universität Berlin},
        date={1995},
}

\bib{Ta2}{book}{
      author={Takesaki, Masamichi},
       title={{Theory of Operator Algebras II}},
      series={Encyclopaedia of Mathematical Sciences},
   publisher={Springer-Verlag},
     address={Berlin},
        date={2003},
      volume={125},
        ISBN={3-540-42914-X},
        note={Operator Algebras and Non-commutative Geometry VI},
}

\bib{Ta2011}{article}{
      author={Tanimoto, Yoh},
       title={{Construction of wedge-local nets of observables through
  Longo-Witten endomorphisms}},
        date={2011-07},
     journal={Commun. in Math. Phys. DOI:10.1007/s00220-012-1462-7 },
      eprint={arXiv:1107.2629v1 [math-ph]},
}

\bib{Xu2000-2}{article}{
      author={Xu, Feng},
       title={{Algebraic orbifold conformal field theories}},
        date={2000},
     journal={Proc. Nat. Acad. Sci. U.S.A.},
      volume={97},
      number={26},
       pages={14069},
      eprint={arXiv:math/0004150v1 [math.QA]},
}

\bib{Xu2009}{article}{
      author={Xu, Feng},
       title={On affine orbifold nets associated with outer automorphisms},
        date={2009},
        ISSN={0010-3616},
     journal={Commun. Math. Phys.},
      volume={291},
       pages={845–861},
      eprint={arXiv:1002.2710v1 [math.OA]},
         url={http://dx.doi.org/10.1007/s00220-009-0763-y},
}

\bib{ZaZa1979}{article}{
      author={Zamolodchikov, A.B.},
      author={Zamolodchikov, A.B.},
       title={{Factorized S-Matrices in Two Dimensions as the Exact Solutions
  of Certain Relativistic Quantum Field Theory Models}},
        date={1979},
     journal={Ann. Phys.},
      volume={120},
      number={2},
       pages={253–291},
}

\end{biblist}
\end{bibdiv}

\end{document}